\documentclass[envcountsect,envcountsame,oribibl]{llncs}
\usepackage{amssymb}
\usepackage[]{graphicx}
\usepackage[]{amsmath}
\usepackage{gn-logic14}
\usepackage{comment}
\usepackage[asymmetric,left=1in,top=1in,right=1in,bottom=0.5in]{geometry}

\title{Does Treewidth Help in Modal Satisfiability? (Extended Abstract)}
\author{M. Praveen}
\institute{The Institute of Mathematical Sciences, Chennai, India}

\begin{document}
\pagestyle{plain}
\newcommand{\macNat}{\mathbb{N}}

\newcommand{\Oh}{\mathcal{O}}

\newcommand{\mactw}{\text{tw}}
\newcommand{\macpw}{p{\scriptstyle w}}
\newcommand{\macheight}{h}
\newcommand{\mactwo}{t{\scriptstyle w}}

\newcommand{\maccc}[1]{\textsc{#1}}
\newcommand{\macpspace}{\maccc{Pspace}}
\newcommand{\macfpt}{\maccc{Fpt}}
\newcommand{\macwone}{\maccc{W[1]}}
\newcommand{\macwtwo}{\maccc{W[2]}}
\newcommand{\macxp}{\maccc{Xp}}
\newcommand{\macnp}{\maccc{Np}}
\newcommand{\macp}{\maccc{Ptime}}
\newcommand{\macyes}{\maccc{Yes}}
\newcommand{\macoh}{\mathcal{O}}

\newcommand{\macppwcnfsat}{p-\textsc{Pw-Sat} }
\newcommand{\macnlcp}{\textsc{Nlcp}}
\newcommand{\maccsp}{\textsc{Csp}}

\newcommand{\macpopterm}[1]{\textit{#1}}

\newcommand{\macpv}{\Phi}
\newcommand{\macpvo}{q}
\newcommand{\macpvt}{r}
\newcommand{\macpvh}{s}
\newcommand{\macpvf}{t}
\newcommand{\macpvidx}{i}

\newcommand{\macmfo}{\phi}
\newcommand{\macmft}{\psi}
\newcommand{\macmd}{\text{md}}
\newcommand{\macpcnff}{\mathcal{F}}
\newcommand{\macpart}{part}
\newcommand{\mactget}{tg}
\newcommand{\macpartidx}{p}
\newcommand{\macnumpv}{n}
\newcommand{\macmodelo}{\mathcal{M}}
\newcommand{\macworlds}{W}
\newcommand{\macworldo}{w}
\newcommand{\macworldt}{v}
\newcommand{\macfro}{\mathcal{A}}
\newcommand{\macrelo}{\mapsto}
\newcommand{\macvalo}{V{\scriptstyle l}}

\newcommand{\macprino}{I}
\newcommand{\macpro}{\Pi}
\newcommand{\macparamo}{\kappa}
\newcommand{\macalpho}{\Sigma}
\newcommand{\macparamdepo}{f}
\newcommand{\macparamdept}{g}
\newcommand{\macredmap}{A}
\newcommand{\macnuminst}{m}
\newcommand{\macinstidx}{i}

\newcommand{\macrelvoc}{\tau}
\newcommand{\macrelt}{R}
\newcommand{\macdom}{D}
\newcommand{\macdomel}{e}
\newcommand{\macarity}{\mathrm{\mathit{arity}}}
\newcommand{\macstruc}{\mathcal{S}}
\newcommand{\macedge}{O{\scriptstyle c}}
\newcommand{\macedgec}{\overline{\macedge}}
\newcommand{\mactree}{\mathcal{T}}
\newcommand{\mactn}{t}
\newcommand{\macbag}{B}
\newcommand{\maccls}{C{\scriptstyle l}}
\newcommand{\macclsc}{L{\scriptstyle t}}
\newcommand{\maclev}{L{\scriptstyle v}}
\newcommand{\macbox}{B{\scriptscriptstyle \Box}}
\newcommand{\macdmd}{D\diamond}
\newcommand{\macuc}{U}

\newcommand{\macmsofo}{\chi}
\newcommand{\macmsoft}{\xi}
\newcommand{\macmsofh}{\zeta}
\newcommand{\macmsoff}{\chi}
\newcommand{\macfovo}{x}
\newcommand{\macfovt}{y}
\newcommand{\macfovh}{z}
\newcommand{\macsovo}{X}
\newcommand{\macsovt}{Y}

\newcommand{\maclit}{\rm{\mathit{literal}}}
\newcommand{\maccl}{\rm{\mathit{clause}}}
\newcommand{\maccnf}{\rm{\mathit{CNF}}}

\newcommand{\macmdidx}{i}
\newcommand{\macnumcl}{m}

\newcommand{\Defref}[1]{Definition~\ref{#1}}
\newcommand{\defref}[1]{Def.~\ref{#1}}
\newcommand{\Thmref}[1]{Theorem~\ref{#1}}
\newcommand{\thmref}[1]{Theorem~\ref{#1}}
\newcommand{\Lemref}[1]{Lemma~\ref{#1}}
\newcommand{\lemref}[1]{Lemma~\ref{#1}}
\newcommand{\Propref}[1]{Proposition~\ref{#1}}
\newcommand{\propref}[1]{Prop.~\ref{#1}}
\newcommand{\Claimref}[1]{Claim~\ref{#1}}
\newcommand{\claimref}[1]{Claim~\ref{#1}}
\newcommand{\algoref}[1]{Algorithm~\ref{#1}}
\newcommand{\Algoref}[1]{Algorithm~\ref{#1}}
\newcommand{\Figref}[1]{Figure~\ref{#1}}
\newcommand{\figref}[1]{Fig.~\ref{#1}}
\newcommand{\Tabref}[1]{Table~\ref{#1}}
\newcommand{\tabref}[1]{Table~\ref{#1}}
\newcommand{\Secref}[1]{Section~\ref{#1}}
\newcommand{\secref}[1]{section~\ref{#1}}
\newcommand{\Chref}[1]{Chapter~\ref{#1}}
\newcommand{\chref}[1]{Chap.~\ref{#1}}
\newcommand{\apndref}[1]{Appendix~\ref{#1}}
\newcommand{\Apndref}[1]{App.~\ref{#1}}

\newcommand{\macclsu}{C{\scriptscriptstyle \ell}}
\newcommand{\mactlitsu}{T{\scriptstyle r}}
\newcommand{\macclel}{c{\scriptscriptstyle \ell}}
\newcommand{\maclitel}{l{\scriptstyle t}}
\newcommand{\maclevsat}[1]{\macmsoft[#1]}
\newcommand{\macmdsat}[1]{\macmsofh[#1]}
\newcommand{\macreach}{\mathcal{R}}
\newcommand{\macmdglsat}{\macmsoff}
\newcommand{\macdef}{\overset{\bigtriangleup}{=}}
\newcommand{\maccomits}{C{\scriptstyle m}}
\newcommand{\macdemands}{D{\scriptstyle m}}
\newcommand{\macpvsul}{PV}
\newcommand{\macbfsul}{BL}
\newcommand{\macdfsul}{DL}
\newcommand{\macpvs}{P{\scriptstyle v}}
\newcommand{\machle}{H{\scriptscriptstyle l}}
\newcommand{\macgllitsu}{G{\scriptscriptstyle \ell t}}
\newcommand{\macglclsu}{GC{\scriptscriptstyle \ell}}
\newcommand{\macclidx}{i}
\newcommand{\maclitidx}{i}

\newcommand{\macico}{c}

\newcommand{\macgraph}{G}
\newcommand{\macvertexo}{v}
\newcommand{\macvertext}{u}
\newcommand{\macuedge}{e}
\newcommand{\macvertexs}{V}
\newcommand{\macedges}{E}
\newcommand{\maccols}{S}
\newcommand{\maccolor}{\ell}
\newcommand{\macatleast}{atLeast}
\newcommand{\macatmost}{atMost}

\newcommand{\macpartindo}{t{\scriptstyle \uparrow}}
\newcommand{\macpartindt}{f{\scriptstyle \uparrow}}
\newcommand{\macpartctro}{t{\scriptstyle r}}
\newcommand{\macpartctrt}{f{\scriptstyle l}}
\newcommand{\macdpthctro}{d}
\newcommand{\macproper}{proper}
\newcommand{\macdetermined}{determined}
\newcommand{\maccountinit}{countInit}
\newcommand{\macdepth}{depth}
\newcommand{\maccountmonotone}{countMonotone}
\newcommand{\macsetcounter}{setCounter}
\newcommand{\macinccounter}{incCounter}
\newcommand{\mactargetmet}{targetMet}
\newcommand{\macnumpart}{k}
\newcommand{\mactcidx}{j}
\newcommand{\macsatasgn}{f}

\newcommand{\macstar}{[*]}

\maketitle

\includecomment{details}
\excludecomment{wellknown}
\excludecomment{conf}
\includecomment{minorext}

\begin{abstract}
  Many tractable algorithms for solving the Constraint Satisfaction
  Problem (\maccsp) have been developed using the notion of the
  treewidth of some graph derived from the input \maccsp\/ instance.
  In particular, the \macpopterm{incidence graph} of the
  \maccsp{} instance is one such graph. We introduce the notion of an
  incidence graph for modal logic formulae in a certain normal form.
  We investigate the parameterized complexity of modal satisfiability
  with the modal depth of the formula and the treewidth of the
  incidence graph as parameters.  For various combinations of
  Euclidean, reflexive, symmetric and transitive models, we show
  either that modal satisfiability is \macfpt, or that it is
  \macwone-hard.  In particular, modal satisfiability in general
  models is \macfpt{}, while it is \macwone-hard in transitive models.
  As might be expected, modal satisfiability in transitive and
  Euclidean models is \macfpt.
\end{abstract}

\section{Introduction}
Treewidth as a parameter has been very successful in obtaining
Fixed Parameter Tractable (\macfpt{}) algorithms for many classically
intractable problems. One such class of problems is constraint
satisfaction and closely related problems like satisfiability in
propositional logic and the homomorphism problem \cite{DKV02,SS10}.
There have been recent extensions to quantified constraint
satisfaction \cite{C04,PV06}. In such problems, treewidth is used as a
measure of modularity inherent in the given problem instance and
algorithms make use of the modularity to increase their efficiency.
Understanding the extent to which treewidth can be stretched in such
problems is an active area of research \cite{M07,G06}. This work
explores the applicability of such techniques to modal satisfiability.

Apart from having many applications (reasoning about knowledge
\cite{FHMV95}, programming \cite{P80} and hardware verification
\cite{RS83} etc.), modal logics have nice computational properties
\cite{Vardi96,Gradel01}. Many tools have been built for
checking satisfiability of modal formulae \cite{HS99,PV2003}, despite
being intractable in the classical sense (\macpspace-complete or
\macnp-complete in most cases). Complexity of modal logic decision problems
is well studied \cite{L77,HM92,H95}. Another motivation for this work
is to strengthen the complexity classification of modal logics through
the refined analysis offered by parameterized complexity.

\emph{Our results:} It is known that any modal logic formula can be
effectively converted into a Conjunctive Normal Form (CNF)
\cite{EC89,HM08}. Given a modal logic formula in CNF, we associate a
graph with it. Restricted to propositional CNF formulae (which are
modal formulae with modal depth $0$), this graph is precisely the
\macpopterm{incidence graph} associated with propositional CNF
formulae (see \cite{SS10} for details). We prove that
\begin{enumerate}
  \item with the treewidth of the graph and the modal depth of the
    formula as parameters, satisfiability in general models is
    \macfpt,
  \item with treewidth and modal depth as parameters, satisfiability
    in transitive models is \macwone-hard and
  \item with treewidth as the parameter, satisfiability in models that
    are Euclidean\footnote{A binary relation $\macrelo$ is Euclidean
    if $\forall \macfovo,\macfovt,\macfovh$, $\macfovo\macrelo
    \macfovt$ and $\macfovo\macrelo \macfovh$ implies
    $\macfovt\macrelo \macfovh$.} and any combination of reflexive,
    symmetric and transitive is \macfpt\/.
\end{enumerate}
Since modal formulae of modal depth $0$ contain all propositional
formulae, bounding modal depth alone will not give \macfpt\/ results
(unless \macp=\macnp).  The main idea behind our \macfpt\/ results is
to express satisfiability of a modal formula in Monadic Second Order
(MSO) logic over the formula's associated graph and then apply
Courcelle's theorem \cite{C92}. Modal formulae with low treewidth are
quite powerful, capable of encoding complex problems (see the
conclusion for relevant pointers). On the other hand, modal formulae
with low treewidth contain propositional CNF formulae of low
treewidth, which arise naturally in many practical applications. See
\cite[Section 1.4]{FMR08} and references therein for some context on
this.

\emph{Related work:} In \cite{H95}, Halpern considers the effect of
bounding different parameters (such as the number of propositional
variables, modal depth etc., but not treewidth) on complexity. In
\cite{LAN05}, Nguyen shows that satisfiability of many modal logics
reduce to \macp\/ under the restriction of Horn fragment and bounded
modal depth. In \cite{ALM09}, Achilleos et.~al.~consider parameterized
complexity of modal satisfiability in general models with the number of
propositional variables and other structural aspects (but not
treewidth) as parameters. In \cite{AW09}, Adler et.~al.~associate
treewidth with First Order (FO) formulae and use it to obtain a \macfpt\/
algorithm for model checking.

The Complexity of satisfiability of modal logics follow a pattern. In
\cite{HR07}, Halpern et.~al.~prove that with the addition of Euclidean
property, complexity of (infinitely) many modal logics drop from
\macpspace-hard to \macnp-complete. \cite{HS08} is another work in
this direction. Similar pattern is observed in graded modal logics
\cite{KP09}. With treewidth and modal depth as parameters, our results
indicate similar behaviour in the world of parameterized complexity
--- satisfiability in transitive models is \macwone-hard, while
satisfiability in Euclidean and transitive models is \macfpt\/, even
with treewidth as the only parameter.  However, more work is needed in
this direction. First, the results in \cite{HR07,HS08} hold for
infinitely many cases while we consider only a few fixed cases.
Second, satisfiability in general models is \macpspace-complete and
drops to \macnp-complete with the addition of Euclidean property. In
our setting, satisfiability in general models is already \macfpt\/
(but see conclusion for a discussion about why satisfiability in
general models is not \macfpt\/ unless \macp=\macnp, when treewidth is
the only parameter).

\section{Preliminaries}
Let $\macNat$ denote the set of natural numbers. For $\macnumpart\in
\macNat$, we denote the set $\{1,\dots,\macnumpart\}$ by
$[\macnumpart]$.  We use standard notation about parameterized
complexity like \macfpt\/ algorithms, \macfpt\/ reductions and
\macwone-hardness from \cite{FG06}. We will also use notation and
definitions of relational structures and their tree decompositions
from \cite{FG06}: a \macpopterm{relational vocabulary} $\macrelvoc$ is
a set of relation symbols. Each relation symbol $\macrelt$ has an
arity $\macarity(\macrelt)\ge 1$. A $\macrelvoc$-structure $\macstruc$
consists of a set $\macdom$ called the \macpopterm{domain} and an
interpretation $\macrelt^{\macstruc}\subseteq
\macdom^{\macarity(\macrelt)}$ of each relation symbol $\macrelt\in
\macrelvoc$. A \macpopterm{graph} is an $\{\macedges\}$-structure,
where $\macedges$ is a binary edge relation.  A \macpopterm{tree} is a
graph without cycles. A \macpopterm{path decomposition} is a tree
decomposition \cite[Definition 11.23]{FG06} whose underlying tree is a
path.  The pathwidth of a structure is the minimum of the widths of
all path decompositions.  It is known that computing optimal tree and
path decompositions of a relational structure is \macfpt\/ when
parameterized by treewidth; cf. \cite[Corollary 11.28]{FG06} and
\cite{BK96}.

Courcelle's theorem (\cite[Theorem 11.37]{FG06}) states that given a
relational structure and a MSO sentence, checking whether the MSO
sentence is true in the structure is \macfpt\/ when parameterized by
the treewidth of the structure and the length of the sentence.
\begin{wellknown}
  \subsection{Parameterized complexity}
  We assume that problem instances are given as strings in some alphabet
  $\macalpho$. A parameterized problem is of the
  form $(\macpro,\macparamo)$, where $\macpro\subseteq
  \macalpho^{*}$ is a set of \macyes\/ instances of the problem and
  $\macparamo:\macalpho^{*}\to \macNat$ is a parameter. A
  parameterized problem is said to be \macpopterm{Fixed Parameter
  Tractable} (\macfpt) if there is an algorithm that takes any instance
  $\macprino\in \macalpho^{*}$ of the problem $\macpro$ and decides
  within time
  $\macparamdepo(\macparamo(\macprino))|\macprino|^{\Oh(1)}$ whether
  $\macprino\in \macpro$. Here, $\macparamdepo$ is some computable
  function and $|\macprino|$ is the length of $\macprino$. The problem
  of computing $\macparamo(\macprino)$ must itself be in polynomial time
  or \macfpt\/ with $\macparamo$ itself as the parameter.

  If $(\macpro,\macparamo)$ and $(\macpro',\macparamo')$ are
  parameterized problems over alphabets $\macalpho$ and $\macalpho'$
  respectively, $(\macpro,\macparamo)$ is \macfpt\/ reducible to
  $(\macpro',\macparamo')$ if there is a mapping
  $\macredmap:\macalpho^{*}\to (\macalpho')^{*}$ such that:
  \begin{enumerate}
    \item For all $\macprino\in \macalpho^{*}$, $\macprino\in \macpro$
      iff $\macredmap(\macprino)\in \macpro'$.
    \item There is an algorithm that terminates in time at most
      $\macparamdepo(\macparamo(\macprino))|\macprino|^{\Oh(1)}$ and
      computes $\macredmap(\macprino)$, where $\macparamdepo$ is some
      computable function.
    \item There is a computable function $\macparamdept:\macNat\to\macNat$
      such that $\macparamo'(\macredmap(\macprino))\le
      \macparamdept(\macparamo(\macprino))$.
  \end{enumerate}

  Similar to hierarchy of intractable classes in classical complexity
  theory, there is a hierarchy of intractable classes in parameterized
  complexity also. A few of these classes are $\macfpt\subseteq
  \macwone\subseteq \macwtwo \subseteq \dots\subseteq \macxp$. While
  inclusion between \macfpt\/ and \macxp\/ is known to be strict, strictness
  of other inclusions are not known. It is widely believed that all
  inclusions are strict. A parameterized problem $(\macpro,\macparamo)$
  is said to be \macwone-hard if all parameterized problems in
  \macwone\/
  are \macfpt\/ reducible to $(\macpro,\macparamo)$. Hence, \macwone-hard
  problems are unlikely to be \macfpt. Suppose a
  parameterized problem $(\macpro,\macparamo)$ is \macfpt\/ reducible to
  $(\macpro',\macparamo')$. If $(\macpro',\macparamo')$ is \macfpt, then
  $(\macpro,\macparamo)$ is also \macfpt. On the other hand, if
  $(\macpro,\macparamo)$ is \macwone-hard, then $(\macpro',\macparamo')$
  is also \macwone-hard.

  \subsection{Relational structures, treewidth and MSO logic}
  We will follow the notation of \cite{FG06}. A \macpopterm{relational
  vocabulary} $\macrelvoc$ is a set of relation symbols. Each relation
  symbol $\macrelt$ has an \macpopterm{arity} $\macarity(\macrelt)\ge
  1$. A $\macrelvoc$-structure $\macstruc$ consists of a set $\macdom$
  called the \macpopterm{domain} and an interpretation
  $\macrelt^{\macstruc}\subseteq \macdom^{\macarity(\macrelt)}$ of each
  relation symbol $\macrelt\in \macrelvoc$. A \macpopterm{graph} is an
  $\{\macedges\}$-structure, where $\macedges$ is a binary edge relation.
  A \macpopterm{tree} is a graph without cycles.

  A \macpopterm{tree decomposition} of a $\macrelvoc$-structure
  $\macstruc$ is a pair $(\mactree,(\macbag_{\mactn})_{\mactn\in
  \mactree})$, where $\mactree$ is a tree and
  $(\macbag_{\mactn})_{\mactn\in \mactree}$ is a family of subsets of
  the domain $\macdom$ of $\macstruc$ such that:
  \begin{enumerate}
    \item For all $\macdomel\in \macdom$, the set $\{\mactn\in
      \mactree\mid \macdomel\in \macbag_{\mactn}\}$ is nonempty and
      connected in $\mactree$.
    \item For every relation symbol $\macrelt\in \macrelvoc$ and every
      tuple $(\macdomel_{1},\ldots,\macdomel_{\macarity(\macrelt)})\in
      \macrelt^{\macstruc}$, there is a $\mactn\in \mactree$ such that
      $\macdomel_{1},\ldots,\macdomel_{\macarity(\macrelt)} \in
      \macbag_{\mactn}$.
  \end{enumerate}
  The width of such a decomposition is the number
  $\max\{|\macbag_{\mactn}|\mid \mactn\in \mactree\}-1$. The
  \macpopterm{treewidth} $\mactw(\macstruc)$ of $\macstruc$ is the
  minimum of the widths of tree decompositions of $\macstruc$. By
  replacing tree by path everywhere in the above definitions, we get
  \macpopterm{path decomposition} and \macpopterm{pathwidth}.

  For algorithmic purposes, we assume that relational
  structures are encoded as explained in \cite[Section 4.2]{FG06}. It is
  known that computing an optimal tree decomposition of a relational
  structure is \macfpt\/ when parameterized by treewidth \cite[Corollary
  11.28]{FG06}. Since an optimal path decomposition can be computed in
  \macfpt\/ if input is given along with an optimal tree decomposition
  \cite{BK96} and since treewidth is less than or equal to pathwidth,
  computing an optimal path decomposition of a relational structure is
  also \macfpt\/ when parameterized by pathwidth.

  We assume familiarity with Monadic Second Order (MSO) logic of
  relational structures and Courcelle's theorem (cf. \cite[Section 4.2,
  Theorem 11.37]{FG06}). Courcelle's theorem states that given a
  relational structure and a MSO sentence, checking whether the
  structure satisfies the MSO sentence is \macfpt\/ when parameterized by
  treewidth of the structure and length of the sentence. We denote first
  order variables by $\macfovo,\macfovt,\ldots$, second order variables
  by $\macsovo,\macsovt,\ldots$ and MSO formulae by
  $\macmsofo,\macmsoft,\ldots$. In some cases, we will use more
  meaningful symbols like $\macclel$ for a first order variable
  intended to represent a clause and $\macclsc$ for a second order
  variable intended to represent a set of literals.

  \subsection{Modal logic}
  The basic modal language is defined using a set of propositional
  variables $\macpv$ (whose elements are usually denoted
  $\macpvo,\macpvt,\macpvh$ and so on) and unary modal operators
  $\Diamond$ (`diamond') and $\Box$ (`box'). The well-formed formulae
  $\macmfo$ of the basic modal language are given by the rule
  \begin{align*}
    \macmfo ::= \macpvo~|~\bot~|~\lnot\macmfo~|~\macmfo\lor\macmft~
    |~\Diamond\macmfo~|~\Box\macmfo
  \end{align*}
  where $\macpvo$ ranges over $\macpv$. This means that a formula is
  either a propositional variable, the propositional constant false
  (`bottom'), a negated formula, a disjunction of formulae or a formula
  prefixed by a diamond or a box.  We use standard abbreviations like
  $\macmfo\land \macmft \equiv \lnot (\lnot\macmfo \lor \lnot\macmft)$,
  $\macmfo\Rightarrow \macmft \equiv \lnot\macmfo \lor \macmft$ etc.

  A \macpopterm{Kripke model} for the basic modal language is a triple
  $\macmodelo=(\macworlds,\macrelo,\macvalo)$, where $\macworlds$ is a
  set of worlds, $\macrelo$ is a binary \macpopterm{accessibility}
  relation on $\macworlds$ and $\macvalo:\macworlds\times \macpv\to
  \{\top,\bot\}$ is the valuation function. If
  $\macvalo(\macworldo,\macpvo)=\top$, we think of it as $\macpvo$ being
  true in the world $\macworldo$. For $\macworldo,\macworldt\in
  \macworlds$, if $\macworldo\macrelo\macworldt$, $\macworldt$ is said
  to be a \macpopterm{successor} of $\macworldo$. The pair
  $(\macworlds,\macrelo)$ is called the \macpopterm{frame} $\macfro$
  underlying $\macmodelo$. If $\macrelo$ is reflexive, then $\macfro$
  and $\macmodelo$ are said to be a reflexive frame and a reflexive
  model respectively. Similar nomenclature is followed for other
  properties of $\macrelo$. The relation $\macrelo$ is Euclidean if for
  all $\macworldo_{1},\macworldo_{2},\macworldo_{3}$,
  $\macworldo_{1}\macrelo \macworldo_{2}$ and $\macworldo_{1} \macrelo
  \macworldo_{3}$ implies $\macworldo_{2}\macrelo \macworldo_{3}$.

  We denote the fact that a modal formula $\macmfo$ is satisfied at a
  world $\macworldo$ in a model $\macmodelo$ by $\macmodelo,\macworldo
  \models \macmfo$. The $\models$ relation is defined inductively as
  follows.
  \begin{align*}
    \macmodelo,\macworldo &\models \macpvo \text{ iff }
    \macvalo(\macworldo,\macpvo)=\top \\
    \macmodelo,\macworldo &\models \bot \text{ never}\\
    \macmodelo,\macworldo &\models \lnot \macmfo \text{ iff not }
    \macmodelo,\macworldo \models\macmfo \\
    \macmodelo,\macworldo &\models \macmfo\lor\macmft \text{ iff }
    \macmodelo,\macworldo \models \macmfo \text{ or }
    \macmodelo,\macworldo \models \macmft \\
    \macmodelo,\macworldo &\models \Diamond\macmfo \text{ iff for some
    successor }
    \macworldt \text{ of }
    \macworldo \text{, } \macmodelo,\macworldt
    \models \macmfo \\
    \macmodelo,\macworldo &\models \Box\macmfo \text{ iff for all
    successors }
    \macworldt \text{ of }
    \macworldo \text{, } \macmodelo,\macworldt
    \models \macmfo
  \end{align*}
\end{wellknown}

We use standard notation for modal logic from \cite{BDV01}: well
formed modal logic formulae are defined by the grammar $\macmfo ::=
\macpvo\in \macpv~|~\bot~|~\lnot\macmfo~|~\macmfo\lor~\macmft~
|~\Diamond\macmfo~|~\Box\macmfo$, where $\macpv$ is a set of
propositional variables. A \macpopterm{Kripke model} for the basic
modal language is a triple
$\macmodelo=(\macworlds,\macrelo,\macvalo)$, where $\macworlds$ is a
set of worlds, $\macrelo$ is a binary \macpopterm{accessibility}
relation on $\macworlds$ and $\macvalo:\macworlds\times \macpv\to
\{\top,\bot\}$ is a valuation function. For $\macworldo,\macworldt\in
\macworlds$, if $\macworldo\macrelo\macworldt$, $\macworldt$ is said
to be a \macpopterm{successor} of $\macworldo$. The pair
$(\macworlds,\macrelo)$ is called the \macpopterm{frame} $\macfro$
underlying $\macmodelo$. If $\macrelo$ is reflexive, then $\macfro$
and $\macmodelo$ are said to be a reflexive frame and a reflexive
model respectively. Similar nomenclature is followed for other
properties of $\macrelo$. The relation $\macrelo$ is Euclidean if for
all $\macworldo_{1},\macworldo_{2},\macworldo_{3}$,
$\macworldo_{1}\macrelo \macworldo_{2}$ and $\macworldo_{1} \macrelo
\macworldo_{3}$ implies $\macworldo_{2}\macrelo \macworldo_{3}$. We
denote the fact that a modal formula $\macmfo$ is satisfied at a world
$\macworldo$ in a model $\macmodelo$ by $\macmodelo,\macworldo \models
\macmfo$. For $\macpvo\in \macpv$, $\macmodelo,\macworldo\models
\macpvo$ iff $\macvalo(\macworldo,\macpvo)=\top$. Negation $\lnot$
and disjunction $\lor$ are treated in the standard way. For any
formula $\macmfo$, $\macmodelo,\macworldo\models \Diamond\macmfo$
($\macmodelo,\macworldo\models \Box \macmfo$) iff some (all)
successor(s) $\macworldt$ of $\macworldo$ satisfy
$\macmodelo,\macworldt\models \macmfo$. A modal formula
$\macmfo$ is \macpopterm{satisfiable} if there is a model $\macmodelo$
and a world $\macworldo$ in $\macmodelo$ such that
$\macmodelo,\macworldo \models \macmfo$.
\begin{details}
  A world $\macworldo'$ is said to be \macpopterm{reachable} from
  $\macworldo$ if there are worlds
  $\macworldo_{1},\macworldo_{2},\dots,\macworldo_{\macnuminst}$ such
  that $\macworldo\macrelo\macworldo_{1}\macrelo\cdots\macrelo
  \macworldo_{\macnuminst}\macrelo\macworldo'$. It is well known that
  if some modal formula is satisfied at some world $\macworldo$ in
  some Kripke model, discarding worlds not reachable from $\macworldo$
  does not affect satisfiability \cite[Proposition 2.6]{BDV01}.
  Henceforth, if some modal formula is satisfied at some world
  $\macworldo$ in some Kripke model $\macmodelo$, we will assume that
  $\macmodelo$ consists of only those worlds reachable from
  $\macworldo$.
\end{details}
Satisfiability in general,
reflexive and transitive models are all \macpspace-complete
\cite{L77}, while in equivalence models, it is \macnp-complete
\cite{L77}.

The modal depth $\macmd(\macmfo)$ of a modal formula $\macmfo$ is
inductively defined as follows. $\macmd(\macpvo)=\macmd(\bot)=0$.
$\macmd(\lnot\macmfo)=\macmd(\macmfo)$.
$\macmd(\macmfo\lor\macmft)=\max \{\macmd(\macmfo),\macmd(\macmft)\}$.
$\macmd(\Diamond\macmfo)=\macmd(\Box\macmfo)=\macmd(\macmfo)+1$.
We will use the Conjunctive Normal Form (CNF) for modal logic defined
in \cite{HM08}:
\begin{align*}
  \maclit &::= \macpvo~|~\lnot\macpvo~|~\Box\maccl~|~\Diamond\maccnf\\
  \maccl &::= \maclit~|~\maccl\lor\maccl~|~\bot\\
  \maccnf &::= \maccl~|~\maccnf\land\maccnf
\end{align*}
where $\macpvo$ ranges over $\macpv$. Any arbitrary modal formula
$\macmfo$ can be effectively transformed into CNF preserving
satisfiability \cite{EC89}. A $\maccnf$ is a conjunction of clauses and
a $\maccl$ is a disjunction of literals. A $\maclit$ is either a
propositional variable, a negated propositional variable or a formula of
the form $\Box\maccl$ or $\Diamond\maccnf$.  If one of the many
literals in a clause is $\bot$, then $\bot$ can be ignored without
affecting satisfiability. A literal of the form $\Diamond\bot$ can
similarly be ignored. However, a clause that has $\bot$ as the only
literal cannot be ignored since $\Box\bot$ is satisfied by a world in
some Kripke model iff that world has no successors. Henceforth, we
will assume that $\bot$ occurs only inside sub-formulae of the form
$\Box\bot$.

Suppose $\macmfo$ is a modal formula in CNF. If $\macmfo$ is of the
form $\maccl_{1}\land \maccl_{2}\land\cdots\land \maccl_{\macnumcl}$,
then $\maccl_{1},\maccl_{2},\ldots,\maccl_{\macnumcl}$ and all
literals appearing in these clauses are said to be at
\macpopterm{level} $\macmd(\macmfo)$. If $\Box\maccl_{1}$ is a
$\maclit$ at some level $\macmdidx$, then $\maccl_{1}$ and all
literals occurring in $\maccl_{1}$ are said to be
at level $\macmdidx-1$. If $\Diamond\maccnf$ is a literal at some
level $\macmdidx$ and $\maccnf$ is of the form
$\maccl_{1}\land\maccl_{2} \land \cdots \land \maccl_{\macnumcl'}$,
then $\maccl_{1},\maccl_{2}, \cdots, \maccl_{\macnumcl'}$ and all
literals appearing in these clauses are said to be at level
$\macmdidx-1$.  Note that a single propositional variable can occur in
the form of a $\maclit$ at different levels. The concept of level is
similar to the concept of distance defined in \cite{PV2003}. The
process of checking satisfiability we describe in
\secref{sec:modSatGenFrames} can be considered a variant of the
level-based bottom-up algorithm given in \cite{PV2003}, which is also
implicitly used in \cite[Theorem 5]{ALM09}. It requires more work and
combination of other ideas to prove that this process can be
formalized in MSO logic.

\begin{conf}
Proofs of lemmata marked with (*) are skipped due to
lack of space. A full version of this paper with the same title is
available at arXiv, which contains all the proofs.
\end{conf}

\section{Modal satisfiability in general models}
\label{sec:modSatGenFrames}
In this section, we will associate a relational structure with a modal
CNF formula. We show that checking satisfiability of a modal CNF
formula is \macfpt, parameterized by modal depth and the treewidth of
the associated relational structure. We begin with an example modal
CNF formula.

Consider the modal CNF formula $\left\{ \lnot\macpvo \lor \Box \left[
\macpvt\lor\lnot\macpvh \right] \right\}\land \left\{ \macpvo\lor
\Diamond\bot\right\}\land \left\{ \macpvt\lor \Diamond\left[
\lnot\macpvh\right] \right\}\land \left\{ \lnot\macpvt \lor
\Diamond\left[ (\macpvf\lor \lnot\macpvh) \land (\macpvt)
\right]\right\}$. Its modal depth is $1$ and has $4$ clauses at level
$1$. \Figref{fig:exampleMFRelStruc} shows a graphical representation
of this formula, which is very similar to the formula's syntax tree.
The $4$ clauses at level $1$ are represented by
$\macdomel_{1},\macdomel_{2},\macdomel_{3}$ and $\macdomel_{4}$.
$\macdomel_{1}$ represents the clause $\left\{\lnot\macpvo \lor \Box
\left[ \macpvt\lor\lnot\macpvh \right] \right\}$. Since $\lnot
\macpvo$ occurs as a literal in this clause, there is a dotted arrow
from $\macdomel_{1}$ to $\macpvo$. $\Box\left[\macpvt\lor \lnot
\macpvh\right]$ (represented by $\macdomel_{9}$) also occurs as a
literal in clause $\macdomel_{1}$ and hence there is an arrow from
$\macdomel_{1}$ to $\macdomel_{9}$. $\macdomel_{4}$ represents the
fourth clause at level $1$, which contains $\Diamond\left[(\macpvf\lor
\lnot\macpvh)\land (\macpvt)\right]$ as a literal. This
$\Diamond\maccnf$ formula is represented by $\macdomel_{10}$. The two
clauses $(\macpvf\lor \lnot\macpvh)$ and $(\macpvt)$ are represented
by $\macdomel_{7}$ and $\macdomel_{8}$ respectively and are connected
to $\macdomel_{10}$ by arrows. The propositional variable $\macpvt$
occurs as literal at $2$ levels, indicated as $\maclev_{0}$ and
$\maclev_{1}$.
\begin{figure}[!htp]
  \begin{center}
    \includegraphics{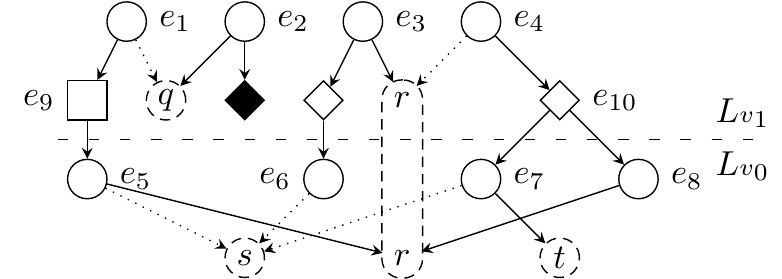}
  \end{center}
  \caption{Relational structure associated with the modal formula
  $\left\{ \lnot\macpvo \lor \Box \left[
  \macpvt\lor\lnot\macpvh \right] \right\}\land \left\{ \macpvo\lor
  \Diamond\bot\right\}\land \left\{ \macpvt\lor \Diamond\left[
  \lnot\macpvh\right] \right\}\land \left\{ \lnot\macpvt
  \lor \Diamond\left[ (\macpvf\lor \lnot\macpvh) \land
  (\macpvt) \right]\right\}$}
  \label{fig:exampleMFRelStruc}
\end{figure}

Now we will formalize the above example. The intuition behind the
following definition is to represent all clauses and literals of a
modal CNF formula by the domain elements of a relational structure.
Binary relations are used to indicate which literals occur in which
clause (and which clauses occur in which literal). Unary relations are
used to indicate which elements represent literals and which elements
represent clauses. This will enable us to reason about clauses,
literals and their dependencies using MSO formulae over the relational
structure.
\begin{definition}
  \label{def:relStrucMF}
  Given a modal CNF formula $\macmfo$, we associate with it a
  relational structure $\macstruc(\macmfo)$. It will have one domain
  element for every $\maccl$ in $\macmfo$. It will have one domain
  element for every literal of the form $\Box\maccl$ or
  $\Diamond\maccnf$ in $\macmfo$. It will also have one domain element
  for every propositional variable used in $\macmfo$. There are no
  domain elements representing the propositional constant $\bot$. They
  will be handled as special cases.
  
  The relational structure will have two binary relations $\macedge$
  (occurs) and $\macedgec$ (occurs negatively).
  $\macedgec(\macdomel_{1},\macdomel_{2})$ iff $\macdomel_{1}$
  represents a $\maccl$ and $\macdomel_{2}$ represents a 
  propositional variable occurring negated as a $\maclit$ in the $\maccl$
  represented by $\macdomel_{1}$. If $\macdomel_{1}$ represents a
  $\maccl$, then $\macedge(\macdomel_{1},\macdomel_{2})$ iff
  $\macdomel_{2}$ represents a literal (occurring in the clause
  represented by $\macdomel_{1}$) of the form $\Box\maccl$,
  $\Diamond\maccnf$ or a non-negated propositional variable. If
  $\macdomel_{1}$ represents a literal of the form $\Box\maccl$, then
  $\macedge(\macdomel_{1},\macdomel_{2})$ iff $\macdomel_{2}$
  represents the corresponding $\maccl$. If $\macdomel_{1}$ represents
  a literal of the form $\Diamond\maccnf$, then
  $\macedge(\macdomel_{1},\macdomel_{2})$ iff $\macdomel_{2}$
  represents a clause in the corresponding $\maccnf$.  Finally, the
  following unary relations are present:\\
  \begin{tabular}[htp]{rcl}
    $\maccls$ & : & contains all domain elements representing
    clauses\\
    $\macclsc$ & : & all domain elements representing literals\\
    $\macuc$ & : & all literals of the form $\Box\bot$\\
    $\macbox$ & : & all literals of the form $\Box\maccl$\\
    $\macdmd$ & : & all literals of the form $\Diamond\maccnf$\\
    $(\maclev_{\macmdidx})_{0 \le \macmdidx \le \macmd(\macmfo)}$ & :
    & all clauses and literals at level $\macmdidx$
  \end{tabular}
\end{definition}
For clauses and literals of the form $\Box\maccl$ or
$\Diamond\maccnf$, there is one domain element for every occurrence of
the clause or literal. For example, if the literal
$\Diamond(\macpvo_{1}\land \macpvo_{2})$ occurs in two different
positions of a big formula $\macmfo$, the two occurrences will be
represented by two different domain elements in $\macstruc(\macmfo)$. In
contrast, different occurrences of a literal that is just a
propositional variable will be represented by the same domain element.
In the rest of the paper, whenever we refer to the treewidth of a modal
CNF formula $\macmfo$, we mean the treewidth of $\macstruc(\macmfo)$.

If $\macdomel_{1}$ represents a $\maccl$,
$\macedge(\macdomel_{1},\macdomel_{2})$ means that the clause
represented by $\macdomel_{1}$ can be satisfied by satisfying
the literal represented by $\macdomel_{2}$.
$\macedgec(\macdomel_{1},\macdomel_{2})$ means that the clause
represented by $\macdomel_{1}$ can be satisfied by setting the
propositional variable represented by $\macdomel_{2}$ to false.

If $\macclsu_{0}\subseteq\maccls\cap\maclev_{0}$ is a subset of domain
elements representing clauses at level 0, let $\maccnf(\macclsu_{0})$
be the modal CNF formula that is the conjunction of clauses
represented by domain elements in $\macclsu_{0}$. We will now see how
to check satisfiability of $\maccnf(\{\macdomel_{7},\macdomel_{8}\})$
in \figref{fig:exampleMFRelStruc} and describe the generalization of
this process given in \eqref{def:modSatInfDescLev0} below. We use
$\macclel$ and $\maclitel$ for first order variables intended to
represent clauses and literals respectively. First of all, there must
be a subset
$\mactlitsu_{0}\subseteq\{\macpvt,\macpvh,\macpvf\}=\macclsc
\cap\maclev_{0}$ that will be set to $\top$, as written in the
beginning of \eqref{def:modSatInfDescLev0}. Then, we must check that
this assignment satisfies each clause $\macclel$ in $\macclsu_{0}$,
written as $\forall \macclel\in\macclsu_{0}$ in
\eqref{def:modSatInfDescLev0}.  To check that the clause represented
by $\macdomel_{7}$ is satisfied, either a positively occurring literal
like $\macpvf$ must be set to $\top$ and hence in $\mactlitsu_{0}$
(written as ``$\exists
\maclitel\in\mactlitsu_{0}:\macedge(\macclel,\maclitel)$'' in
\eqref{def:modSatInfDescLev0}) or a negatively occurring literal like
$\macpvh$ must be set to $\bot$ and hence not in $\mactlitsu_{0}$
(``$\exists \maclitel\in (\macclsc\cap\maclev_{0})\setminus
\mactlitsu_{0}:\macedgec(\macclel,\maclitel)$'' in
\eqref{def:modSatInfDescLev0}). A similar argument applies to
$\macdomel_{8}$ as well.
\begin{Formula}[def:modSatInfDescLev0]
  \maclevsat{0}(\macclsu_{0}) \macdef \exists \mactlitsu_{0}
  \subseteq
  (\macclsc\cap\maclev_{0}):\forall\macclel\in\macclsu_{0}:
  \>1 \left[\left(\exists
  \maclitel\in\mactlitsu_{0}:\macedge(\macclel,\maclitel)\right)
  \lor \left(\exists \maclitel\in (\macclsc\cap\maclev_{0})\setminus
  \mactlitsu_{0}:\macedgec(\macclel,\maclitel)\right)\right]

  \Form[def:modSatInfDescLevi]
  \maclevsat{\macmdidx}(\macclsu_{\macmdidx}) \macdef \exists
  \mactlitsu_{\macmdidx} \subseteq (\macclsc\cap\maclev_{\macmdidx}):
  \forall \macclel\in \macclsu_{\macmdidx}:
  \>1 \left[\left(\exists \maclitel\in \mactlitsu_{\macmdidx} :
  \macedge(\macclel,\maclitel)\right)
   \lor \left(\exists \maclitel\in (\macclsc\cap\maclev_{\macmdidx})
  \setminus \mactlitsu_{\macmdidx}:
  \macedgec(\macclel,\maclitel)\right)\right]

  \>1 \land [\maccomits_{\macmdidx-1}=\{\macclel'\in
  (\maccls\cap\maclev_{\macmdidx-1})\mid \exists \maclitel'\in
  \mactlitsu_{\macmdidx}\cap\macbox,\macedge(\maclitel',\macclel')\}
  \Rightarrow
  \>2 \forall \maclitel\in \mactlitsu_{\macmdidx}\cap\macdmd:
  \macdemands_{\macmdidx-1}=\{\macclel\in
  (\maccls\cap\maclev_{\macmdidx-1})\mid
  \macedge(\maclitel,\macclel)\} \Rightarrow
  \>3
  \maclevsat{\macmdidx-1}(\macdemands_{\macmdidx-1}\cup
  \maccomits_{\macmdidx-1})]
\end{Formula}

Checking satisfiability at higher levels is slightly more complicated.
Suppose $\macclsu_{\macmdidx}\subseteq \maccls\cap\maclev_{\macmdidx}$
is a subset of clauses at level $\macmdidx$. We will take
$\macclsu_{1}=\{\macdomel_{1},\macdomel_{3},\macdomel_{4}\}$ from
\figref{fig:exampleMFRelStruc} as an example. If some world
$\macworldo$ in some Kripke model $\macmodelo$ satisfies
$\maccnf(\macclsu_{1})$, there must be some subset $\mactlitsu_{1}$ of
literals at level $1$ satisfied at $\macworldo$ (``$\exists
\mactlitsu_{\macmdidx} \subseteq (\macclsc\cap\maclev_{\macmdidx})$''
in \eqref{def:modSatInfDescLevi}). As before, we check that for
every clause represented in $\macclsu_{1}$ (``$\forall \macclel\in
\macclsu_{\macmdidx}$'' in \eqref{def:modSatInfDescLevi}), there is
either a positively occurring literal in $\mactlitsu_{1}$
(``$\exists \maclitel\in \mactlitsu_{\macmdidx} :
\macedge(\macclel,\maclitel)$'' in \eqref{def:modSatInfDescLevi}) or a
negatively occurring literal not in $\mactlitsu_{1}$ (``$\exists
\maclitel\in (\macclsc\cap\maclev_{\macmdidx}) \setminus
\mactlitsu_{\macmdidx}: \macedgec(\macclel,\maclitel)$'' in
\eqref{def:modSatInfDescLevi}). Next, we must check that the literals
we have chosen to be satisfied at $\macworldo$ (by putting them into
$\mactlitsu_{1}$) can actually be satisfied. Suppose
$\mactlitsu_{1}$ was
$\{\macdomel_{9},\macpvo,\macpvt,\macdomel_{10}\}$. Since
$\macdomel_{9}$ represents a literal of the form $\Box\maccl$
(with the $\maccl$ represented by domain element $\macdomel_{5}$), we
are committed to satisfy the clause represented by
$\macdomel_{5}$ in any world succeeding $\macworldo$. Let
$\maccomits_{0}=\{\macdomel_{5}\}$ be the set of clauses occurring at
level $0$ that we have committed to as a result of choosing
corresponding $\Box\maccl$ literals to be in $\mactlitsu_{1}$
(``$\maccomits_{\macmdidx-1}=\{\macclel'\in
(\maccls\cap\maclev_{\macmdidx-1})\mid \exists \maclitel'\in
\mactlitsu_{\macmdidx}\cap\macbox,\macedge(\maclitel',\macclel')\}$''
in \eqref{def:modSatInfDescLevi}). Now, since we have also chosen
$\macdomel_{10}$ to be in $\mactlitsu_{1}$ and $\macdomel_{10}$
represents a $\Diamond\maccnf$ formula, there is a demand to create a
world $\macworldo'$ that succeeds $\macworldo$ and satisfies the
corresponding $\maccnf$ formula. We have to check that every such
demand in $\mactlitsu_{1}$ can be satisfied (``$\forall \maclitel\in
\mactlitsu_{\macmdidx}\cap\macdmd$'' in \eqref{def:modSatInfDescLevi})
by creating successor worlds. In case of the demand created by
$\macdomel_{10}$, $\{\macdomel_{7},\macdomel_{8}\}=\macdemands_{0}$ is
the set of clauses in the demanded $\maccnf$ formula
(``$\macdemands_{\macmdidx-1}=\{\macclel\in
(\maccls\cap\maclev_{\macmdidx-1})\mid
\macedge(\maclitel,\macclel)\}$'' in \eqref{def:modSatInfDescLevi}).
Our aim now is to create a successor world $\macworldo'$ in which all
clauses represented in $\macdemands_{0}$ are satisfied. However,
$\macworldo'$ is a successor world and we have already committed to
satisfying all clauses represented in $\maccomits_{0}$ in all
successor worlds. Hence, we actually check if the clauses represented
in $\maccomits_{0}\cup\macdemands_{0}$ are satisfiable by inductively
invoking $\maclevsat{0}(\macdemands_{0}\cup\maccomits_{0})$
(``$\maclevsat{\macmdidx-1}(\macdemands_{\macmdidx-1}\cup
\maccomits_{\macmdidx-1})$'' in \eqref{def:modSatInfDescLevi}).

For the sake of clarity, we have skipped handling literals of the form
$\Box\bot$ in the above discussion. They will be handled in the formal
arguments that follow. 
\begin{lemma}
  \label{lem:levSatMSO}
  The property $\maclevsat{\macmdidx}(\macclsu_{\macmdidx})$ can be
  written in a MSO logic formula of size linear in $\macmdidx$. If
  $\macmfo$ is any modal formula in CNF and $\macclsu_{\macmdidx}$ is
  any subset of domain elements representing clauses at level
  $\macmdidx$, then $\maccnf(\macclsu_{\macmdidx})$ is satisfiable iff
  $\maclevsat{\macmdidx}(\macclsu_{\macmdidx})$ is true in
  $\macstruc(\macmfo)$.
\end{lemma}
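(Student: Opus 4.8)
The statement has two halves. The size claim is the easy one: \eqref{def:modSatInfDescLev0} defines $\maclevsat{0}$ outright as a fixed MSO formula over the vocabulary of $\macstruc(\macmfo)$, and \eqref{def:modSatInfDescLevi} defines $\maclevsat{\macmdidx}$ from $\maclevsat{\macmdidx-1}$ by adding a bounded block of quantifiers and Boolean connectives (the set $\mactlitsu_{\macmdidx}$, the definitions of $\maccomits_{\macmdidx-1}$ and $\macdemands_{\macmdidx-1}$ via set-comprehension written out as $\forall$/$\Leftrightarrow$, and the $\forall \maclitel$ over $\macdmd$-literals). Each level therefore contributes a constant number of symbols, so unwinding the recursion $\macmdidx$ times gives an MSO sentence of size $\Oh(\macmdidx)$. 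I would first record precisely how the set-comprehensions "$\maccomits_{\macmdidx-1}=\{\cdots\}$" and the quantifications "$\mactlitsu_{\macmdidx}\subseteq(\macclsc\cap\maclev_{\macmdidx})$" translate into genuine MSO syntax (second-order quantifier plus a defining first-order formula using $\macedge$, $\macedgec$, $\macbox$, $\macdmd$, $\maclev_{\macmdidx}$, $\maclev_{\macmdidx-1}$), and remark that the only vocabulary used at level $\macmdidx$ is the finitely many relation symbols named in \Defref{def:relStrucMF} — in particular we never need $\maclev_{\macmdidx}$ for unboundedly many $\macmdidx$ inside a single subformula, only the two adjacent levels.

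The correctness claim I would prove by induction on $\macmdidx$, in both directions simultaneously at each level. \textbf{Base case} $\macmdidx=0$: $\maccnf(\macclsu_{0})$ is a purely propositional CNF over the variables represented in $\macclsc\cap\maclev_{0}$, and $\maclevsat{0}(\macclsu_{0})$ literally says "there is a truth assignment $\mactlitsu_{0}$ to those variables such that every clause in $\macclsu_{0}$ has a positive literal set true or a negative literal set false" — this is exactly propositional satisfiability, using the semantics of $\macedge$ and $\macedgec$ fixed after \Defref{def:relStrucMF}. \textbf{Inductive step}: assume the equivalence for level $\macmdidx-1$ and prove it for level $\macmdidx$. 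For the ($\Rightarrow$, i.e. satisfiable implies formula true) direction, take a Kripke model $\macmodelo$ and world $\macworldo$ with $\macmodelo,\macworldo\models\maccnf(\macclsu_{\macmdidx})$; let $\mactlitsu_{\macmdidx}$ be the set of level-$\macmdidx$ literals true at $\macworldo$. Each clause in $\macclsu_{\macmdidx}$ is satisfied, so the disjunction inside \eqref{def:modSatInfDescLevi} holds; and for each $\Diamond\maccnf$-literal $\maclitel\in\mactlitsu_{\macmdidx}$ there is a successor $\macworldo'$ of $\macworldo$ satisfying that $\maccnf$, i.e. all of $\macdemands_{\macmdidx-1}$; since $\macworldo'$ is a successor and all $\Box\maccl$-commitments in $\maccomits_{\macmdidx-1}$ hold at every successor, $\macworldo'$ satisfies $\maccnf(\macdemands_{\macmdidx-1}\cup\maccomits_{\macmdidx-1})$, so by the induction hypothesis $\maclevsat{\macmdidx-1}(\macdemands_{\macmdidx-1}\cup\maccomits_{\macmdidx-1})$ is true. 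For the ($\Leftarrow$) direction, from a witnessing $\mactlitsu_{\macmdidx}$ and, for each $\Diamond$-literal in it, an invocation of the induction hypothesis producing a model $\macmodelo_{\maclitel}$ with a root $\macworldo_{\maclitel}$ satisfying $\maccnf(\macdemands_{\macmdidx-1}\cup\maccomits_{\macmdidx-1})$, I would build $\macmodelo$ by taking a fresh root $\macworldo$, making its valuation on level-$\macmdidx$ propositional variables agree with $\mactlitsu_{\macmdidx}$, and adding an edge from $\macworldo$ to the root of each $\macmodelo_{\maclitel}$ (disjoint union of the subordinate models). One checks every clause of $\macclsu_{\macmdidx}$ is satisfied at $\macworldo$ (positive literal in $\mactlitsu_{\macmdidx}$, negative literal absent, or a $\Box$/$\Diamond$-literal in $\mactlitsu_{\macmdidx}$), every $\Box\maccl$ in $\mactlitsu_{\macmdidx}$ holds because its clause is in $\maccomits_{\macmdidx-1}$ and hence satisfied at every chosen successor, and every $\Diamond\maccnf$ in $\mactlitsu_{\macmdidx}$ holds via the successor $\macworldo_{\maclitel}$.

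\textbf{The main obstacle} is the bookkeeping around $\Box\bot$-literals, which the informal discussion deliberately postponed: a clause consisting solely of $\Box\bot$ forces the current world to be a dead end, which is incompatible with simultaneously having any $\Diamond\maccnf$-literal true there. In the formalization I would handle this by adding to \eqref{def:modSatInfDescLevi} (and to the level-$0$ dead-end case) a conjunct saying that if $\mactlitsu_{\macmdidx}$ contains a $\macuc$-literal then $\mactlitsu_{\macmdidx}\cap\macdmd=\emptyset$ and no clause in $\macclsu_{\macmdidx}$ is forced true only through such a literal being needed as a successor-creator — i.e. treating $\Box\bot$ as a literal that is "true at $\macworldo$" exactly when $\macworldo$ has no successors, and checking consistency of that choice with the $\Diamond$-demands. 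This is still a constant-size addition per level, so the size bound is unaffected; the care needed is purely in matching the MSO conjunct to the Kripke semantics of $\Box\bot$ in both directions of the induction. A secondary, routine point is confirming that when a propositional variable occurs as a literal at several levels the distinct domain elements $\maclev_{\macmdidx}$-tag them correctly, so the level-wise independence of the assignments $\mactlitsu_{0},\mactlitsu_{1},\dots$ is exactly what the Kripke semantics permits (the valuation of a variable can differ from world to world).
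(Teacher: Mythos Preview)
Your proposal is correct and follows essentially the same route as the paper: a linear-size bound from the single recursive call in \eqref{def:modSatInfDescLevi}, and a two-direction induction on $\macmdidx$ that in one direction reads off $\mactlitsu_{\macmdidx}$ from the literals true at $\macworldo$ and in the other builds a tree-shaped model by attaching, for each $\Diamond$-literal in $\mactlitsu_{\macmdidx}$, a submodel obtained from the induction hypothesis on $\macdemands_{\macmdidx-1}\cup\maccomits_{\macmdidx-1}$. Your handling of $\Box\bot$ via a consistency conjunct (``if some $\macuc$-literal is in $\mactlitsu_{\macmdidx}$ then $\mactlitsu_{\macmdidx}\cap\macdmd=\emptyset$'') is exactly what the paper's formal version \eqref{def:modSatMSODescLevi} adds; one small correction to your closing remark is that a propositional variable is represented by a \emph{single} domain element shared across levels (it simply belongs to several $\maclev_{\macmdidx}$ relations), not by distinct elements, though your conclusion about level-wise independence of the $\mactlitsu_{\macmdidx}$ is unaffected.
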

\begin{details}
  \begin{proof}
    We will prove the first claim by induction on $\macmdidx$. Formula
    \eqref{def:modSatMSODescLev0} below is same as
    \eqref{def:modSatInfDescLev0} written in formal MSO syntax.
    \eqref{def:modSatMSODescLevi} is a formal MSO statement of
    \eqref{def:modSatInfDescLevi} and two additional conditions for
    handling literals of the form $\Box\bot$ and $\Diamond\bot$. We
    will prove that the length $|\maclevsat{\macmdidx}|$ of
    $\maclevsat{\macmdidx}$ is linear in $\macmdidx$. Let $\macico$ be
    the length of $\maclevsat{\macmdidx}$ without length of
    $\maclevsat{\macmdidx-1}$ counted. As can be seen,
    $|\maclevsat{0}|\le \macico$. Inductively assume that
    $|\maclevsat{\macmdidx-1}|\le \macmdidx\macico$.  Then, $|
    \maclevsat{\macmdidx} |= \macico+|\maclevsat{\macmdidx-1}|$.
    Hence, $|\maclevsat{\macmdidx}|\le
    \macico+\macmdidx\macico=\macico(\macmdidx+1)$.

    We will now prove the second claim by induction on $\macmdidx$.

    \emph{Base case $\macmdidx=0$:} The modal formula
    $\maccnf(\macclsu_{0})$ is a propositional CNF formula. Suppose $
    \maclevsat{0}(\macclsu_{0})$ is true in $\macstruc(\macmfo)$.
    Hence, there is a subset $\mactlitsu_{0}$ of domain elements that
    satisfy the last four conditions of $\maclevsat{0}$ defined in
    \eqref{def:modSatMSODescLev0}. The second condition $\forall
    \macfovo\left(\mactlitsu_{0}(x)\Rightarrow \left(
    \macclsc(\macfovo)\land\maclev_{0}(\macfovo)\right)\right)$
    ensures that all domain elements in $\mactlitsu_{0}$ are also in
    $\macclsc$ and $\maclev_{0}$. Hence, all domain elements in
    $\mactlitsu_{0}$ represent literals at level $0$. Since the only
    literals at level $0$ are propositional variables or their
    negations, $\mactlitsu_{0}$ is in fact a subset of propositional
    variables. Consider the Kripke model $\macmodelo$ with a single
    world $\macworldo$ at which, all propositional variables in
    $\mactlitsu_{0}$ are set to $\top$ and all others are set to
    $\bot$. We will now prove that all clauses represented in
    $\macclsu_{0}$ are satisfied in $\macworldo$. Let $\macclel$ be
    some element in $\macclsu_{0}$ representing some clause. Since
    $\macclsu_{0}(\macclel)$ is true and $\macstruc(\macmfo)$
    satisfies the last three conditions of $\maclevsat{0}$, we have
    that either $\exists \maclitel\left(
    \mactlitsu_{0}(\maclitel)\land \macedge(\macclel,\maclitel)
    \right)$ or $\exists \maclitel \left( \macclsc(\maclitel) \land
    \maclev_{0}(\maclitel) \land \lnot\mactlitsu_{0}(\maclitel) \land
    \macedgec(\macclel,\maclitel) \right)$ is true in
    $\macstruc(\macmfo)$. In the first case,
    $\macedge(\macclel,\maclitel)$ means that $\maclitel$ is a
    positively occurring literal in the clause $\macclel$ and
    $\mactlitsu_{0}(\maclitel)$ means that $\maclitel$ is in
    $\mactlitsu_{0}$ (and hence it is set to $\top$ in $\macworldo$,
    satisfying clause $\macclel$). In the second case,
    $\macedgec(\macclel,\maclitel)\land \macclsc(\maclitel)\land
    \maclev_{0}(\maclitel)$ means that $\maclitel$ is a literal
    negatively occurring in clause $\macclel$ and
    $\lnot\mactlitsu_{0}(\maclitel)$ means that $\maclitel$ is not in
    $\mactlitsu_{0}$ (and hence it is set to $\bot$ in $\macworldo$,
    again satisfying clause $\macclel$).

    Now suppose that there is a Kripke model $\macmodelo$ and a world
    $\macworldo$ such that $\macmodelo,\macworldo\models
    \maccnf(\macclsu_{0})$. We will prove that
    $\maclevsat{0}(\macclsu_{0})$ is true in $\macstruc(\macmfo)$. The
    first requirement is to find a suitable subset
    $\mactlitsu_{0}$ of domain elements. We will set
    $\mactlitsu_{0}$ to be the set of precisely those domain elements
    that represent propositional variables occurring at level $0$ and
    set to $\top$ in the world $\macworldo$. This will ensure that the
    condition $\forall \macfovo\left(\mactlitsu_{0}(x)\Rightarrow \left(
    \macclsc(\macfovo)\land\maclev_{0}(\macfovo)\right)\right)$ in
    $\maclevsat{0}$ is satisfied. Now we have to prove that last three
    conditions of $\maclevsat{0}$ are satisfied. So let $\macclel\in
    \macclsu_{0}$ be any domain element so that it satisfies
    $\macclsu_{0}(\macclel)$. We have to now prove that this $\macclel$
    satisfies one of the last two conditions of $\maclevsat{0}$. Since
    $\macclel\in \macclsu_{0}$, it represents a clause in $\macmfo$
    occurring at level $0$. Since $\macmodelo,\macworldo\models
    \maccnf(\macclsu_{0})$, the clause represented by $\macclel$ is
    satisfied in $\macworldo$. Hence there is either a positively
    occurring propositional variable set to $\top$ in $\macworldo$
    (so that it is in $\mactlitsu_{0}$, thus satisfying $\exists
    \maclitel\left( \mactlitsu_{0}(\maclitel)\land
    \macedge(\macclel,\maclitel) \right)$) or a negatively occurring
    propositional variable set to $\bot$ in $\macworldo$ (so that it is
    not in $\mactlitsu_{0}$, thus satisfying $\exists \maclitel \left(
    \macclsc(\maclitel) \land \maclev_{0}(\maclitel) \land
    \lnot\mactlitsu_{0}(\maclitel) \land \macedgec(\macclel,\maclitel)
    \right)$). This completes the base case.
    \begin{Formula}[def:modSatMSODescLev0]
      \maclevsat{0}(\macclsu_{0})\macdef \exists \mactlitsu_{0} 
      \>0\{
      \>1	\forall \macfovo\left(\mactlitsu_{0}(x)\Rightarrow \left(
	  \macclsc(\macfovo)\land\maclev_{0}(\macfovo)\right)\right)
      \>1 \land \forall \macclel \quad \macclsu_{0}(\macclel) \Rightarrow
      \>1[
	\>2 \exists \maclitel\left( \mactlitsu_{0}(\maclitel)\land
	    \macedge(\macclel,\maclitel) \right)
	\>2 \lor \exists \maclitel \left( \macclsc(\maclitel) \land
	    \maclev_{0}(\maclitel) \land \lnot\mactlitsu_{0}(\maclitel)
	    \land \macedgec(\macclel,\maclitel) \right)
      \>1]
      \>0\}
    \end{Formula}

    \emph{Induction step:} Suppose $\macclsu_{\macmdidx}$ is a subset of
    domain elements representing clauses occurring at level $\macmdidx$
    and $\maclevsat{\macmdidx}(\macclsu_{\macmdidx})$ is true in
    $\macstruc(\macmfo)$. We will build a Kripke model $\macmodelo$ and
    prove that it has a world $\macworldo$ such that
    $\macmodelo,\macworldo\models \maccnf(\macclsu_{\macmdidx})$. We
    will start with a single world $\macworldo$. Since
    $\maclevsat{\macmdidx}(\macclsu_{\macmdidx})$ is true in
    $\macstruc(\macmfo)$, there must be a subset
    $\mactlitsu_{\macmdidx}$ of domain elements satisfying the last
    eleven conditions of $\maclevsat{\macmdidx}(\macclsu_{\macmdidx})$.
    The condition $\forall
    \macfovo\left(\mactlitsu_{\macmdidx}(x)\Rightarrow \left(
    \macclsc(\macfovo)\land\maclev_{\macmdidx}(\macfovo)\right)\right)$
    ensures that all domain elements in this $\mactlitsu_{\macmdidx}$
    represent literals occurring at level $\macmdidx$. Let
    $\macpvsul(\mactlitsu_{\macmdidx})\subseteq \mactlitsu_{\macmdidx}$
    be those domain elements in $\mactlitsu_{\macmdidx}$ that
    represent propositional variables. Similarly, let
    $\macbfsul(\mactlitsu_{\macmdidx})$ and
    $\macdfsul(\mactlitsu_{\macmdidx})$ be the domain elements in
    $\mactlitsu_{\macmdidx}$ representing literals of the form
    $\Box\maccl$ and $\Diamond\maccnf$ respectively. In our world
    $\macworldo$, set all propositional variables in
    $\macpvsul(\mactlitsu_{\macmdidx})$ to $\top$ and set all others to
    $\bot$. Now, $\macworldo$ satisfies all literals represented in
    $\macpvsul(\mactlitsu_{\macmdidx})$. We will later prove how to
    satisfy literals represented in $\macbfsul(\mactlitsu_{\macmdidx})$
    and $\macdfsul(\mactlitsu_{\macmdidx})$ in the world $\macworldo$.

    Now, assuming that all literals represented in
    $\mactlitsu_{\macmdidx}$ are satisfied at $\macworldo$, we will
    prove that $\macmodelo,\macworldo\models
    \maccnf(\macclsu_{\macmdidx})$. This part of the proof is similar
    to the base case. If $\macclel$ is any clause in
    $\macclsu_{\macmdidx}$, it satisfies
    $\macclsu_{\macmdidx}(\macclel)$ and hence either $\exists
    \maclitel\left( \mactlitsu_{\macmdidx}(\maclitel)\land
    \macedge(\macclel,\maclitel) \right)$ or $\exists \maclitel \left(
    \macclsc(\maclitel) \land \maclev_{\macmdidx}(\maclitel) \land
    \lnot\mactlitsu_{\macmdidx}(\maclitel) \land
    \macedgec(\macclel,\maclitel) \right)$ is true in
    $\macstruc(\macmfo)$. In the first case, a positively occurring
    literal at level $\macmdidx$ is in $\mactlitsu_{\macmdidx}$, and
    since all literals in $\mactlitsu_{\macmdidx}$ are satisfied at
    $\macworldo$, the clause represented by $\macclel$ is also
    satisfied at $\macworldo$. In the second case, a negatively
    occurring literal at level $\macmdidx$ is not in
    $\mactlitsu_{\macmdidx}$. Since only propositional variables can
    occur negatively in clauses, we can in fact conclude that a
    negatively occurring propositional variable is not in
    $\mactlitsu_{\macmdidx}$. Since all propositional variables not in
    $\mactlitsu_{\macmdidx}$ are set to $\bot$ in $\macworldo$, the
    clause represented by $\macclel$ is satisfied in $\macworldo$.

    \begin{Formula}[def:modSatMSODescLevi]
      \maclevsat{\macmdidx}(\macclsu_{\macmdidx})\macdef \exists \mactlitsu_{\macmdidx} 
      \>0\{
      \>1	\forall \macfovo\left(\mactlitsu_{\macmdidx}(x)\Rightarrow \left(
	  \macclsc(\macfovo)\land\maclev_{\macmdidx}(\macfovo)\right)\right)
      \>1 \land \forall \macclel \quad\macclsu_{\macmdidx}(\macclel) \Rightarrow
      \>1[
	\>2 \exists \maclitel\left( \mactlitsu_{\macmdidx}(\maclitel)\land
	    \macedge(\macclel,\maclitel) \right)
	\>2 \lor \exists \maclitel \left( \macclsc(\maclitel) \land
	    \maclev_{\macmdidx}(\maclitel) \land \lnot\mactlitsu_{\macmdidx}(\maclitel)
	    \land \macedgec(\macclel,\maclitel) \right)
      \>1]
      \>1 \land\exists \macfovo(\mactlitsu_{\macmdidx}(\macfovo)\land\macdmd(\macfovo))
	  \Rightarrow \forall \macfovt\left(\mactlitsu_{\macmdidx}
	  (\macfovt)\Rightarrow \lnot\macuc(\macfovt)\right)
      \>1 \land\forall\macfovo\left( (\mactlitsu_{\macmdidx}(\macfovo)\land
      \macdmd(\macfovo))\Rightarrow \lnot\macuc(\macfovo)\right)
      \>1 \land \exists\maccomits_{\macmdidx-1} 
      \>1 [
	  \>2 \forall\macclel\left(
	      \maccomits_{\macmdidx-1}(\macclel)\Leftrightarrow \exists\maclitel\left( 
	      \mactlitsu_{\macmdidx}(\maclitel)\land \macbox(\maclitel) \land
	      \macedge(\maclitel,\macclel)\right)\right)
	  \>2 \land \forall\maclitel\left( (\mactlitsu_{\macmdidx}(\maclitel)\land
	      \macdmd(\maclitel)) \right)\Rightarrow
	  \>2[
	      \>3 \exists\macdemands_{\macmdidx-1} \forall\macclel'
		  \left(\macdemands_{\macmdidx-1}(\macclel')\Leftrightarrow\left(
		  (\maccomits_{\macmdidx-1}(\macclel')) \lor
		  \macedge(\maclitel,\macclel')  \right)\right)
		  \>4 \land\maclevsat{\macmdidx-1}(\macdemands_{\macmdidx-1})
	  \>2]
      \>1]
      \>0\}
    \end{Formula}

    Now we will prove that literals represented in
    $\macbfsul(\mactlitsu_{\macmdidx})$ and
    $\macdfsul(\mactlitsu_{\macmdidx})$ can be satisfied in $\macworldo$
    by adding appropriate successor worlds. First note that since
    $\forall\macfovo\left( (\mactlitsu_{\macmdidx}(\macfovo)\land
    \macdmd(\macfovo))\Rightarrow \lnot\macuc(\macfovo)\right)$ is true
    in $\macstruc(\macmfo)$, no element $\macfovo$ in
    $\mactlitsu_{\macmdidx}$ represents a literal of the form
    $\Diamond\bot$ (since $\macuc$ is the unary relation containing all
    domain elements representing literals of the form $\Box\bot$ or
    $\Diamond\bot$). Second, note that since $\exists
    \macfovo(\mactlitsu_{\macmdidx}(\macfovo)\land\macdmd(\macfovo))
    \Rightarrow \forall \macfovt\left(\mactlitsu_{\macmdidx}
    (\macfovt)\Rightarrow \lnot\macuc(\macfovt)\right)$ is true in
    $\macstruc(\macmfo)$, if $\macdfsul(\mactlitsu_{\macmdidx})$ is not
    empty, then no element in $\macbfsul(\mactlitsu_{\macmdidx})$
    represents a literal of the form $\Box\bot$. Therefore, we can hope
    to add a new successor for each literal represented by some element
    $\maclitel$ in $\macdfsul(\mactlitsu_{\macmdidx})$, satisfying the
    $\maccnf$ formula in the literal represented in $\maclitel$ as well
    as all clauses in literals represented in
    $\macbfsul(\mactlitsu_{\macmdidx})$. Now we will prove that this can
    actually be done.

    Since $\maclevsat{\macmdidx}(\macclsu_{\macmdidx})$ is true in
    $\macstruc(\macmfo)$, there is a subset $\maccomits_{\macmdidx-1}$
    of domain elements satisfying the last four conditions of
    $\maclevsat{\macmdidx}(\macclsu_{\macmdidx})$. The condition
    $\forall\macclel\left(
    \maccomits_{\macmdidx-1}(\macclel)\Leftrightarrow
    \exists\maclitel\left( \mactlitsu_{\macmdidx}(\maclitel)\land
    \macbox(\maclitel) \land \macedge(\maclitel,\macclel)\right)\right)$
    ensures that $\maccomits_{\macmdidx-1}$ contains exactly those
    domain elements representing some clause $\maccl_{1}$ at level
    $\macmdidx-1$ such that $\Box\maccl_{1}$ is a literal in
    $\mactlitsu_{\macmdidx}$ (and hence $\Box\maccl_{1}$ is in
    $\macbfsul(\mactlitsu_{\macmdidx})$). An element $\maclitel$
    satisfies $\mactlitsu_{\macmdidx}(\maclitel)\land \macdmd(\maclitel)$
    iff $\maclitel\in \macdfsul(\mactlitsu_{\macmdidx})$. Hence, the
    condition $\forall\maclitel\left(
    (\mactlitsu_{\macmdidx}(\maclitel)\land \macdmd(\maclitel))
    \right)\Rightarrow[\cdots]$ ensures that the last two conditions of
    $\maclevsat{\macmdidx}(\macclsu_{\macmdidx})$ is true for every
    element $\maclitel$ in $\mactlitsu_{\macmdidx}$ representing a literal of the form
    $\Diamond\maccnf$ occurring at level $\macmdidx$. Consider any one
    such element $\maclitel$. The condition
    $\exists\macdemands_{\macmdidx-1} \forall\macclel'
    \left(\macdemands_{\macmdidx-1}(\macclel')\Leftrightarrow\left(
    (\maccomits_{\macmdidx-1}(\macclel')) \lor
    \macedge(\maclitel,\macclel')  \right)\right)$ ensures that
    $\macdemands_{\macmdidx-1}$ contains exactly those elements
    representing some clause $\maccl_{1}$ (occurring at level
    $\macmdidx-1$) such that $\Box\maccl_{1}$ is represented in
    $\macbfsul(\mactlitsu_{\macmdidx})$ or $\maccl_{1}$ occurs in the
    $\maccnf$ formula in the $\Diamond\maccnf$ literal represented by
    $\maclitel$. Since
    $\maclevsat{\macmdidx-1}(\macdemands_{\macmdidx-1})$ is true, we can
    apply the induction hypothesis and conclude that there is some Kripke
    model $\macmodelo'$ and a world $\macworldo'$ such that
    $\macmodelo',\macworldo'\models\maccnf(\macdemands_{\macmdidx-1})$.
    Now, $\macworldo'$ satisfies the $\maccnf$ formula in the
    $\Diamond\maccnf$ literal represented by $\maclitel$. For every
    literal of the form $\Box\maccl_{1}$ in
    $\macbfsul(\mactlitsu_{\macmdidx})$, $\macworldo'$ satisfies
    $\maccl_{1}$. Now, we add the Kripke model $\macmodelo'$ to
    $\macmodelo$ and make $\macworldo'$ a successor of $\macworldo$. We
    repeat this procedure for every element $\maclitel$ in
    $\macdfsul(\mactlitsu_{\macmdidx})$. Now, for every literal in
    $\mactlitsu_{\macmdidx}$ of the form $\Diamond\maccnf$, there is a
    successor of $\macworldo$ that satisfies the corresponding $\maccnf$
    formula (we have already proved that literals of the form
    $\Diamond\bot$ will not be present in $\mactlitsu_{\macmdidx}$). For
    every literal in $\mactlitsu_{\macmdidx}$ of the form $\Box\maccl$,
    all successors of $\macworldo$ will satisfy the corresponding
    $\maccl$ (we have already proved that if there is a literal of the
    form $\Box\bot$ in $\mactlitsu_{\macmdidx}$, then
    $\mactlitsu_{\macmdidx}$ will not have any literals of the form
    $\Diamond\maccnf$ and hence we will not add any successor worlds to
    $\macworldo$).

    Now we will prove the other direction of the induction step.
    Suppose $\macclsu_{\macmdidx}$ is a subset of domain elements
    representing clauses occurring at level $\macmdidx$ and that there
    is a Kripke model $\macmodelo$ and a world $\macworldo$ such that
    $\macmodelo,\macworldo\models \maccnf(\macclsu_{\macmdidx})$. We
    will prove that $\maclevsat{\macmdidx}(\macclsu_{\macmdidx})$ is
    true in $\macstruc(\macmfo)$. To begin with, we will choose
    $\mactlitsu_{\macmdidx}$ to be the set of precisely those domain
    elements that represent literals occurring at level $\macmdidx$ that
    are satisfied at $\macworldo$. If literals of the form $\Box\bot$
    occur at level $\macmdidx$, then they will also be included in
    $\mactlitsu_{\macmdidx}$ by definition if there are no successor
    worlds at $\macworldo$. Now we will prove that last eleven
    conditions of $\maclevsat{\macmdidx}(\macclsu_{\macmdidx})$ are true
    in $\macstruc(\macmfo)$. The condition $\forall
    \macfovo\left(\mactlitsu_{\macmdidx}(x)\Rightarrow \left(
    \macclsc(\macfovo)\land\maclev_{\macmdidx}(\macfovo)\right)\right)$
    is true since all elements $\macfovo$ in $\mactlitsu_{\macmdidx}$ are
    representing literals ($\macclsc(\macfovo)$) at level $\macmdidx$
    ($\maclev_{\macmdidx}(\macfovo)$). Next we will prove that the
    condition $\forall \macclel \macclsu_{\macmdidx}(\macclel)
    \Rightarrow [\dots]$ is true. Let $\macclel$ be some arbitrary
    element in $\macclsu_{\macmdidx}$. Since $\macclel$ represents a
    clause that is satisfied at the world $\macworldo$, there must be
    either a positively occurring literal that is satisfied at
    $\macworldo$ (and hence the domain element representing that literal
    will be in $\mactlitsu_{\macmdidx}$, thus implying that $\exists
    \maclitel\left( \mactlitsu_{\macmdidx}(\maclitel)\land
    \macedge(\macclel,\maclitel) \right)$ is true in
    $\macstruc(\macmfo)$) or there must be a negatively occurring
    literal that is not satisfied at $\macworldo$. In the latter case,
    since only propositional variables can occur negatively, we can in
    fact conclude that there is a negatively occurring propositional
    variable that is set to $\bot$ at $\macworldo$ (and hence not in
    $\mactlitsu_{\macmdidx}$), which implies that $\exists \maclitel
    \left( \macclsc(\maclitel) \land \maclev_{\macmdidx}(\maclitel)
    \land \lnot\mactlitsu_{\macmdidx}(\maclitel) \land
    \macedgec(\macclel,\maclitel) \right)$ is true in
    $\macstruc(\macmfo)$.

    Next we will prove that the condition $\forall\macfovo\left(
    (\mactlitsu_{\macmdidx}(\macfovo)\land \macdmd(\macfovo))\Rightarrow
    \lnot\macuc(\macfovo)\right)$ is true. If any element $\macfovo$ is
    in $\mactlitsu_{\macmdidx}$ ($\mactlitsu_{\macmdidx}(\macfovo)$) and
    represents a literal of the form $\Diamond\maccnf$
    ($\macdmd(\macfovo)$), then $\macfovo$ will not represent
    $\Diamond\bot$ ($\lnot\macuc(\macfovo)$) since $\macfovo$ represents
    a literal that is satisfied at $\macworldo$ and $\Diamond\bot$
    cannot be satisfied.

    Next we will prove that the condition $\exists
    \macfovo(\mactlitsu_{\macmdidx}(\macfovo)\land\macdmd(\macfovo))
    \Rightarrow \forall \macfovt\left(\mactlitsu_{\macmdidx}
    (\macfovt)\Rightarrow \lnot\macuc(\macfovt)\right)$ is true. Suppose
    there is some element $\macfovo$ in $\mactlitsu_{\macmdidx}$ that
    represents a literal of the form $\Diamond\maccnf$ ($\exists
    \macfovo(\mactlitsu_{\macmdidx}(\macfovo)\land\macdmd(\macfovo))$).
    Since the literal represented by $\macfovo$ is satisfied at
    $\macworldo$, there is a successor world in which the corresponding
    $\maccnf$ formula is satisfied. Since $\macworldo$ has successor worlds, it
    cannot satisfy $\Box\bot$ and hence none of the elements in
    $\mactlitsu_{\macmdidx}$ represent literals of the form $\Box\bot$
    ($\forall \macfovt\left(\mactlitsu_{\macmdidx}
    (\macfovt)\Rightarrow \lnot\macuc(\macfovt)\right)$).
    
    Finally, we will prove that the condition
    $\exists\maccomits_{\macmdidx-1}[\dots]$ is true. Let us first
    construct the set $\maccomits_{\macmdidx-1}$. For any element
    $\macclel$ ($\forall \macclel$), we will put $\macclel$ in
    $\maccomits_{\macmdidx-1}$ iff
    ($\maccomits_{\macmdidx-1}(\macclel)\Leftrightarrow$) there is some
    element $\maclitel$ ($\exists \maclitel$) in $\mactlitsu_{\macmdidx}$
    ($\mactlitsu_{\macmdidx}(\maclitel)$) representing a literal of the
    form $\Box\maccl$ ($\macbox(\maclitel)$) such that $\macclel$
    represents the corresponding $\maccl$
    ($\macedge(\maclitel,\macclel)$). The condition
    $\forall\macclel\left(
    \maccomits_{\macmdidx-1}(\macclel)\Leftrightarrow
    \exists\maclitel\left( \mactlitsu_{\macmdidx}(\maclitel)\land
    \macbox(\maclitel) \land \macedge(\maclitel,\macclel)\right)\right)$
    is true in $\macstruc(\macmfo)$ by construction. Next we will prove
    that the condition $\forall\maclitel\left(
    (\mactlitsu_{\macmdidx}(\maclitel)\land \macdmd(\maclitel))
    \right)\Rightarrow[\dots]$ is true. Suppose $\maclitel$ is any
    element in $\mactlitsu_{\macmdidx}$ representing a literal of the
    form $\Diamond\maccnf_{1}$ ($\macdmd(\maclitel)$). Since
    $\Diamond\maccnf_{1}$ is satisfied at the world $\macworldo$, there is a
    successor world $\macworldo'$ that satisfies the corresponding
    $\maccnf_{1}$ formula. Let $\macdemands_{\macmdidx-1}$ be the set
    that includes some element $\macclel'$ iff ($\forall\macclel'
    \macdemands_{\macmdidx-1}(\macclel')\Leftrightarrow$) $\macclel'$ is
    in the set $\maccomits_{\macmdidx-1}$ constructed above
    ($\maccomits_{\macmdidx-1}(\macclel')$) or it represents a $\maccl$
    occurring in the $\maccnf_{1}$ formula contained in the
    $\Diamond\maccnf_{1}$ literal represented by $\maclitel$
    ($\macedge(\maclitel,\macclel')$). $\macdemands_{\macmdidx-1}$
    satisfies the condition $\forall\macclel'
    \left(\macdemands_{\macmdidx-1}(\macclel')\Leftrightarrow\left(
    (\maccomits_{\macmdidx-1}(\macclel')) \lor
    \macedge(\maclitel,\macclel')  \right)\right)$ by construction. If
    $\macclel'$ is any element in $\macdemands_{\macmdidx-1}$, then it
    represents some $\maccl_{1}$ at level $\macmdidx-1$ such that
    $\maccl_{1}$ occurs in the $\maccnf_{1}$ formula contained in the
    $\Diamond\maccnf_{1}$ literal represented by $\maclitel$ or
    $\Box\maccl_{1}$ appears in $\mactlitsu_{\macmdidx}$. Hence, all
    clauses represented in $\macdemands_{\macmdidx-1}$ are satisfied at
    $\macworldo'$ (since $\macworldo'$ is a successor of $\macworldo$
    that satisfies $\maccnf_{1}$ and all literals of the form
    $\Box\maccl$ represented in $\mactlitsu_{\macmdidx}$ are satisfied
    at $\macworldo$). By the induction hypothesis, we conclude that
    $\maclevsat{\macmdidx-1}(\macdemands_{\macmdidx-1})$ is true in
    $\macstruc(\macmfo)$.\qed
  \end{proof}
\end{details}
\begin{theorem}
  \label{thm:modalSatFpt}
  Given a modal CNF formula $\macmfo$, there is a \macfpt\/ algorithm
  that checks if $\macmfo$ is satisfiable in general models, with
  treewidth of $\macstruc(\macmfo)$ and modal depth of $\macmfo$ as
  parameters.
\end{theorem}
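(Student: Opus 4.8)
The plan is to reduce satisfiability of $\macmfo$ to model checking a single MSO sentence on the relational structure $\macstruc(\macmfo)$ of \Defref{def:relStrucMF}, and then to invoke Courcelle's theorem. First I would note that $\macstruc(\macmfo)$ is computable from $\macmfo$ in time polynomial in $|\macmfo|$: it has one domain element per clause, per occurrence of a literal of the form $\Box\maccl$ or $\Diamond\maccnf$, and per propositional variable, and the relations $\macedge,\macedgec,\maccls,\macclsc,\macuc,\macbox,\macdmd$ together with $\maclev_{0},\dots,\maclev_{\macmd(\macmfo)}$ can all be read off the syntax tree of $\macmfo$ in linear time. Its size is polynomial (indeed linear) in $|\macmfo|$.

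Next I would observe that $\macmfo$ is literally $\maccnf(\maccls\cap\maclev_{\macmd(\macmfo)})$, the conjunction of all clauses at the top level $\macmd(\macmfo)$. By \Lemref{lem:levSatMSO}, therefore, $\macmfo$ is satisfiable iff $\maclevsat{\macmd(\macmfo)}(\maccls\cap\maclev_{\macmd(\macmfo)})$ is true in $\macstruc(\macmfo)$. Since $\maclevsat{\macmdidx}$ has a single free set variable, I would turn this into a sentence by writing
\[
  \macmsofo\macdef\exists\macclsu_{\macmd(\macmfo)}\Bigl[\forall\macclel\bigl(\macclsu_{\macmd(\macmfo)}(\macclel)\leftrightarrow(\maccls(\macclel)\land\maclev_{\macmd(\macmfo)}(\macclel))\bigr)\ \land\ \maclevsat{\macmd(\macmfo)}(\macclsu_{\macmd(\macmfo)})\Bigr],
\]
so that $\macstruc(\macmfo)\models\macmsofo$ iff $\macmfo$ is satisfiable. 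By the size bound in \Lemref{lem:levSatMSO}, $|\macmsofo|=\macoh(\macmd(\macmfo))$, and $\macmsofo$ depends only on $\macmd(\macmfo)$, not otherwise on $\macmfo$.

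Finally I would apply Courcelle's theorem (\cite[Theorem 11.37]{FG06}): checking $\macstruc(\macmfo)\models\macmsofo$ is \macfpt\/ when parameterized by the treewidth of $\macstruc(\macmfo)$ and $|\macmsofo|$, where the required optimal tree decomposition can itself be produced \macfpt\/ in the treewidth, as recalled in the preliminaries. Since $|\macmsofo|=\macoh(\macmd(\macmfo))$ and $\macstruc(\macmfo)$ is built in polynomial time, the resulting running time is $\macparamdepo(\mactw(\macstruc(\macmfo)),\macmd(\macmfo))\cdot|\macmfo|^{\macoh(1)}$ for a computable $\macparamdepo$, i.e.\ an \macfpt\/ algorithm with treewidth and modal depth as parameters.

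The heart of the argument has already been carried out: \Lemref{lem:levSatMSO} is where the level-by-level MSO encoding of the bottom-up satisfiability test, its correctness, and its linear-size bound live, and that is the only step I expect to be genuinely delicate. What is left here is bookkeeping, with one point deserving a brief remark: the vocabulary of $\macstruc(\macmfo)$ contains the relations $\maclev_{0},\dots,\maclev_{\macmd(\macmfo)}$ and so grows with $\macmd(\macmfo)$; this is harmless for Courcelle's theorem because it grows only by the parameter (and, if one prefers a fixed vocabulary, the $\maclev_{\macmdidx}$'s can be replaced by a single binary relation attaching each clause or literal to an element coding its level).
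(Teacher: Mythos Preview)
Your proposal is correct and matches the paper's own proof essentially line for line: you build $\macstruc(\macmfo)$ in polynomial time, invoke \Lemref{lem:levSatMSO} to reduce satisfiability of $\macmfo$ to truth of the very same MSO sentence $\exists\macclsu_{\macmd(\macmfo)}\,\forall\macclel\,(\macclsu_{\macmd(\macmfo)}(\macclel)\Leftrightarrow(\maccls(\macclel)\land\maclev_{\macmd(\macmfo)}(\macclel)))\land\maclevsat{\macmd(\macmfo)}(\macclsu_{\macmd(\macmfo)})$ the paper uses, note its size is linear in $\macmd(\macmfo)$, and apply Courcelle's theorem. Your additional remark on the vocabulary growing with the parameter is a harmless elaboration the paper leaves implicit.
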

\begin{proof}
  Given $\macmfo$, $\macstruc(\macmfo)$ can be constructed in
  polynomial time. To check that all clauses of $\macmfo$ at level
  $\macmd(\macmfo)$ are satisfiable in some world $\macworldo$ of some
  Kripke model $\macmodelo$, we check whether the formula $\exists
  \macclsu_{\macmd(\macmfo)} \forall
  \macclel(\macclsu_{\macmd(\macmfo)}(\macclel)\Leftrightarrow
  (\maccls(\macclel)\land \maclev_{\macmd(\macmfo)}(\macclel)))\land
  \maclevsat{\macmd(\macmfo)}(\macclsu_{\macmd(\macmfo)})$ is true in
  $\macstruc(\macmfo)$. By \lemref{lem:levSatMSO}, this is possible
  iff $\macmfo$ is satisfiable and length of the above formula is
  linear in $\macmd(\macmfo)$. An application of Courcelle's theorem
  will give us the \macfpt\/ algorithm.\qed
\end{proof}

\begin{details}
\subsection{On the relevance of treewidth for modal logic}
Informally, treewidth is a measure of how close a graph is to being a
tree. Given a modal logic formula $\macmfo$, the associated structure
$\macstruc(\macmfo)$ is very similar to the syntax tree of $\macmfo$.
The structure $\macstruc(\macmfo)$ is not a tree (i.e., it has cycles)
because a single propositional variable may be shared by many clauses
of the formula. Thus, if very few variables are shared across clauses,
$\macstruc(\macmfo)$ is very close to a tree, i.e.,
$\macstruc(\macmfo)$ will have small treewidth. In the example of
\figref{fig:exampleMFRelStruc}, if we replace $\macpvo$ and $\macpvh$
by $\macpvt$, the number of shared variables will increase. As can be
seen in \figref{fig:exampleMFRelStruc1}, the number of cycles will
also increase.
\begin{figure}[!htp]
  \begin{center}
    \includegraphics{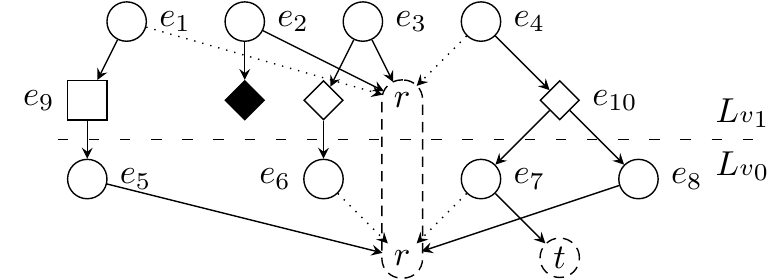}
  \end{center}
  \caption{Relational structure associated with the modal formula
  $\left\{ \lnot\macpvt \lor \Box \left[ \macpvt
  \right] \right\}\land \left\{ \macpvt\lor \Diamond\bot\right\}\land
  \left\{ \macpvt\lor \Diamond\left[ \lnot\macpvt\right] \right\}\land
  \left\{ \lnot\macpvt \lor \Diamond\left[ (\macpvf\lor \lnot\macpvt)
  \land (\macpvt) \right]\right\}$}
  \label{fig:exampleMFRelStruc1}
\end{figure}
For example, $\macdomel_{1}$ was not part of any cycle in
\figref{fig:exampleMFRelStruc} but forms a cycle with
$\macdomel_{9}, \macdomel_{5}$ and $\macpvt$ in
\figref{fig:exampleMFRelStruc1}.

Treewidth is a very fundamental concept and naturally arises in many
contexts, even in industrial applications like software verification
\cite{Thorup98}. Applications of treewidth related techniques to
propositional logic is extensively studied --- see \cite[Section
1.4]{FMR08} and references therein. Modal logic being a natural and
very useful extension of propositional logic, we might expect some
benefit by exploring applicability of treewidth related techniques to
modal logic.

The set of modal formulas with small treewidth is powerful enough to
encode complex formulas. In \cite[Lemma 1]{ALM09}, there is a
translation of propositional CNF formulae into equivalent modal
formulae. We can verify that the resulting modal formula always has a
small constant treewidth (the resulting modal formula uses only one
propositional variable).  Hence, the restriction of bounded treewidth
is not a severe one. Given a formula $\macmfo$, $\macstruc(\macmfo)$
can be computed in \macp\/.  Though computing treewidth of
$\macstruc(\macmfo)$ is \macnp\/-complete, it is \macfpt\/ when
parameterized by treewidth.
\end{details}

\section{Models with Euclidean property}
\label{sec:modSatEuclFrames}
In this section, we will investigate the parameterized complexity of
satisfiability in Euclidean models. The main observation leading to
the \macfpt\/ algorithm is the fact that if a modal formula is
satisfied in a Euclidean model, then it is satisfied in a rather
simple model. As proved in \cite{KP09}, if a modal formula is
satisfied at some world $\macworldo_{0}$ in some Euclidean model
$\macmodelo$, then it is satisfied in a model whose underlying frame
is of the form $(\macworlds\cup \{ \macworldo_{0}\},\macrelo)$ where
$\macworlds\times \macworlds \subseteq\macrelo$. Therefore, almost all
worlds are successors of almost all other worlds. If one world
satisfies a formula $\Box\maccl_{1}$, then almost all worlds satisfy
the formula $\maccl_{1}$ (and hence satisfy $\Box\maccl_{1}$ as well).
If one world satisfies a formula $\Diamond\maccnf_{1}$, then almost
all worlds satisfy $\Diamond\maccnf_{1}$ as well. Thus, most of the
worlds are very similar to each other and we can reason about them
using small MSO formulae.  This holds even if we add more properties
like reflexivity, transitivity etc.
\begin{conf}
  The technical details needed for the following result can be found in
  the full version.
\end{conf}
\begin{details}
  The rest of this section is devoted to proving the following
  theorem.
\end{details}
\begin{theorem}
  \label{thm:modSatEuclFPT}
  Let $\macmfo$ be a modal CNF formula. With treewidth of
  $\macstruc(\macmfo)$ as parameter, there is a \macfpt\/
  algorithm for checking whether $\macmfo$ is satisfiable in a
  Kripke model that satisfies Euclidean property and any combination
  of reflexivity, symmetry and transitivity.
\end{theorem}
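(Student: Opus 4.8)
\emph{The plan} is to express ``$\macmfo$ is satisfiable in the required class of models'' by a single MSO sentence over $\macstruc(\macmfo)$ of size $\Oh(1)$ --- crucially independent of $\macmd(\macmfo)$ --- and then invoke Courcelle's theorem \cite{C92}; this is why treewidth of $\macstruc(\macmfo)$ alone can serve as the parameter, in contrast with \thmref{thm:modalSatFpt}. The starting point is the canonical-model fact recalled above: by \cite{KP09}, if $\macmfo$ is satisfied at a world $\macworldo_{0}$ of a model meeting the frame conditions, then it is satisfied in a model whose frame is $(\macworlds\cup\{\macworldo_{0}\},\macrelo)$ with $\macworlds\times\macworlds\subseteq\macrelo$ and with $\macworldo_{0}$ accessing every world of $\macworlds$; for each of the finitely many combinations of additional conditions the remaining edges are determined (in the reflexive or symmetric variants $\macworldo_{0}$ may be taken inside $\macworlds$, otherwise it is a proper root). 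In every case the frame is: one distinguished world together with a single cluster $\macworlds$ in which every world accesses every world, i.e.\ an S5 block. So it suffices to build, for each such shape, a constant-size MSO sentence saying ``$\macmfo$ has a model of this shape'', and to take the (still constant-size) disjunction over the bounded list of shapes.

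\emph{Cluster collapse.} Fix one shape. The key observation is that for every subformula of the form $\Box\maccl$ or $\Diamond\maccnf$ in $\macmfo$ and every world $\macworldo\in\macworlds$, the truth of that subformula at $\macworldo$ does not depend on which world $\macworldo$ is: the successors of $\macworldo$ are exactly $\macworlds$, so $\Box\maccl$ holds throughout $\macworlds$ iff every world of $\macworlds$ satisfies $\maccl$, and $\Diamond\maccnf$ holds throughout $\macworlds$ iff some world of $\macworlds$ satisfies all clauses of $\maccnf$; the same applies to the level-$\macmd(\macmfo)$ modal literals evaluated at $\macworldo_{0}$, whose successors are again $\macworlds$. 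Hence one level-free ``cluster evaluation'' suffices: a subset $\macsovo$ of the domain elements representing propositional variables (those true at $\macworldo_{0}$) together with a set $\sigma$ of occurrences of $\Box$- and $\Diamond$-literals declared true. All the nested modal structure below the top level is absorbed into this single guess, so no level-indexed recursion of the kind used for $\maclevsat{\macmdidx}$ in \lemref{lem:levSatMSO} is needed.

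\emph{Reduction to propositional queries.} Given $(\macsovo,\sigma)$, let $\mathcal{B}=\{\maccl:\Box\maccl\in\sigma\}$ be the clauses forced to hold at every world of $\macworlds$. Delete from consideration every clause that already contains a $\sigma$-true modal literal, and delete from the surviving clauses every remaining modal literal (all of which are then $\sigma$-false); after this, the clauses of $\mathcal{B}$, and the clauses occurring in the body of any $\sigma$-true $\Diamond$-literal, are purely propositional. Then $(\macsovo,\sigma)$ comes from an actual model of the fixed shape iff: (1)~every level-$\macmd(\macmfo)$ clause is satisfied at $\macworldo_{0}$ --- by a literal of $\macsovo$, by a $\sigma$-true modal literal, or, in the degenerate case in which $\macworlds=\emptyset$ and hence $\macworldo_{0}$ has no successor, by any $\Box$-literal; and (2)~for every $\sigma$-true literal $\Diamond\maccnf$ the propositional clause set $\mathcal{B}\cup\{\text{clauses of }\maccnf\}$ has a satisfying assignment, that assignment being the propositional valuation of the cluster world witnessing $\Diamond\maccnf$. (Condition~(2) also forces $\mathcal{B}$ to be satisfiable whenever $\macworlds\neq\emptyset$.) Literals $\Box\bot$ are excluded from $\sigma$ in the non-degenerate disjuncts --- since every cluster world has a successor --- exactly as $\Box\bot$ is treated in \lemref{lem:levSatMSO}; the $\macworlds=\emptyset$ possibility is a separate, very simple disjunct, and in the single-cluster variants condition~(1) is instead that the level-$\macmd(\macmfo)$ clauses be satisfied by a valuation that also satisfies $\mathcal{B}$.

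\emph{The sentence, and the expected obstacle.} All of the above is a single MSO sentence over $\macstruc(\macmfo)$ of absolutely bounded length. The sets $\macsovo$ and $\sigma$, and $\mathcal{B}$ (which is first-order definable from $\sigma$ using $\macbox$ and $\macedge$), cost a fixed number of monadic quantifiers; ``clause $\macclel$ is satisfied by a valuation $\macsovt$ together with $\sigma$'' is the fixed formula $\varphi_{\mathrm{sat}}(\macclel,\macsovt)\;\macdef\;(\exists\maclitel\,(\macedge(\macclel,\maclitel)\wedge(\maclitel\in\macsovt\vee\maclitel\in\sigma)))\vee(\exists\maclitel\,(\macedgec(\macclel,\maclitel)\wedge\maclitel\notin\macsovt))$, reused throughout; and the one potentially unbounded ingredient, condition~(2), is written as $\forall\maclitel\,\big[(\macdmd(\maclitel)\wedge\maclitel\in\sigma)\Rightarrow\exists\macsovt\,\forall\macclel\,\big((\macclel\in\mathcal{B}\vee\macedge(\maclitel,\macclel))\Rightarrow\varphi_{\mathrm{sat}}(\macclel,\macsovt)\big)\big]$, that is, with a \emph{single} existential second-order quantifier placed inside the universal quantifier over $\Diamond$-literals, so its length is independent of how many $\Diamond$-literals $\macmfo$ contains. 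With this sentence of size $\Oh(1)$ and the treewidth of $\macstruc(\macmfo)$ as the parameter, Courcelle's theorem delivers the \macfpt\ algorithm. I expect the main obstacle to be precisely the uniformity exploited in this last step: one has to check carefully that the S5 collapse of the second paragraph faithfully replaces the entire level hierarchy by the flat guess $\sigma$ (so that unbounded modal depth causes no blow-up), that a single $\forall\exists$ clause genuinely certifies the unboundedly many cluster witnesses, and that the finitely many canonical frame shapes of \cite{KP09}, together with the two propositional reductions used above, are both faithful and exhaustive.
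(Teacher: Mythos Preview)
Your overall strategy coincides with the paper's: exploit the cluster structure of Euclidean models to collapse the level hierarchy into a single global guess of true modal literals, then reduce to propositional checks expressed by a fixed-size MSO formula with one $\forall\maclitel\,\exists\macsovt$ block. The paper implements this via an inner formula $\macmdglsat(\macclsu_{1},\macglclsu)$ (\lemref{lem:mdGlSatMSO}) and a wrapper $\macmdglsat(\macclsu_{0})$ (\lemref{lem:mdEuclSat}), with small variants $\macmdglsat'$, $\macmdglsat''$ for the other frame combinations.

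There is, however, a genuine gap in the pure-Euclidean case. You assert that in the canonical model the root $\macworldo_{0}$ accesses every world of the cluster $\macworlds$, so that modal literals have the same truth value at $\macworldo_{0}$ as at cluster worlds and a single set $\sigma$ suffices. This holds when transitivity is added (then $\macworldo_{0}$ reaches all of $\macworlds$) or when reflexivity or symmetry is added (then effectively $\macworldo_{0}\in\macworlds$), but it fails for Euclidean alone: the successors of $\macworldo_{0}$ may be a proper subset of $\macworlds$. Concretely, take $\macmfo=\Box\macpvo\wedge\Diamond\Diamond\neg\macpvo$ and the Euclidean frame with $\macworldo_{0}\macrelo\macworldo_{1}$ and $\{\macworldo_{1},\macworldo_{2}\}$ a full cluster, with $\macpvo$ true at $\macworldo_{1}$ and false at $\macworldo_{2}$. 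Here $\Box\macpvo$ is true at $\macworldo_{0}$ but false at every cluster world. With a single $\sigma$: if $\Box\macpvo\in\sigma$ then $\macpvo\in\mathcal{B}$ and condition~(2) for the inner $\Diamond\neg\macpvo\in\sigma$ demands a world satisfying both $\macpvo$ and $\neg\macpvo$; if $\Box\macpvo\notin\sigma$ then condition~(1) leaves the top-level clause $\Box\macpvo$ unsatisfied. Your sentence therefore rejects this satisfiable $\macmfo$.

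The paper's fix is precisely to keep two layers: a set $\mactlitsu_{0}$ of literals true at $\macworldo_{0}$ with its own box-commitment set $\maccomits_{0}$, and a separately guessed global set $\macglclsu$ of clauses true throughout the cluster. Direct successors of $\macworldo_{0}$ must satisfy $\maccomits_{0}\cup\macglclsu$, while deeper cluster worlds need only satisfy $\macglclsu$; this is the call $\macmdglsat(\macdemands_{0}\cup\maccomits_{0},\macglclsu)$ in \eqref{def:mdEuclSatInfDesc}. In your language you need two $\sigma$'s --- one for $\macworldo_{0}$ and one for the cluster --- in the proper-root pure-Euclidean shape; with that separation your argument goes through and is the paper's.
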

\begin{minorext}
  We will drop all unary relations $(\maclev_{\macmdidx})_{0\le
  \macmdidx\le \macmd(\macmfo)}$. Instead, we will have one unary
  relation $\macpvs$ containing all domain elements representing
  propositional variables and one unary relation $\machle$ containing
  all other domain elements.  This will not change the treewidth of
  $\macstruc(\macmfo)$. 
  To make the presentation easier to follow, we will use informal
  description of MSO formulae. Let $\macclsu_{1}$ and $\macglclsu$ be sets
  of clauses (we will see later that clauses
  in $\macglclsu$ will be satisfied in almost all worlds of a model).
  The following MSO formula checks if all clauses in $\macclsu_{1}$
  are satisfiable in a model in which, all worlds satisfy all clauses
  in $\macglclsu$.
  \begin{Formula}[def:mdGlSatInfDesc]
    \macmdglsat(\macclsu_{1},\macglclsu) \macdef \exists \mactlitsu
    \subseteq (\macclsc \cap \macdmd):
    \exists \mactlitsu_{0}\subseteq \macpvs:
    \>1 \macgllitsu=\left\{ \maclitel\in (\macclsc\cap \macbox) \mid \exists
    \macclel\in \macglclsu \land \macedge(\maclitel,\macclel)\right\}
    \>1 \land \forall \macclel\in (\macclsu_{1}\cup \macglclsu):
      \>2 \exists \maclitel\in (\macgllitsu\cup \mactlitsu \cup
      \mactlitsu_{0}): \macedge(\macclel,\maclitel)
      \>2 \lor \exists \maclitel\in \macpvs\setminus
      \mactlitsu_{0}: \macedgec(\macclel,\maclitel)
    \>1 \land \forall \maclitel\in \mactlitsu: \exists
    \mactlitsu_{1} \subseteq \macpvs:
      \>2 \macdemands=\{ \macclel'\in \maccls\mid
      \macedge(\maclitel,\macclel')\} \Rightarrow
      \>2 \forall \macclel\in (\macdemands\cup \macglclsu):
        \>3 \exists \maclitel'\in (\mactlitsu\cup \macgllitsu\cup
	\mactlitsu_{1}): \macedge(\macclel,\maclitel')
	\>3 \lor \exists \maclitel'\in \macpvs\setminus
	\mactlitsu_{1}: \macedgec(\macclel,\maclitel')
  \end{Formula}
  \begin{lemma}
    \label{lem:mdGlSatMSO}
    Let $\macclsu_{1}$ and $\macglclsu$ be sets of clauses occurring in a
    modal CNF formula $\macmfo$. If $\macmdglsat(\macclsu_{1},\macglclsu)$
    is true in $\macstruc(\macmfo)$, then there is a Kripke model
    $\macmodelo$ and a world $\macworldo$ in it such that:
    \begin{enumerate}
      \item $\macworldo$ satisfies all clauses in $\macclsu_{1}$,
      \item all worlds in $\macmodelo$ satisfy all clauses in
	$\macglclsu$,
      \item the accessibility relation $\macrelo$ in $\macmodelo$ is the
	equivalence relation on the set of all worlds in $\macmodelo$
	and
      \item if a new world satisfying all clauses in $\macglclsu$ is
	added to $\macmodelo$ (making it accessible from all existing
	worlds of $\macmodelo$), $\macworldo$ will still satisfy all
	clauses in $\macclsu_{1}$.
    \end{enumerate}   
  \end{lemma}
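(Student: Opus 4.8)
The plan is to read the existential witnesses straight off the (informally written) MSO formula $\macmdglsat(\macclsu_1,\macglclsu)$ and use them to hand-build the model. First I would fix: a set $\mactlitsu\subseteq\macclsc\cap\macdmd$ of diamond literals and a set $\mactlitsu_0\subseteq\macpvs$ of propositional variables witnessing the two leading existentials; the induced set $\macgllitsu=\{\maclitel\in\macclsc\cap\macbox\mid\exists\macclel\in\macglclsu:\macedge(\maclitel,\macclel)\}$ of ``global'' box literals; and, for each $\maclitel\in\mactlitsu$, a further witness $\mactlitsu_1^{\maclitel}\subseteq\macpvs$, writing $\macdemands_{\maclitel}$ for the set of clauses occurring in the $\maccnf$ inside $\maclitel$. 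I then let $\macmodelo$ have world set $\{\macworldo\}\cup\{\macworldt_{\maclitel}\mid\maclitel\in\mactlitsu\}$, accessibility relation the complete relation on this set --- which is the one-class equivalence relation, so property (3) holds and reflexivity, symmetry and transitivity all hold for free --- and the valuation setting true, at $\macworldo$, exactly the variables in $\mactlitsu_0$ and, at each $\macworldt_{\maclitel}$, exactly those in $\mactlitsu_1^{\maclitel}$. The point of making every world a successor of every world is that then a diamond literal $\Diamond\maccnf\in\mactlitsu$ is satisfied at every world as soon as $\macworldt_{\Diamond\maccnf}\models\maccnf$, and a box literal $\Box\maccl\in\macgllitsu$ is satisfied at every world as soon as every world satisfies $\maccl$; this is what lets a single uniform argument work.

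The core step is a simultaneous induction on the modal level $\macmdidx$, from $0$ up to $\macmd(\macmfo)$, proving: (A) every clause of $\macglclsu$ at level $\le\macmdidx$ holds at every world; (B) every clause of $\macclsu_1$ at level $\le\macmdidx$ holds at $\macworldo$; (C) for every $\maclitel\in\mactlitsu$, every clause of $\macdemands_{\maclitel}$ at level $\le\macmdidx$ holds at $\macworldt_{\maclitel}$. In the base case the clauses are propositional, and the matrix of $\macmdglsat$ supplies, for each, either a positively occurring variable in that world's propositional witness set ($\mactlitsu_0$ at $\macworldo$, $\mactlitsu_1^{\maclitel}$ at $\macworldt_{\maclitel}$), true there by the valuation, or a negatively occurring variable outside it, false there. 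In the inductive step I fix a level-$\macmdidx$ clause $\macclel$ in one of the three families, together with the world at which it must hold; $\macmdglsat$ selects a witness that is either a positively occurring literal in $\macgllitsu\cup\mactlitsu$, or a positively occurring variable in that world's witness set, or a negatively occurring variable outside it. The variable cases finish exactly as in the base case. If the witness is a box literal $\Box\maccl\in\macgllitsu$, then $\maccl\in\macglclsu$ and $\maccl$ lies at level $\macmdidx-1$, so by (A) at level $\macmdidx-1$ every world --- in particular every successor --- satisfies $\maccl$, hence $\Box\maccl$ holds. If it is a diamond literal $\Diamond\maccnf\in\mactlitsu$, its clauses lie at level $\macmdidx-1$, so by (C) at level $\macmdidx-1$ the world $\macworldt_{\Diamond\maccnf}$ satisfies $\maccnf$, and since it is a successor, the diamond holds. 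Taking $\macmdidx=\macmd(\macmfo)$ gives (1) and (2).

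For (4) I would add a world $\macworldo^{*}$ that satisfies every clause of $\macglclsu$ and is accessible from every old world, and check that the induction above still goes through in the enlarged model. The only new obligation at each step concerns $\macworldo^{*}$ itself: it satisfies every clause of $\macglclsu$ by hypothesis, and as a successor it satisfies the inner clause of every box literal of $\macgllitsu$ for the very same reason, while the extra successor can only help diamond literals. Hence (B) still holds at the top level, i.e.\ $\macworldo$ still satisfies every clause of $\macclsu_1$.

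I expect the main obstacle to be the apparent circularity of the uniform argument: the literals of $\macgllitsu$ and of $\mactlitsu$ that $\macmdglsat$ is allowed to use to satisfy clauses at \emph{every} world are exactly the ones whose satisfaction depends, in turn, on clauses --- those of $\macglclsu$ and of the $\macdemands_{\maclitel}$ --- holding at every world. Stratifying all three invariants (A), (B), (C) by modal level, which is legitimate because $\Box$ and $\Diamond$ strictly lower the level, is precisely what breaks the circle. Literals of the form $\Box\bot$ need only the routine remark that, having no successor clause, they never appear among the literals $\macmdglsat$ selects, so the constructed model is indifferent to them (they are false in it, but no selected clause relies on them); $\Diamond\bot$ does not occur, by the paper's standing convention on where $\bot$ may appear.
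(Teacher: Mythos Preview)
Your proposal is correct and follows essentially the same approach as the paper: build one world for $\macworldo$ and one world per diamond literal in $\mactlitsu$, take the complete accessibility relation, and run an induction on the modal depth of clauses showing that clauses of $\macclsu_1\cup\macglclsu$ hold at $\macworldo$ and clauses of $\macdemands_{\maclitel}\cup\macglclsu$ hold at $\macworldt_{\maclitel}$. The only organisational difference is that the paper weaves property~(4) into the box case of the main induction (noting there that any added $\macglclsu$-satisfying world keeps the $\macgllitsu$ literals true), whereas you prove (1)--(3) first and revisit (4) afterwards; both are fine, and your separation is arguably cleaner. A terminological nit: the paper reserves ``level'' for the top-down numbering via the $\maclev_{\macmdidx}$ relations and in this proof inducts on the \emph{modal depth} of a clause, so you may want to align your wording.
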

  \begin{proof}
    Suppose $\macmdglsat(\macclsu_{1},\macglclsu)$ is true in
    $\macstruc(\macmfo)$. We will build a Kripke model $\macmodelo$
    satisfying the required properties. To begin with, there must be a
    set $\mactlitsu$ of literals of the form $\Diamond\maccnf$ and
    a set $\mactlitsu_{0}$ of propositional variables such that
    the rest of the formula $\macmdglsat(\macclsu_{1},\macglclsu)$ is true in
    $\macstruc(\macmfo)$. Let $\macgllitsu=\left\{ \maclitel\in
    (\macclsc\cap \macbox) \mid \exists \macclel\in \macglclsu \land
    \macedge(\maclitel,\macclel)\right\}$ be the set of literals of the
    form $\Box\maccl$ such that the corresponding clause is in
    $\macglclsu$. We will start with one world $\macworldo$ in
    our model in which precisely those propositional variables are set
    to $\top$ that are in the set $\mactlitsu_{0}$. We will then add
    exactly one world $\macworldo_{\maclitidx}$ for each literal
    $\maclitel_{\maclitidx}$ in $\mactlitsu$. For any literal
    $\maclitel_{\maclitidx}$ in $\mactlitsu$, there will be a subset
    $\mactlitsu_{1}$ of propositional variables such that last four
    conditions of $\macmdglsat(\macclsu_{1},\macglclsu)$ are true in
    $\macstruc(\macmfo)$. In the world $\macworldo_{\maclitidx}$, we will
    set precisely those propositional variables to $\top$ that are in
    the set $\mactlitsu_{1}$ corresponding to $\maclitel_{\maclitidx}$.
    Our model consists of the above worlds and the accessibility
    relation $\macrelo$ is the equivalence relation on the set of all
    worlds. For any literal $\maclitel_{\maclitidx}\in \mactlitsu$
    (which is of the form $\Diamond\maccnf$), let
    $\macdemands_{\maclitidx}=\{ \macclel'\in \maccls\mid
    \macedge(\maclitel_{\maclitidx},\macclel')\}$ be the set of clauses
    that make up the corresponding $\maccnf$ formula. By induction on
    modal depth of any clause $\macclel$, we will prove that if
    $\macclel\in \macclsu_{1} \cup \macglclsu$, then $\macclel$ is satisfied
    in $\macworldo$ and if $\macclel\in \macdemands_{\maclitidx}\cup
    \macglclsu$, then $\macclel$ is satisfied in
    $\macworldo_{\maclitidx}$.

    In the base case, modal depth of $\macclel$ is $0$. We will first
    prove that if $\macclel\in \macclsu_{1} \cup \macglclsu$, then
    $\macclel$ is satisfied in $\macworldo$. Suppose $\exists
    \maclitel\in (\macgllitsu\cup \mactlitsu \cup \mactlitsu_{0}):
    \macedge(\macclel,\maclitel)$ is true in $\macstruc(\macmfo)$. If
    $\maclitel\in \macgllitsu\cup \mactlitsu$, then $\maclitel$ will
    have modal depth at least $1$ (because it is of the form
    $\Box\maccl$ or $\Diamond\maccnf$) and hence modal depth of
    $\macclel$ (which contains $\maclitel$ as a sub-formula) will also
    be more than $1$. Hence, $\maclitel\in \mactlitsu_{0}$. This means
    that $\maclitel$ is a propositional variable set to $\top$ in
    $\macworldo$ that occurs positively in $\macclel$ and hence
    $\macclel$ is satisfied in $\macworldo$. If $\exists \maclitel\in
    \macpvs\setminus \mactlitsu_{0}: \macedgec(\macclel,\maclitel)$ is
    true in $\macstruc(\macmfo)$, then there is a propositional variable
    set to $\bot$ in $\macworldo$ that occurs negatively in $\macclel$
    and hence, $\macclel$ is satisfied in $\macworldo$. Now, we will
    take up the case of $\macclel\in \macdemands_{\maclitidx}\cup
    \macglclsu$. In the formula $\macmdglsat(\macclsu_{1},\macglclsu)$,
    suppose $\exists \maclitel'\in (\mactlitsu\cup \macgllitsu\cup
    \mactlitsu_{1}): \macedge(\macclel,\maclitel')$ is true in
    $\macstruc(\macmfo)$. As before, $\maclitel'$ has be to in
    $\mactlitsu_{1}$, which means that there is a propositional variable
    set to $\top$ at $\macworldo_{\maclitidx}$ that occurs positively in
    $\macclel$. Hence, $\macclel$ is satisfied in
    $\macworldo_{\maclitidx}$. If on the other hand, $\exists
    \maclitel'\in \macpvs\setminus \mactlitsu_{1}:
    \macedgec(\macclel,\maclitel')$ is true, then there is a
    propositional variable set to $\bot$ at $\macworldo_{\maclitidx}$
    that occurs negatively in $\macclel$. Hence, $\macclel$ is satisfied
    in $\macworldo_{\maclitidx}$.

    For the induction step, suppose $\macclel\in (\macclsu_{1}\cup
    \macglclsu)$. Suppose that in $\macstruc(\macmfo)$, the formula
    $\exists \maclitel\in (\macgllitsu\cup \mactlitsu \cup
    \mactlitsu_{0}): \macedge(\macclel,\maclitel)$ is true. If
    $\maclitel\in \macgllitsu$, then it is of the form $\Box\maccl$ such
    that the corresponding clause (of modal depth lower than $\macclel$)
    is in $\macglclsu$. By the induction hypothesis, all worlds in
    $\macmodelo$ will satisfy all clauses in $\macglclsu$ of modal depth
    less than $\macclel$ (and by condition 4 of the lemma, all new
    worlds added will also satisfy all clauses in $\macglclsu$) and
    hence $\Box\maccl$ is satisfied in $\macworldo$ and hence
    $\macclel$ is satisfied in $\macworldo$. If $\maclitel\in
    \mactlitsu$, then $\maclitel$ is a literal of the form
    $\Diamond\maccnf$ such that there is a world
    $\macworldo_{\maclitidx}$ in $\macmodelo$ added to satisfy the
    corresponding $\maccnf$ formula. All clauses in this $\maccnf$
    formula (which form the set $\macdemands_{\maclitidx}$) have modal
    depth less than $\macclel$ and by the induction hypothesis, they are
    all satisfied at $\macworldo_{\maclitidx}$.  Hence,
    $\macworldo_{\maclitidx}$ satisfies the corresponding $\maccnf$
    formula, and hence $\macworldo$ satisfies the corresponding
    $\Diamond\maccnf$ formula (since $\macworldo_{\maclitidx}$ is a
    successor of $\macworldo$) and hence $\macclel$ is satisfied at
    $\macworldo$. If $\maclitel\in \mactlitsu_{0}$, then $\maclitel$
    is a propositional variable set to $\top$ in $\macworldo$ and that
    occurs positively in $\macclel$.  Hence, $\macclel$ is satisfied
    in $\macworldo$. On the other hand, if $\exists \maclitel\in
    \macpvs\setminus \mactlitsu_{0}: \macedgec(\macclel,\maclitel)$ is
    true in $\macstruc(\macmfo)$, then $\maclitel$ is a propositional
    variable set to $\bot$ in $\macworldo$ and that occurs negatively
    in $\macclel$. Hence, in this case also, $\macclel$ is satisfied
    in $\macworldo$.

    Finally, for the induction step, suppose $\macclel\in
    \macdemands_{\maclitidx}\cup \macglclsu$. Suppose that in
    $\macstruc(\macmfo)$, the formula $\exists \maclitel'\in
    (\mactlitsu \cup \macgllitsu\cup \mactlitsu_{1}):
    \macedge(\macclel,\maclitel')$ is true. If $\maclitel'\in
    \macgllitsu$, then it is of the form $\Box\maccl$ such that the
    corresponding clause (of modal depth lower than $\macclel$) is in
    $\macglclsu$. By the induction hypothesis, all worlds in $\macmodelo$
    will satisfy all clauses in $\macglclsu$ of modal depth less than
    $\macclel$ (and by condition 4 of the lemma, all new worlds added
    will also satisfy all clauses in $\macglclsu$) and hence
    $\Box\maccl$ is satisfied in $\macworldo_{\maclitidx}$ and hence
    $\macclel$ is satisfied in $\macworldo_{\maclitidx}$. If
    $\maclitel'\in \mactlitsu$, then $\maclitel'$ is a literal of the
    form $\Diamond\maccnf$ such that there is a world
    $\macworldo_{\maclitidx'}$ in $\macmodelo$ added to satisfy the
    corresponding $\maccnf$ formula. All clauses in this $\maccnf$
    formula (which form the set $\macdemands_{\maclitidx'}$) have
    modal depth less than $\macclel$ and by the induction hypothesis, they
    are all satisfied at $\macworldo_{\maclitidx'}$.  Hence,
    $\macworldo_{\maclitidx'}$ satisfies the corresponding $\maccnf$
    formula, and hence $\macworldo_{\maclitidx}$ satisfies the
    corresponding $\Diamond\maccnf$ formula (since
    $\macworldo_{\maclitidx'}$ is a successor of
    $\macworldo_{\maclitidx}$) and hence $\macclel$ is satisfied at
    $\macworldo_{\maclitidx}$. If $\maclitel'\in \mactlitsu_{1}$, then
    $\maclitel'$ is a propositional variable set to $\top$ in
    $\macworldo_{\maclitidx}$ and that occurs positively in
    $\macclel$.  Hence, $\macclel$ is satisfied in
    $\macworldo_{\maclitidx}$. On the other hand, if $\exists
    \maclitel'\in \macpvs\setminus \mactlitsu_{1}:
    \macedgec(\macclel,\maclitel')$ is true in $\macstruc(\macmfo)$,
    then $\maclitel'$ is a propositional variable set to $\bot$ in
    $\macworldo_{\maclitidx}$ and that occurs negatively in
    $\macclel$. Hence, in this case also, $\macclel$ is satisfied in
    $\macworldo_{\maclitidx}$. This completes the induction step and
    hence the proof.\qed
  \end{proof}

  The following formula makes use of
  $\macmdglsat(\macclsu_{1},\macglclsu)$ to check if a set of clauses
  $\macclsu_{0}$ is satisfiable in an Euclidean model.
  \begin{Formula}[def:mdEuclSatInfDesc]
    \macmdglsat(\macclsu_{0}) \macdef 
    \exists \mactlitsu_{0}\subseteq\macclsc: \forall \maclitel\in
    \mactlitsu_{0}: \exists \macclel\in \macclsu_{0}:
    \macedge(\macclel,\maclitel)
      \>1 \land \forall \macclel\in \macclsu_{0}:
      \>2 \exists \maclitel\in \mactlitsu_{0}:
      \macedge(\macclel,\maclitel)
      \>2 \lor \exists \maclitel\in \macpvs\setminus
      \mactlitsu_{0}: \macedgec(\macclel,\maclitel)
      \>1 \land \maccomits_{0}=\{\macclel\in \maccls\mid \exists
      \maclitel\in (\mactlitsu_{0}\cap \macbox)\land
      \macedge(\maclitel,\macclel)\} \Rightarrow
      \>2 \exists \macglclsu\subseteq (\maccls): \forall
      \macclel\in \macglclsu: \exists \maclitel\in (\macclsc\cap
      \macbox): \macedge(\maclitel,\macclel)
      \>2 \land \forall \maclitel\in (\mactlitsu_{0}\cap \macdmd):
      \>3 \macdemands_{0}=\{\macclel\in \maccls\mid
      \macedge(\maclitel,\macclel)\}\Rightarrow
      \>3 \macmdglsat(\macdemands_{0}\cup
      \maccomits_{0},\macglclsu)
  \end{Formula}
  \begin{lemma}
    \label{lem:mdEuclSat}
    Let $\macclsu_{0}$ be a set of clauses occurring in a modal CNF
    formula $\macmfo$. $\maccnf(\macclsu_{0})$ is satisfiable at a
    world $\macworldo$ in an Euclidean model $\macmodelo$ in which
    $\macworldo$ is not its own successor iff
    $\macmdglsat(\macclsu_{0})$ is true in $\macstruc(\macmfo)$.
  \end{lemma}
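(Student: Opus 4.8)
The statement is an equivalence, and the plan for both directions is to route everything through the normal form for Euclidean models recalled at the start of this section: if $\macmfo$ is satisfied at a world that is not its own successor, then it is satisfied in a model of ``root-plus-cluster'' shape, i.e.\ one whose frame is $(\macworlds\cup\{\macworldo\},\macrelo)$ with $\macworldo\notin\macworlds$, $\macworlds\times\macworlds\subseteq\macrelo$, and every world of $\macworlds$ a successor of $\macworldo$. We use throughout that such a cluster is \emph{flat}: every world of $\macworlds$ has successor set exactly $\macworlds$, so a literal $\Box\maccl$ or $\Diamond\maccnf$ holds at one world of $\macworlds$ iff it holds at every world of $\macworlds$, and only the propositional valuation varies across $\macworlds$. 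This is precisely the configuration described by the bounded-size, non-recursive formula $\macmdglsat(\macclsu_{1},\macglclsu)$ of formula~\eqref{def:mdGlSatInfDesc} --- a flat cluster all of whose worlds satisfy $\macglclsu$ and one of which satisfies $\macclsu_{1}$ --- while $\macmdglsat(\macclsu_{0})$ of formula~\eqref{def:mdEuclSatInfDesc} glues a root onto such a cluster. \lemref{lem:mdGlSatMSO} already supplies one half of the correctness of the two-argument version, and the job here is to lift it to the one-argument version and to prove the reverse implication.

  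\emph{($\Leftarrow$).} Suppose $\macmdglsat(\macclsu_{0})$ holds in $\macstruc(\macmfo)$. The plan is to unpack its witnesses: a set $\mactlitsu_{0}$ of literals (to be made true at the root), the induced set $\maccomits_{0}$ of clauses forced on all successors of the root by the $\Box\maccl$-literals in $\mactlitsu_{0}$, and a set $\macglclsu$ of clauses --- each wrapped by some $\Box\maccl$-literal --- such that for every $\Diamond\maccnf$-literal $\maclitel\in\mactlitsu_{0}$, with demand set $\macdemands_{0}$, the formula $\macmdglsat(\macdemands_{0}\cup\maccomits_{0},\macglclsu)$ holds. For each such $\maclitel$ I would apply \lemref{lem:mdGlSatMSO} to obtain a Kripke model $\macmodelo_{\maclitel}$ with a distinguished world $\macworldt_{\maclitel}$ satisfying $\macdemands_{0}\cup\maccomits_{0}$, all of whose worlds satisfy $\macglclsu$, whose accessibility relation is an equivalence, and which stays correct when $\macglclsu$-worlds are made accessible from all its worlds. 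Then build $\macmodelo$: take a fresh root $\macworldo$ with the propositional variables of $\mactlitsu_{0}$ true and all others false, take the disjoint union of the worlds of the models $\macmodelo_{\maclitel}$, add the edges $\macworldo\macrelo\macworldt_{\maclitel}$, and close under the Euclidean rule. One checks that this closure only adds edges inside the union $\macworlds$ of the world sets of the $\macmodelo_{\maclitel}$ (it turns $\macworlds$ into one flat cluster), and in particular the root acquires neither a predecessor nor a self-loop, so $\macmodelo$ is Euclidean and $\macworldo$ is not its own successor. By condition~4 of \lemref{lem:mdGlSatMSO} each $\macworldt_{\maclitel}$ still satisfies $\macdemands_{0}\cup\maccomits_{0}$ in the merged cluster; a routine induction on modal depth (which one may fold into \lemref{lem:mdGlSatMSO} as a strengthening of its fourth clause, using that a clause of $\macglclsu$ satisfied via a $\Box\maccl$-literal has $\maccl\in\macglclsu$) shows every world of $\macworlds$ still satisfies $\macglclsu$. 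Since the successors of $\macworldo$ are exactly the $\macworldt_{\maclitel}$, which satisfy $\maccomits_{0}$, every $\Box\maccl$-literal of $\mactlitsu_{0}$ is satisfied at $\macworldo$; every $\Diamond\maccnf$-literal of $\mactlitsu_{0}$ is witnessed by the corresponding $\macworldt_{\maclitel}$; hence all literals of $\mactlitsu_{0}$ hold at $\macworldo$, and the clause-satisfaction conjunct of formula~\eqref{def:mdEuclSatInfDesc} then makes every clause of $\macclsu_{0}$ true at $\macworldo$, i.e.\ $\macmodelo,\macworldo\models\maccnf(\macclsu_{0})$.

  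\emph{($\Rightarrow$).} Conversely, let $\maccnf(\macclsu_{0})$ be satisfied at $\macworldo$ in a Euclidean model in which $\macworldo$ is not its own successor; by the normal form we may assume the root-plus-flat-cluster shape. I would take $\mactlitsu_{0}$ to be the set of top-level literals of the clauses of $\macclsu_{0}$ that are true at $\macworldo$; then the first two conjuncts of formula~\eqref{def:mdEuclSatInfDesc} hold, the first by construction and the second because every clause of $\macclsu_{0}$ is true at $\macworldo$. Let $\maccomits_{0}$ be the induced commitments (true throughout $\macworlds$), and pick $\macglclsu$ to be the set of clauses of $\macmfo$ that are true at \emph{every} world of $\macworlds$ and are wrapped by some $\Box\maccl$-literal, so that $\maccomits_{0}\subseteq\macglclsu$ and the side condition on $\macglclsu$ holds. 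It remains to verify $\macmdglsat(\macdemands_{0}\cup\maccomits_{0},\macglclsu)$ for every $\Diamond\maccnf$-literal $\maclitel\in\mactlitsu_{0}$ with demand set $\macdemands_{0}$. Since $\maclitel$ holds at $\macworldo$, some $\macworldt\in\macworlds$ satisfies $\maccnf$, hence $\macdemands_{0}$, and also $\maccomits_{0}$ and $\macglclsu$. Then read off the witnesses for formula~\eqref{def:mdGlSatInfDesc} from $\macworldt$: $\mactlitsu$ the $\Diamond\maccnf$-literals true at $\macworldt$, the inner $\mactlitsu_{0}$ the propositional variables true at $\macworldt$, and for each further literal in $\mactlitsu$ the propositional variables true at the cluster world witnessing its demand. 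Flatness is what makes this go through: a $\Box\maccl$-literal true at a cluster world has $\maccl$ true throughout $\macworlds$ and $\Box$-wrapped, hence $\maccl\in\macglclsu$ and the literal lies in the set $\macgllitsu$ of formula~\eqref{def:mdGlSatInfDesc}; a $\Diamond\maccnf$-literal true at any cluster world is already true at $\macworldt$ and so lies in $\mactlitsu$; and the $\maccnf$ demanded by any such literal is satisfied by a further cluster world. Thus every clause inspected by formula~\eqref{def:mdGlSatInfDesc} is satisfied by the designated literal sets.

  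The subformulae $\Box\bot$, which can be ``satisfied'' only at a successor-free world, are handled by a side conjunct exactly like the one in formula~\eqref{def:modSatMSODescLevi}, forbidding a $\Box\bot$-literal and a $\Diamond\maccnf$-literal to be selected together; I suppress it here. The hard part is the bookkeeping in the $(\Leftarrow)$ direction: the Euclidean closure must be carried out so that the several sub-models merge into one legal flat cluster while the root stays clean (no predecessor, no self-loop, successor set exactly $\{\macworldt_{\maclitel}\}$), since only then are the commitments $\maccomits_{0}$ placed on the root's successors the same ones that \lemref{lem:mdGlSatMSO} already guarantees. The direction $(\Rightarrow)$ is essentially the converse of \lemref{lem:mdGlSatMSO}, and succeeds only because flatness collapses the entire cluster into the two ``levels'' that the constant-size formula $\macmdglsat(\macclsu_{1},\macglclsu)$ examines.\qed
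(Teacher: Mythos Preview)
Your proposal is correct and follows essentially the same approach as the paper's proof: both directions are routed through the root-plus-cluster normal form, the $(\Leftarrow)$ direction builds the model by invoking \lemref{lem:mdGlSatMSO} once per $\Diamond$-literal in $\mactlitsu_{0}$ and then merging the resulting equivalence-clusters beneath a fresh root, and the $(\Rightarrow)$ direction reads off the witnesses for formulae~\eqref{def:mdEuclSatInfDesc} and~\eqref{def:mdGlSatInfDesc} directly from the flat cluster. Your presentation is in places more careful than the paper's --- you explicitly note that condition~4 of \lemref{lem:mdGlSatMSO} must be strengthened (or iterated) to cover the merger of several sub-models, and you observe that the Euclidean closure leaves the root's successor set equal to $\{\macworldt_{\maclitel}\}$ --- but the underlying argument is the same.
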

  \begin{proof}
    Suppose $\macmdglsat(\macclsu_{0})$ is true in $\macstruc(\macmfo)$.
    We will build an Euclidean Kripke model $\macmodelo$ satisfying
    $\maccnf(\macclsu_{0})$. We will begin with a single world
    $\macworldo$. In $\macworldo$, precisely those propositional
    variables are set to $\top$ that appear in the set
    $\mactlitsu_{0}$ that witnesses truth of
    $\macmdglsat(\macclsu_{0})$ in $\macstruc(\macmfo)$. For now, we
    will assume that literals of the form $\Diamond\maccnf$ and
    $\Box\maccl$ in $\mactlitsu_{0}$ will be satisfied in $\macworldo$
    by addition of suitable worlds. If $\macclel$ is any clause in
    $\macclsu_{0}$, then either $\exists \maclitel\in \mactlitsu_{0}:
    \macedge(\macclel,\maclitel)$ or $\exists \maclitel\in
    \macpvs\setminus \mactlitsu_{0}: \macedgec(\macclel,\maclitel)$ is
    true. In the former case, a propositional variable that is set to
    $\top$ at $\macworldo$ or a literal of the form $\Diamond\maccnf$
    or $\Box\maccl$ that is satisfied in $\macworldo$ occurs in
    $\macclel$ and hence $\macclel$ is satisfied in $\macworldo$. In the
    later case, a propositional variable set to $\bot$ in $\macworldo$
    occurs negatively in $\macclel$ and hence $\macclel$ is satisfied
    in $\macworldo$.

    As promised, we will now add suitable successors such that all
    literals of the form $\Diamond\maccnf$ and $\Box\maccl$ in
    $\mactlitsu_{0}$ are satisfied at $\macworldo$. Let
    $\maccomits_{0}=\{\macclel\in \maccls\mid \exists \maclitel\in
    (\mactlitsu_{0}\cap \macbox)\land \macedge(\maclitel,\macclel)\}$ be the
    set of clauses we have committed to satisfy in all successors of
    $\macworldo$ by choosing the corresponding $\Box\maccl$ to be in
    $\mactlitsu_{0}$. For any literal $\maclitel$ of the form
    $\Diamond\maccnf$ in $\mactlitsu_{0}\cap \macdmd$, let
    $\macdemands_{0}=\{\macclel\in \maccls\mid
    \macedge(\maclitel,\macclel)\}$ be the set of clauses in the
    corresponding $\maccnf$ formula. Since
    $\macmdglsat(\macdemands_{0}\cup \maccomits_{0},\macglclsu)$ is true
    in $\macstruc(\macmfo)$, there is a model $\macmodelo_{1}$ and a
    world $\macworldo_{1}$ in it as specified in
    \lemref{lem:mdGlSatMSO}, such that $\macworldo_{1}$ satisfies all
    clauses in $\macdemands_{0}\cup \maccomits_{0}$ and all worlds in
    $\macmodelo_{1}$ satisfy all clauses in $\macglclsu$. Let
    $(\macmodelo_{1},\macworldo_{1}),
    (\macmodelo_{2},\macworldo_{2}),\dots$ be the models given by
    \lemref{lem:mdGlSatMSO} for the demand sets created by each of the
    literals of the form $\Diamond\maccnf$ in $\mactlitsu_{0}\cap
    \macdmd$. Adding all worlds of $\macmodelo_{1},\macmodelo_{2},\dots$
    to $\macmodelo$ and making $\macworldo_{1},\macworldo_{2},\dots$
    successors of $\macworldo$ will result in all literals of the form
    $\Diamond\maccnf$ in $\mactlitsu_{0}\cap \macdmd$ being satisfied at
    $\macworldo$ in $\macmodelo$. Since
    $\macworldo_{1},\macworldo_{2},\dots$ all satisfy all clauses in
    $\maccomits_{0}$, all literals of the form $\Box\maccl$ in
    $\mactlitsu_{0}\cap \macbox$ are also satisfied at $\macworldo$ in
    $\macmodelo$. Making all worlds other than $\macworldo$ successors
    of all worlds other than $\macworldo$ will ensure that $\macmodelo$
    is based on an Euclidean frame. Condition 4 of
    \lemref{lem:mdGlSatMSO} will ensure that due to the additional
    accessibility relation pairs created, the worlds
    $\macworldo_{1},\macworldo_{2},\dots$ will not stop satisfying
    clauses required to satisfy the demands created by literals in
    $\mactlitsu_{0}$.

    Now, suppose that $\maccnf(\macclsu_{0})$ is satisfied in an
    Euclidean model. We will prove that $\macmdglsat(\macclsu_{0})$ is
    true in $\macstruc(\macmfo)$. As proved in \cite{KP09},
    $\maccnf(\macclsu_{0})$ is satisfied in a model $\macmodelo$ at a
    world $\macworldo$ such that the underlying frame of $\macmodelo$ is
    of the form $(\macworlds\cup \{\macworldo\},\macrelo)$ such that
    $\macworlds\times \macworlds\subseteq \macrelo$. As stated in the
    lemma, $\macworldo$ is not its own successor. If $\macworldo$ was
    the successor of any other world, then Euclidean property will
    force $\macworldo$ to be its own successor, hence $\macworldo$
    can not be the successor of any other world. To prove that
    $\macmdglsat(\macclsu_{0})$ is true in $\macstruc(\macmfo)$, we will
    first construct a set $\mactlitsu_{0}$ of literals. Since
    $\maccnf(\macclsu_{0})$ is satisfied at $\macworldo$, for every
    clause in $\macclsu_{0}$, there must be a literal occurring in that
    clause satisfied at $\macworldo$. Let $\mactlitsu_{0}$ be the set of
    such literals of the form $\Box\maccl$ or $\Diamond\maccnf$ and the
    propositional variables set to $\top$ in $\macworldo$ and occurring
    positively in some clause in $\macclsu_{0}$. The condition
    $\forall \maclitel\in \mactlitsu_{0}: \exists \macclel\in
    \macclsu_{0}: \macedge(\macclel,\maclitel)$ is true by construction
    of $\mactlitsu_{0}$. If $\macclel$ is any clause in
    $\macclsu_{0}$, then either
    \begin{enumerate}
      \item there is some literal of the form $\Box\maccl$ or
	$\Diamond\maccnf$ or a positively occurring propositional
	variable that occurs in $\macclel$ and present in
	$\mactlitsu_{0}$ (in which case $\exists \maclitel\in
	\mactlitsu_{0}: \macedge(\macclel,\maclitel)$ is true) or
      \item there is a negatively occurring propositional variable that
	is set to $\bot$ in $\macworldo$ (in which case $\exists
	\maclitel\in \macpvs\setminus \mactlitsu_{0}:
	\macedgec(\macclel,\maclitel)$ is true).
    \end{enumerate}

    Let $\maccomits_{0}=\{\macclel\in \maccls\mid \exists \maclitel\in
    (\mactlitsu_{0}\cap \macbox)\land \macedge(\maclitel,\macclel)\}$
    be the set of clauses that are satisfied in all successors of
    $\macworldo$.  Let $\macglclsu=\{\macclel\in \maccls\mid \exists
    \maclitel\in (\macclsc\cap \macbox)\land
    \macedge(\maclitel,\macclel) \land \macmodelo,\macworldo'\models
    \maclitel, \macworldo'\ne \macworldo\}$ be the set of all clauses
    such that corresponding $\Box\maccl$ formula is satisfied at some
    world $\macworldo'$ other than $\macworldo$. Since all worlds
    other than $\macworldo$ are successors of $\macworldo'$, all
    worlds other than $\macworldo$ satisfy all clauses in
    $\macglclsu$. The condition $\forall \macclel\in \macglclsu:
    \exists \maclitel\in (\macclsc\cap \macbox):
    \macedge(\maclitel,\macclel)$ is true by construction of
    $\macglclsu$. Note that $\macglclsu$ and $\mactlitsu_{0}\cap
    \macdmd$ will be empty if $\macworldo$ has no successors, hence
    the rest of $\macmdglsat(\macclsu_{0})$ is vacuously true. For any
    literal $\maclitel\in \mactlitsu_{0}\cap \macdmd$, let
    $\macdemands_{0}=\{\macclel\in \maccls\mid
    \macedge(\maclitel,\macclel)\}$ be the set of clauses in the
    $\maccnf$ formula contained in $\maclitel$. In $\macmodelo$, there
    is a successor $\macworldo_{1}$ of $\macworldo$ that satisfies all
    clauses in $\macdemands_{0}\cup \maccomits_{0}$. We will prove
    that $\macmdglsat(\macdemands_{0}\cup \maccomits_{0},\macglclsu)$
    is true in $\macstruc(\macmfo)$.

    We first select a subset $\mactlitsu\subseteq (\macclsc\cap
    \macdmd)$ so that the rest of the formula
    $\macmdglsat(\macdemands_{0}\cup \maccomits_{0},\macglclsu)$ can be
    satisfied. Let $\mactlitsu=\{\maclitel\in (\macclsc\cap \macdmd)\mid
    \macmodelo,\macworldo'\models\maclitel,\macworldo'\ne \macworldo\}$
    be the set of literals of the form $\Diamond\maccnf$ such that some
    world $\macworldo'$ other than $\macworldo$ satisfies the
    $\Diamond\maccnf$ formula (since $\macworldo$ is not a successor
    of $\macworldo'$, some other world $\macworldo''$ succeeding
    $\macworldo'$ will satisfy the corresponding $\maccnf$ formula).
    Let $\mactlitsu_{0}$ be the set of propositional variables set to
    $\top$ in the world $\macworldo_{1}$ mentioned above. Let
    $\macgllitsu=\left\{ \maclitel\in (\macclsc\cap \macbox) \mid
    \exists \macclel\in \macglclsu \land
    \macedge(\maclitel,\macclel)\right\}$ be the set of literals of
    the form $\Box\maccl$ such that the corresponding clause is in
    $\macglclsu$. Let $\macclel$ be any clause in $\macdemands_{0}\cup
    \maccomits_{0}\cup \macglclsu$. Since $\macworldo_{1}$ satisfies
    $\macclel$, there must be a literal $\maclitel$ occurring in
    $\macclel$ such that $\maclitel$ is satisfied in $\macworldo_{1}$.
    \begin{enumerate}
      \item If $\maclitel$ is of the form $\Box\maccl$, then it is in
	$\macgllitsu$ and hence $\exists \maclitel\in \macgllitsu:
	\macedge(\macclel,\maclitel)$ is true.
      \item If $\maclitel$ is of the form $\Diamond\maccnf$, then it is
	in $\mactlitsu$ and hence $\exists \maclitel\in \mactlitsu:
	\macedge(\macclel,\maclitel)$ is true.
      \item If $\maclitel$ is a positively occurring propositional
	variable, then $\exists \maclitel\in \mactlitsu_{0}:
	\macedge(\macclel,\maclitel)$ is true.
      \item If $\maclitel$ is a negatively occurring propositional
	variable, then $\exists \maclitel\in \macpvs\setminus
	\mactlitsu_{0}: \macedgec(\macclel,\maclitel)$ is true.
    \end{enumerate}
    Let $\maclitel\in \mactlitsu$ be any literal of the form
    $\Diamond\maccnf$ in $\mactlitsu$. By definition of $\mactlitsu$,
    there is some world $\macworldo'$ other than $\macworldo$ such that
    $\macworldo'$ satisfies the corresponding $\maccnf$ formula. Let
    $\mactlitsu_{1}$ be the set of propositional variables set to
    $\top$ in $\macworldo'$ and let $\macdemands=\{\macclel'\in
    \maccls\mid \macedge(\maclitel,\macclel')\}$ be the set of clauses
    in the $\maccnf$ formula in $\maclitel$. Let $\macclel$ be any
    clause in $\macdemands\cup \macglclsu$. Since $\macworldo'$
    satisfies $\macclel$,  there must be a literal
    $\maclitel'$ occurring in $\macclel$ such that $\maclitel'$ is
    satisfied in $\macworldo'$.
    \begin{enumerate}
      \item If $\maclitel'$ is of the form $\Box\maccl$, then it is in
	$\macgllitsu$ and hence $\exists \maclitel'\in \macgllitsu:
	\macedge(\macclel,\maclitel')$ is true.
      \item If $\maclitel'$ is of the form $\Diamond\maccnf$, then it is
	in $\mactlitsu$ and hence $\exists \maclitel'\in \mactlitsu:
	\macedge(\macclel,\maclitel')$ is true.
      \item If $\maclitel'$ is a positively occurring propositional
	variable, then $\exists \maclitel'\in \mactlitsu_{1}:
	\macedge(\macclel,\maclitel')$ is true.
      \item If $\maclitel'$ is a negatively occurring propositional
	variable, then $\exists \maclitel'\in \macpvs\setminus
	\mactlitsu_{1}: \macedgec(\macclel,\maclitel')$ is true.
    \end{enumerate}\qed
  \end{proof}

  Suppose a modal formula is satisfied at a world $\macworldo$ in an
  Euclidean model where $\macworldo$ is its own successor. Then
  Euclidean property will force the accessibility relation $\macrelo$ to
  be the equivalence relation on the set of all worlds.  Hence, any
  $\Box\maccl$ literal chosen to be satisfied in $\macworldo$ will
  result in all worlds (including $\macworldo$) satisfying the
  corresponding clause. This can be easily handled by modifying
  $\macmdglsat(\macclsu_{0})$ as follows.
  \begin{Formula}[def:mdEuclReflSatInfDesc]
    \macmdglsat'(\macclsu_{0}) \macdef 
    \exists \mactlitsu_{0}\subseteq\macclsc: 
      \>1 \exists \macglclsu\subseteq (\maccls): \forall
      \macclel\in \macglclsu: \exists \maclitel\in (\macclsc\cap
      \macbox): \macedge(\maclitel,\macclel)
      \>1 \land \forall \macclel\in (\macclsu_{0}\cup \macglclsu):
      \>2 \exists \maclitel\in \mactlitsu_{0}:
      \macedge(\macclel,\maclitel)
      \>2 \lor \exists \maclitel\in \macpvs\setminus
      \mactlitsu_{0}: \macedgec(\macclel,\maclitel)
      \>1 \land \forall \maclitel\in (\mactlitsu_{0}\cap \macdmd):
      \>2 \macdemands_{0}=\{\macclel\in \maccls\mid
      \macedge(\maclitel,\macclel)\}\Rightarrow
      \>2 \macmdglsat(\macdemands_{0},\macglclsu)
  \end{Formula}
  Now to check if a modal formula $\macmfo$ is satisfiable in an
  Euclidean model, we just have to check if
  $\macmdglsat(\macclsu_{0})\lor \macmdglsat'(\macclsu_{0})$ is true
  in $\macstruc(\macmfo)$, where $\macclsu_{0}$ is the set of clauses
  at the highest level. An application of Courcelle's theorem will
  give us the \macfpt{} algorithm. Note that in this case, the size of
  the MSO formula we need to check is independent of modal depth.

  To check if a modal formula $\macmfo$ is satisfiable in a reflexive
  and Euclidean model, we just check if
  $\macmdglsat'(\macclsu_{0})$ is true in $\macstruc(\macmfo)$.

  Suppose a modal formula $\macmfo$ is satisfied at some world
  $\macworldo$ in an Euclidean and symmetric model. If $\macworldo$
  has any other successors, then Euclidean property will force all
  worlds reachable from $\macworldo$ to be successors of $\macworldo$
  and $\macworldo$ to be a successor of all worlds reachable from
  $\macworldo$. This is same as a reflexive and Euclidean model and
  can be handled by $\macmdglsat'(\macclsu_{0})$. If $\macworldo$ has
  no other successors but is its own successor it can again be handled
  by $\macmdglsat'(\macclsu_{0})$. If $\macworldo$ has no successors
  and is not its own successor, then all clauses of $\macmfo$ at the
  highest level are satisfied at $\macworldo$ by literals of the form
  $\Box\maccl$ or propositional variables. This can be easily checked
  by a small MSO formula.

  \subsection{Euclidean and transitive models}
  \label{sec:modSatEuclTransFrames}
  Suppose we want to check satisfiability of a modal CNF formula in
  models that are both Euclidean and transitive. As seen above, the
  modal CNF formula is satisfied in a model with an underlying frame of
  the form $(\macworlds\cup \{\macworldo\},\macrelo)$ where
  $\macworlds\times \macworlds\subseteq \macrelo$. In addition, all
  other worlds are successors of $\macworldo$. Hence any literal of
  the form $\Box\maccl$ satisfied at $\macworldo$ will result in all
  other worlds satisfying the corresponding clause. This can be
  handled by modifying $\macmdglsat(\macclsu)$ as follows.
  \begin{Formula}[def:mdEuclTransSatInfDesc]
    \macmdglsat''(\macclsu_{0}) \macdef 
    \exists \mactlitsu_{0}\subseteq\macclsc: \forall \maclitel\in
    \mactlitsu_{0}: \exists \macclel\in \macclsu_{0}:
    \macedge(\macclel,\maclitel)
      \>1 \land \forall \macclel\in \macclsu_{0}:
      \>2 \exists \maclitel\in \mactlitsu_{0}:
      \macedge(\macclel,\maclitel)
      \>2 \lor \exists \maclitel\in \macpvs\setminus
      \mactlitsu_{0}: \macedgec(\macclel,\maclitel)
      \>1 \land \maccomits_{0}=\{\macclel\in \maccls\mid \exists
      \maclitel\in (\mactlitsu_{0}\cap \macbox)\land
      \macedge(\maclitel,\macclel)\} \Rightarrow
      \>2 \exists \macglclsu\subseteq (\maccls): \forall
      \macclel\in \macglclsu: \exists \maclitel\in (\macclsc\cap
      \macbox): \macedge(\maclitel,\macclel)
      \>2 \land \forall \maclitel\in (\mactlitsu_{0}\cap \macdmd):
      \>3 \macdemands_{0}=\{\macclel\in \maccls\mid
      \macedge(\maclitel,\macclel)\}\Rightarrow
      \>3 \macmdglsat(\macdemands_{0}\cup
      \maccomits_{0},\macglclsu\cup \maccomits_{0})
  \end{Formula}

  The Euclidean property is very strong in the sense that it makes the
  complexity of infinitely many modal logics drop from
  \macpspace{}-hard to \macnp{}-complete \cite{HR07}. One might hope
  for extending the results of this section to any modal logic whose
  frames is a subset of Euclidean frames. The results in \cite{HR07}
  use semantic characterizations while our MSO formulae can only
  reason about syntax of modal logic formulae. Even though there is a
  close relation between the syntax and semantics of modal logic of
  Euclidean frames (which have been used to obtain the results of this
  section), it seems difficult to exploit this relation to obtain
  \macfpt{} algorithms for arbitrary extensions of modal logic of
  Euclidean frames.  It remains to be seen if other tools from the
  theory of MSO logic on graphs can be used to achieve this.

  \section{Reflexive models}
  \label{sec:modSatReflFrames}
  As an example of how the basic technique described in
  \secref{sec:modSatGenFrames} can be extended to satisfiability in
  models satisfying some other properties, we will show satisfiability
  in reflexive models. We will need the following MSO formula to define
  the set of vertices reachable from a given vertex in a finite directed
  acyclic graph.
  \begin{Formula}[eq:reachInfMSO]
    \macreach(\macfovo, \macsovo) \macdef \forall \macfovt \quad
    (\macfovt \in \macsovo) \Leftrightarrow
    \>1[
      \>2 \macedge(\macfovo, \macfovt)
      \>2 \lor \exists \macfovh \in \macsovo : \macedge( \macfovh,
      \macfovt)
    \>1]
  \end{Formula}
  \begin{lemma}
    \label{lem:reachInfMSO}
    Let $\macgraph$ be a finite directed acyclic graph in which,
    $\macedge$ is the binary relation represented by the directed edges.
    Let $\macfovo$ be a vertex and $\macsovo$ be a subset of vertices in
    $\macgraph$. Then, $\macsovo$ is the set of precisely those vertices
    reachable from $\macfovo$ by a directed path of length $1$ or more
    iff $\macreach( \macfovo, \macsovo)$ defined in
    \eqref{eq:reachInfMSO} is true in $\macgraph$.
  \end{lemma}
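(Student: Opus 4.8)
The plan is to prove the two directions of the biconditional separately, writing $\macreach^{*}(\macfovo)$ for the set of vertices reachable from $\macfovo$ by a directed path of length at least $1$. Note that neither the left nor the right side of \eqref{eq:reachInfMSO} places any a priori constraint beyond those made explicit, so the argument is purely about the least/greatest solutions of the fixed-point equation defining $\macsovo$.

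For the forward direction I would assume $\macsovo = \macreach^{*}(\macfovo)$ and check that every vertex $\macfovt$ of $\macgraph$ satisfies the equivalence inside \eqref{eq:reachInfMSO}. If $\macfovt \in \macsovo$, take a directed path from $\macfovo$ to $\macfovt$; if it has length $1$ then $\macedge(\macfovo,\macfovt)$ holds, and if it has length at least $2$ its penultimate vertex $\macfovh$ is reachable from $\macfovo$ by the proper prefix, hence $\macfovh \in \macreach^{*}(\macfovo) = \macsovo$ and $\macedge(\macfovh,\macfovt)$. Conversely, $\macedge(\macfovo,\macfovt)$ is itself a length-$1$ path, so $\macfovt \in \macreach^{*}(\macfovo)$; and if $\macfovh \in \macsovo$ with $\macedge(\macfovh,\macfovt)$, concatenating a path from $\macfovo$ to $\macfovh$ with that edge gives a path of length at least $2$, so again $\macfovt \in \macsovo$. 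This direction does not use acyclicity.

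For the backward direction I would assume $\macreach(\macfovo,\macsovo)$ holds in $\macgraph$ and prove $\macsovo = \macreach^{*}(\macfovo)$ by double inclusion. The inclusion $\macreach^{*}(\macfovo) \subseteq \macsovo$ follows by induction on the length of a shortest path from $\macfovo$ to $\macfovt$: a length-$1$ path gives $\macedge(\macfovo,\macfovt)$, hence $\macfovt \in \macsovo$ by \eqref{eq:reachInfMSO}; a longer shortest path has a penultimate vertex $\macfovh$ reachable by a strictly shorter path, so $\macfovh \in \macsovo$ by the induction hypothesis, and $\macedge(\macfovh,\macfovt)$ then forces $\macfovt \in \macsovo$.

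The main obstacle, and the only place where finiteness and acyclicity of $\macgraph$ are used, is the reverse inclusion $\macsovo \subseteq \macreach^{*}(\macfovo)$. I plan to argue by contradiction: let $S$ be the set $\macsovo \setminus \macreach^{*}(\macfovo)$ and suppose $S \ne \emptyset$. Since $\macgraph$ is a finite directed acyclic graph, the subgraph induced by $S$ is again a finite DAG and hence contains a vertex $\macfovt$ with no in-neighbour inside $S$. Because $\macfovt \in \macsovo$, \eqref{eq:reachInfMSO} gives either $\macedge(\macfovo,\macfovt)$, which is impossible as it would place $\macfovt$ in $\macreach^{*}(\macfovo)$, or some $\macfovh \in \macsovo$ with $\macedge(\macfovh,\macfovt)$; by the choice of $\macfovt$ we have $\macfovh \notin S$, hence $\macfovh \in \macreach^{*}(\macfovo)$, and appending the edge $\macedge(\macfovh,\macfovt)$ shows $\macfovt \in \macreach^{*}(\macfovo)$, contradicting $\macfovt \in S$. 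Therefore $S = \emptyset$. I would also remark why acyclicity cannot be dropped: if $\macgraph$ had a cycle disjoint from $\macreach^{*}(\macfovo)$, enlarging $\macsovo$ by that cycle would still satisfy \eqref{eq:reachInfMSO}, so the greatest solution would no longer coincide with $\macreach^{*}(\macfovo)$.
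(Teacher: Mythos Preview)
Your proof is correct and follows essentially the same structure as the paper's: both directions of the biconditional are handled as you do, with induction on shortest-path length for $\macreach^{*}(\macfovo)\subseteq\macsovo$ and a contradiction exploiting finiteness and acyclicity for $\macsovo\subseteq\macreach^{*}(\macfovo)$. The only cosmetic difference is that the paper derives the contradiction by iteratively unwinding the fixed-point condition to produce an infinite sequence of distinct predecessors, whereas you pick a source vertex of the induced sub-DAG on $\macsovo\setminus\macreach^{*}(\macfovo)$; these are dual expressions of the same well-foundedness argument.
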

  \begin{proof}
    Suppose $\macsovo$ is the set of precisely those vertices
    reachable from $\macfovo$ by a directed path of length $1$ or more.
    If some vertex $\macfovt$ is in $\macsovo$ (i.e., there is a
    directed path of length $1$ or more from $\macfovo$ to $\macfovt$),
    we will prove that $\macedge(\macfovo, \macfovt) \lor \exists
    \macfovh \in \macsovo : \macedge( \macfovh, \macfovt)$ is true in
    $\macgraph$. If the length of the path from $\macfovo$ to
    $\macfovt$ is $1$, there is an edge from $\macfovo$ to
    $\macfovt$, making $\macedge( \macfovo, \macfovt)$ true in
    $\macgraph$. If the directed path between $\macfovo$ and $\macfovt$
    is at least $2$ and is of the form $\macfovo \rightarrow \macfovt'
    \rightarrow \cdots \rightarrow \macfovt'' \rightarrow \macfovt$,
    then we can take $\macfovt''$ as witness for $\macfovh$ in
    $\exists \macfovh \in \macsovo : \macedge( \macfovh, \macfovt)$,
    making $\exists \macfovh \in \macsovo : \macedge( \macfovh,
    \macfovt)$ true in $\macgraph$. On the other hand, suppose
    $\macedge(\macfovo, \macfovt) \lor \exists \macfovh \in \macsovo
    : \macedge( \macfovh, \macfovt)$ is true in $\macgraph$ for some
    vertex $\macfovt$. We will prove that $\macfovt$ is in $\macsovo$
    (i.e., there is a directed path of length $1$ or more from
    $\macfovo$ to $\macfovt$). Suppose $\macedge(\macfovo, \macfovt)$ is
    true in $\macgraph$. Then there is an edge from $\macfovo$ to
    $\macfovt$, which is a directed path of length $1$. Suppose
    $\exists \macfovh \in \macsovo
    : \macedge( \macfovh, \macfovt)$ is true $\macgraph$, then there is
    a directed path of length $1$ or more from $\macfovo$ to $\macfovh$
    (since $\macfovh$ is in $\macsovo$). Appending the edge from
    $\macfovh$ to $\macfovt$ to this path gives us a path of length
    $2$ or more from $\macfovo$ to $\macfovt$.

    Now, suppose that $\forall \macfovt \quad (\macfovt \in \macsovo)
    \Leftrightarrow [ \macedge(\macfovo, \macfovt) \lor (\exists
    \macfovh \in \macsovo : \macedge( \macfovh, \macfovt)) ]$ is true in
    $\macgraph$. We will prove that $\macsovo$ is the set of precisely
    those vertices reachable from $\macfovo$ by a directed path of
    length $1$ or more. We will first prove that $\macsovo$ does not
    contain any vertex not reachable from $\macfovo$. Suppose to the
    contrary that there is a vertex $\macfovt$ in $\macsovo$ not
    reachable from $\macfovo$. Since $\macfovt \in \macsovo$,
    $\macedge(\macfovo, \macfovt) \lor (\exists \macfovh \in \macsovo :
    \macedge( \macfovh, \macfovt))$ is true in $\macgraph$. Since
    $\macedge( \macfovo, \macfovt)$ cannot be true (as that would mean
    $\macfovt$ is reachable from $\macfovo$), there is some
    $\macfovh_{1} \in \macsovo$ with $\macedge( \macfovh_{1},
    \macfovt)$. Since $\macfovh_{1}$ is in $\macsovo$,
    $\macedge(\macfovo, \macfovh_{1}) \lor (\exists \macfovh \in \macsovo :
    \macedge( \macfovh, \macfovh_{1}))$ is true.  Since
    $\macedge( \macfovo, \macfovh_{1})$ cannot be true (as that would mean
    $\macfovt$ is reachable from $\macfovo$), there is some
    $\macfovh_{2} \in \macsovo$ with $\macedge( \macfovh_{2},
    \macfovh_{1})$. The vertex $\macfovh_{2}$ has to be distinct from
    $\macfovt$ and $\macfovh_{1}$ since otherwise, the fact that
    $\macgraph$ is devoid of directed cycles is violated. Continuing
    this type of reasoning leads us to an infinite sequence $\macfovt,
    \macfovh_{1}, \macfovh_{2}, \dots$ of distinct vertices,
    contradicting the fact that $\macgraph$ is a finite graph. Hence,
    $\macsovo$ does not contain any vertex not reachable from
    $\macfovo$. Next we will prove that every vertex $\macfovt$
    reachable from $\macfovo$ is in $\macsovo$, by induction on the
    length $\macmdidx$ of the shortest directed path from $\macfovo$ to
    $\macfovt$. In the base case $\macmdidx=1$, there is an edge from
    $\macfovo$ to $\macfovt$, which means that $\macedge( \macfovo,
    \macfovt)$ is true in $\macgraph$, and $\macreach( \macfovo,
    \macsovo)$ forces $\macfovt$ to be in $\macsovo$. Suppose there is a
    directed path of length $\macmdidx+1$ from $\macfovo$ to $\macfovt$.
    Let $\macfovh$ be the vertex preceding $\macfovt$ in this path.
    Since there is a directed path of length at most $\macmdidx$ from
    $\macfovo$ to $\macfovh$, we can use induction hypothesis conclude
    that $\macfovh \in \macsovo$. Since there is an edge from $\macfovh$
    to $\macfovt$, $\exists \macfovh \in \macsovo : \macedge( \macfovh,
    \macfovt)$ is true in $\macgraph$, and again $\macreach( \macfovo,
    \macsovo)$ forces $\macfovt$ to be in $\macsovo$.\qed
  \end{proof}

  Let $\macclsu_{\macmdidx}$ be some set of domain elements representing
  clauses at level at most $\macmdidx$. The property
  $\macmdsat{\macmdidx}( \macclsu_{\macmdidx})$ defined below checks if
  there is a reflexive Kripke model $\macmodelo$ and a world
  $\macworldo$ in it that satisfies all clauses in
  $\macclsu_{\macmdidx}$.
  \begin{Formula}[def:modSatInfDescMaxLev0]
    \macmdsat{0}(\macclsu_{0}) \macdef \exists \mactlitsu_{0}
    \subseteq
    (\macclsc\cap\maclev_{0}):\forall\macclel\in\macclsu_{0}:
    \>1 \left[\left(\exists
    \maclitel\in\mactlitsu_{0}:\macedge(\macclel,\maclitel)\right)
    \lor \left(\exists \maclitel\in (\macclsc\cap\maclev_{0})\setminus
    \mactlitsu_{0}:\macedgec(\macclel,\maclitel)\right)\right]

    \Form[def:modSatInfDescMaxLevi]
    \macmdsat{\macmdidx}(\macclsu_{\macmdidx}) \macdef \exists
    \mactlitsu_{\macmdidx} \subseteq \macclsc:
    \>1 \forall \maclitel\in \mactlitsu_{\macmdidx} \quad \exists \macclel \in
    \macclsu_{\macmdidx}: \exists \macsovo: (\macreach( \macclel,
    \macsovo) \land \maclitel \in \macsovo)
    \>1 \land \maccomits_{\macmdidx-1}=\{\macclel'\in
    \maccls\mid \exists \maclitel'\in
    \mactlitsu_{\macmdidx}\cap\macbox,\macedge(\maclitel',\macclel')\}
    \Rightarrow
    \>1 \forall \macclel\in \macclsu_{\macmdidx} \cup
    \maccomits_{\macmdidx-1}:
    \>2 \left[\left(\exists \maclitel\in \mactlitsu_{\macmdidx} :
    \macedge(\macclel,\maclitel)\right)
     \lor \left(\exists \maclitel\in \macclsc
    \setminus \mactlitsu_{\macmdidx}:
    \macedgec(\macclel,\maclitel)\right)\right]

    \>1 \land \forall \maclitel\in \mactlitsu_{\macmdidx}\cap\macdmd:
    \macdemands_{\macmdidx-1}=\{\macclel\in
    \maccls\mid
    \macedge(\maclitel,\macclel)\} \Rightarrow
    \>3
    \macmdsat{\macmdidx-1}(\macdemands_{\macmdidx-1}\cup
    \maccomits_{\macmdidx-1})
  \end{Formula}
  \begin{lemma}
    \label{lem:mdSatMSO}
    The property $\macmdsat{\macmdidx}(\macclsu_{\macmdidx})$ can be
    written in a MSO logic formula of size linear in $\macmdidx$. If
    $\macmfo$ is any modal formula in CNF and $\macclsu_{\macmdidx}$ is
    any subset of domain elements representing clauses at level
    at most $\macmdidx$, then $\maccnf(\macclsu_{\macmdidx})$ is
    satisfiable in a reflexive model iff
    $\macmdsat{\macmdidx}(\macclsu_{\macmdidx})$ is true in
    $\macstruc(\macmfo)$.
  \end{lemma}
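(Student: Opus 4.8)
The plan is to follow the proof of \Lemref{lem:levSatMSO} closely and adapt it to the reflexive setting; the one genuinely new phenomenon is that in a reflexive model every world is its own successor. Concretely, if a world $\macworldo$ is chosen to satisfy a literal $\Box\maccl$, then $\maccl$ must hold at $\macworldo$ itself, and, iterating, every clause $\macedge$-reachable from $\macclsu_{\macmdidx}$ through $\macbox$-literals chosen at $\macworldo$ must hold at $\macworldo$. This is exactly why formula~\eqref{def:modSatInfDescMaxLevi} differs from \eqref{def:modSatInfDescLevi} in three ways: it picks a single literal set $\mactlitsu_{\macmdidx}\subseteq\macclsc$ rather than one set per level; it constrains $\mactlitsu_{\macmdidx}$ via $\macreach$ to literals reachable from $\macclsu_{\macmdidx}$; and it requires every clause in $\macclsu_{\macmdidx}\cup\maccomits_{\macmdidx-1}$ (not just in $\macclsu_{\macmdidx}$) to be satisfied at the current world. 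I would also add a constant-size conjunct $\forall\macfovo(\mactlitsu_{\macmdidx}(\macfovo)\Rightarrow\lnot\macuc(\macfovo))$ to forbid $\Box\bot$ literals, since $\Box\bot$ is unsatisfiable in a reflexive world; this affects nothing below.

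For the size claim I would induct on $\macmdidx$ exactly as for $\maclevsat{\macmdidx}$ in \Lemref{lem:levSatMSO}: $\macmdsat{0}$ has constant size, and $\macmdsat{\macmdidx}$ consists of a constant-size wrapper, a constant-size copy of the fixed formula $\macreach$, and a single textual copy of $\macmdsat{\macmdidx-1}$, so $|\macmdsat{\macmdidx}|\le c(\macmdidx+1)$ for a suitable constant $c$. The correctness of $\macreach$ is \Lemref{lem:reachInfMSO}, applied to the $\macedge$-reduct of $\macstruc(\macmfo)$, which is acyclic because every $\macedge$ leads from a clause or literal to a clause or literal of no greater level, the level strictly dropping every second step.

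For correctness I would induct on $\macmdidx$. The base case $\macmdidx=0$ is immediate: $\macmdsat{0}$ is syntactically the formula $\maclevsat{0}$, $\maccnf(\macclsu_0)$ is propositional, and a single-world model for it may always be taken reflexive by adding the loop at its one world, so the claim follows from the base case of \Lemref{lem:levSatMSO}. For the induction step, $(\Leftarrow)$: from a witnessing $\mactlitsu_{\macmdidx}$ build a world $\macworldo$ in which exactly the propositional variables in $\mactlitsu_{\macmdidx}$ are true; for each $\Diamond\maccnf$ literal in $\mactlitsu_{\macmdidx}$ use the induction hypothesis on $\macmdsat{\macmdidx-1}(\macdemands_{\macmdidx-1}\cup\maccomits_{\macmdidx-1})$ to get a reflexive submodel, attach its distinguished world as a successor of $\macworldo$, and add the loop $\macworldo\macrelo\macworldo$. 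A sub-induction on the modal depth of a clause (in the style of the proof of \Lemref{lem:mdGlSatMSO}) then shows: every clause of $\macclsu_{\macmdidx}$ holds at $\macworldo$; every $\Box\maccl$ literal of $\mactlitsu_{\macmdidx}$ holds at $\macworldo$, because its clause lies in $\maccomits_{\macmdidx-1}$ and hence is satisfied at $\macworldo$ by the ``$\forall\macclel\in\macclsu_{\macmdidx}\cup\maccomits_{\macmdidx-1}$'' conjunct, and it is satisfied at each $\Diamond$-successor because that successor was built for $\maccnf(\macdemands_{\macmdidx-1}\cup\maccomits_{\macmdidx-1})$; and every $\Diamond\maccnf$ literal of $\mactlitsu_{\macmdidx}$ holds by construction (the only successors of $\macworldo$ are $\macworldo$ and the attached worlds, so nothing else needs checking). $(\Rightarrow)$: given a reflexive $\macmodelo$ with $\macmodelo,\macworldo\models\maccnf(\macclsu_{\macmdidx})$, let $\mactlitsu_{\macmdidx}$ be the set of domain elements representing literals $\macedge$-reachable from $\macclsu_{\macmdidx}$ and true at $\macworldo$; the $\macreach$ conjunct holds by construction, $\maccomits_{\macmdidx-1}$ is satisfied at $\macworldo$ because $\macworldo\macrelo\macworldo$, the clause-checking conjunct holds since every clause of $\macclsu_{\macmdidx}\cup\maccomits_{\macmdidx-1}$ is true at $\macworldo$ and its true literal is reachable (hence in $\mactlitsu_{\macmdidx}$) or is a negated variable set to $\bot$, and for each $\Diamond\maccnf$ literal of $\mactlitsu_{\macmdidx}$ a successor $\macworldo'$ of $\macworldo$ satisfies $\maccnf$ and also $\maccomits_{\macmdidx-1}$ (again since $\macworldo\macrelo\macworldo'$), so the submodel generated by $\macworldo'$ witnesses $\macmdsat{\macmdidx-1}(\macdemands_{\macmdidx-1}\cup\maccomits_{\macmdidx-1})$ by the induction hypothesis.

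The step I expect to be the main obstacle is the bookkeeping around commitments: showing precisely that in the reflexive case $\maccomits_{\macmdidx-1}$ must be satisfied \emph{both} at the current world and at every $\Diamond$-successor, and that every literal needed to satisfy a clause of $\macclsu_{\macmdidx}\cup\maccomits_{\macmdidx-1}$ is $\macedge$-reachable from $\macclsu_{\macmdidx}$ and therefore eligible for $\mactlitsu_{\macmdidx}$. Making the reachability invariant line up with the level arithmetic (clauses in $\macdemands_{\macmdidx-1}\cup\maccomits_{\macmdidx-1}$ sit at level at most $\macmdidx-1$, so the induction hypothesis applies) is the part that needs the most care.
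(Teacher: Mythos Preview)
Your proposal is correct and follows essentially the same approach as the paper: linear size by the obvious induction, correctness by induction on $\macmdidx$ with the base case handled by a single reflexive world, the $(\Leftarrow)$ direction by building $\macworldo$, attaching reflexive submodels for the $\Diamond$-literals via the induction hypothesis, and running a sub-induction on modal depth over $\macclsu_{\macmdidx}\cup\maccomits_{\macmdidx-1}$, and the $(\Rightarrow)$ direction by taking $\mactlitsu_{\macmdidx}$ to be exactly the literals reachable from $\macclsu_{\macmdidx}$ and satisfied at $\macworldo$. Your added conjunct forbidding $\Box\bot$ in $\mactlitsu_{\macmdidx}$ is a sensible tightening that the paper leaves implicit.
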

  \begin{proof}
    We will prove the first claim by induction on $\macmdidx$. We
    will prove that the length $|\macmdsat{\macmdidx}|$ of
    $\macmdsat{\macmdidx}$ is linear in $\macmdidx$. Let $\macico$ be
    the length of $\macmdsat{\macmdidx}$ without the length of
    $\macmdsat{\macmdidx-1}$ counted. As can be seen,
    $|\macmdsat{0}|\le \macico$. Inductively assume that
    $|\macmdsat{\macmdidx-1}|\le \macmdidx\macico$.  Then, $|
    \macmdsat{\macmdidx} |= \macico+|\macmdsat{\macmdidx-1}|$.
    Hence, $|\macmdsat{\macmdidx}|\le
    \macico+\macmdidx\macico=\macico(\macmdidx+1)$.

    We will now prove the second claim by induction on $\macmdidx$.

    \emph{Base case $\macmdidx=0$:} Suppose $
    \macmdsat{0}(\macclsu_{0})$ is true in $\macstruc(\macmfo)$.
    Hence, there is a subset $\mactlitsu_{0}$ of domain elements that
    satisfy all the conditions of $\macmdsat{0}$ defined in
    \eqref{def:modSatInfDescMaxLev0}.  Since all domain elements in
    $\mactlitsu_{0}$ represent literals at level $0$ and the only
    literals at level $0$ are propositional variables or their
    negations, $\mactlitsu_{0}$ is in fact a subset of propositional
    variables. Consider the reflexive Kripke model $\macmodelo$ with a single
    world $\macworldo$ at which, all propositional variables in
    $\mactlitsu_{0}$ are set to $\top$ and all others are set to
    $\bot$. We will now prove that all clauses represented in
    $\macclsu_{0}$ are satisfied in $\macworldo$. Let $\macclel$ be
    some element in $\macclsu_{0}$ representing some clause. We have
    that either $\exists \maclitel \in \mactlitsu_{0} : \macedge(\macclel,\maclitel)$
    or $\exists \maclitel\in (\macclsc\cap\maclev_{0})\setminus
    \mactlitsu_{0}:\macedgec(\macclel,\maclitel)$ is true in
    $\macstruc(\macmfo)$. In the first case, a positively occurring
    propositional variable is set to $\top$ in $\macworldo$ and in the
    second case, a negatively occurring propositional variable is set to
    $\bot$ in $\macworldo$.

    Now suppose that there is a reflexive Kripke model $\macmodelo$ and
    a world $\macworldo$ such that $\macmodelo,\macworldo\models
    \maccnf(\macclsu_{0})$. We will prove that
    $\macmdsat{0}(\macclsu_{0})$ is true in $\macstruc(\macmfo)$. The
    first requirement is to find a suitable subset $\mactlitsu_{0}$ of
    domain elements. We will set $\mactlitsu_{0}$ to be the set of
    precisely those domain elements that represent propositional
    variables occurring at level $0$ and set to $\top$ in the world
    $\macworldo$. Since every clause $\macclel$ in $\macclsu_{0}$ is
    satisfied in $\macworldo$, either there is a positively occurring
    propositional variable set to $\top$ in $\macworldo$ or there is a
    negatively occurring propositional variable set to $\bot$ in
    $\macworldo$. In the first case $\exists \maclitel \in
    \mactlitsu_{0} : \macedge(\macclel,\maclitel)$ is true and in the
    second case $\exists \maclitel\in (\macclsc\cap\maclev_{0})\setminus
    \mactlitsu_{0}:\macedgec(\macclel,\maclitel)$ is true in
    $\macstruc( \macmfo)$. This completes the base case.

    \emph{Induction step:} Suppose $\macclsu_{\macmdidx}$ is a subset of
    domain elements representing clauses occurring at level at most
    $\macmdidx$ and $\macmdsat{\macmdidx}(\macclsu_{\macmdidx})$ is true
    in $\macstruc(\macmfo)$. We will build a reflexive Kripke model
    $\macmodelo$ and prove that it has a world $\macworldo$ such that
    $\macmodelo,\macworldo\models \maccnf(\macclsu_{\macmdidx})$. We
    will start with a single world $\macworldo$. Since
    $\macmdsat{\macmdidx}(\macclsu_{\macmdidx})$ is true in
    $\macstruc(\macmfo)$, there must be a subset
    $\mactlitsu_{\macmdidx}$ of domain elements satisfying all the
    conditions of $\macmdsat{\macmdidx}(\macclsu_{\macmdidx})$. Since
    $\forall \maclitel\in \mactlitsu_{\macmdidx} \quad \exists \macclel \in
    \macclsu_{\macmdidx}: \exists \macsovo: (\macreach( \macclel,
    \macsovo) \land \maclitel \in \macsovo)$ is true in
    $\macstruc( \macmfo)$, all literals in $\mactlitsu_{\macmdidx}$ are
    reachable from some clause in $\macclsu_{\macmdidx}$ in
    $\macstruc( \macmfo)$. Hence, all literals in
    $\mactlitsu_{\macmdidx}$ are at level at most $\macmdidx$. Let
    $\maccomits_{\macmdidx-1}=\{\macclel'\in \maccls\mid \exists
    \maclitel'\in
    \mactlitsu_{\macmdidx}\cap\macbox,\macedge(\maclitel',\macclel')\}$
    be the set of clauses that we are committed to satisfy in all
    successors of $\macworldo$ (that includes $\macworldo$ as well) as a
    result of chosing the corresponding $\Box \maccl$ to be in
    $\mactlitsu_{\macmdidx}$. All clauses in $\maccomits_{\macmdidx-1}$
    are at level at most $\macmdidx-1$. For each literal
    $\maclitel_{1}$ of the form $\Diamond \maccnf$ in
    $\mactlitsu_{\macmdidx}$, let $\macdemands_{\macmdidx-1}=\{\macclel\in
    \maccls\mid
    \macedge(\maclitel_{1},\macclel)\}$ be the set of clauses occurring in
    $\maclitel_{1}$. Since all clauses in $\macdemands_{\macmdidx-1}
    \cup \maccomits_{\macmdidx-1}$ are at level at most $\macmdidx-1$
    and $\macmdsat{\macmdidx-1}( \macdemands_{\macmdidx-1} \cup
    \maccomits_{\macmdidx-1})$ is true in $\macstruc ( \macmfo)$, we can
    apply induction hypothesis to conclude that there is a reflexive
    Kripke model $\macmodelo_{1}$ and a world $\macworldo_{1}$ in it
    that satisfies all clauses in $\macdemands_{\macmdidx-1} \cup
    \maccomits_{\macmdidx-1}$. Add all such models
    $\macmodelo_{1}, \macmodelo_{2},\dots$ to our Kripke model
    $\macmodelo$ we are
    constructing and make the worlds $\macworldo_{1},
    \macworldo_{2},\dots$ successors of $\macworldo$. In $\macworldo$,
    set precisely those propositional variables to $\top$ that occur in
    $\mactlitsu_{\macmdidx}$. Let $\macclel$ be any clause in
    $\macclsu_{\macmdidx} \cup \maccomits_{\macmdidx-1}$. Now, we will
    prove by induction on modal depth of $\macclel$ that in $\macmodelo$,
    the world $\macworldo$ satisfies $\macclel$. If $\exists
    \maclitel\in \macclsc\setminus \mactlitsu : \macedgec ( \macclel,
    \maclitel)$ is true, then a propositional variable not in
    $\mactlitsu_{\macmdidx}$ occurs negatively in $\macclel$. Since this
    propositional variable is set to $\bot$ in $\macworldo$, $\macclel$
    is satisfied at $\macworldo$. If $\exists \maclitel\in
    \mactlitsu_{\macmdidx}: \macedge( \macclel, \maclitel)$ is true and
    $\maclitel$ is a propositional variable, then it is set to
    $\top$ in $\macworldo$ and occurs positively in $\macclel$. If $\exists \maclitel\in
    \mactlitsu_{\macmdidx}: \macedge( \macclel, \maclitel)$ is true and
    $\maclitel$ is of the form $\Box \maccl$, then the corresponding
    clause is in $\maccomits_{\macmdidx-1}$ and hence true in all
    successors of $\macworldo$ (including $\macworldo$ itself, by
    induction on modal depth of $\macclel$). If $\exists \maclitel\in
    \mactlitsu_{\macmdidx}: \macedge( \macclel, \maclitel)$ is true and
    $\maclitel$ is of the form $\Diamond \maccnf$, then we would have
    added a world to satisfy the corresponding $\maccnf$ formula.

    Now we will prove the other direction of the induction step.
    Suppose $\macclsu_{\macmdidx}$ is a subset of domain elements
    representing clauses occurring at level at most $\macmdidx$ and that
    there is a reflexive Kripke model $\macmodelo$ and a world
    $\macworldo$ such that $\macmodelo,\macworldo\models
    \maccnf(\macclsu_{\macmdidx})$. We will prove that
    $\macmdsat{\macmdidx}(\macclsu_{\macmdidx})$ is true in
    $\macstruc(\macmfo)$. To begin with, we will choose
    $\mactlitsu_{\macmdidx}$ to be the set of precisely those literals
    occurring at level $\macmdidx$ or below that are satisfied at
    $\macworldo$ and occur as subformulas of some clause in
    $\macclsu_{\macmdidx}$. This will ensure that $\forall \maclitel\in
    \mactlitsu_{\macmdidx} \quad \exists \macclel \in
    \macclsu_{\macmdidx}: \exists \macsovo: (\macreach( \macclel,
    \macsovo) \land \maclitel \in \macsovo)$ is true in $\macstruc (
    \macmfo)$. Let $\maccomits_{\macmdidx-1}=\{\macclel'\in
    \maccls\mid \exists \maclitel'\in
    \mactlitsu_{\macmdidx}\cap\macbox,\macedge(\maclitel',\macclel')\}$
    be the set of clauses such that the corresponding $\Box \maccl$ is
    in $\mactlitsu_{\macmdidx}$. The world $\macworldo$ satisfies all
    clauses in $\macclsu_{\macmdidx}$ and since $\macworldo$ is its own
    successor, it also satisfies all clauses in
    $\maccomits_{\macmdidx-1}$. Hence, if $\macclel$ is any clause in
    $\macclsu_{\macmdidx} \cup \maccomits_{\macmdidx-1}$, some literal
    occurring in $\macclel$ must be satisfied in $\macworldo$.
    Therefore, $\forall \macclel\in \macclsu_{\macmdidx} \cup
    \maccomits_{\macmdidx-1}:
     \left[\left(\exists \maclitel\in \mactlitsu_{\macmdidx} :
    \macedge(\macclel,\maclitel)\right)
     \lor \left(\exists \maclitel\in \macclsc
    \setminus \mactlitsu_{\macmdidx}:
    \macedgec(\macclel,\maclitel)\right)\right]$ is true in $\macstruc
    ( \macmfo)$.

    Let $\maclitel$ be any literal of the form $\Diamond \maccnf$ in
    $\mactlitsu_{\macmdidx}$ and let $\macdemands_{\macmdidx-1}=\{\macclel\in
    \maccls\mid
    \macedge(\maclitel,\macclel)\}$ be the set of clauses in the
    corresponding $\maccnf$ formula. Since $\macworldo$ satisfies
    $\maclitel$, there must be a successor $\macworldo'$ of $\macworldo$ that
    satisfies all clauses in $\macdemands_{\macmdidx-1}$ and also all
    clauses in $\maccomits_{\macmdidx-1}$ since $\macworldo'$ is a successor of
    $\macworldo$. Since all clauses in $\macdemands_{\macmdidx-1} \cup
    \maccomits_{\macmdidx-1}$ are at level at most $\macmdidx-1$ and
    $\macworldo'$ is a world in a reflexive Kripke model that satisfies
    all clauses in $\macdemands_{\macmdidx-1} \cup
    \maccomits_{\macmdidx-1}$, we can apply induction hypothesis to
    conclude that $\macmdsat{\macmdidx-1} ( \macdemands_{\macmdidx-1}
    \cup \maccomits_{\macmdidx-1})$ is true in $\macstruc( \macmfo)$.
    \qed
  \end{proof}

  \begin{theorem}
    \label{thm:modalRefSatFpt}
    Given a modal CNF formula $\macmfo$, there is a \macfpt\/ algorithm
    that checks if $\macmfo$ is satisfiable in reflexive models, with
    the treewidth of $\macstruc(\macmfo)$ and the modal depth of
    $\macmfo$ as parameters.
  \end{theorem}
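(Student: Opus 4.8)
The plan is to transfer the proof of \thmref{thm:modalSatFpt} almost verbatim, replacing the MSO predicate $\maclevsat{\macmdidx}$ by its reflexive counterpart $\macmdsat{\macmdidx}$ supplied by \lemref{lem:mdSatMSO}. First I would note that $\macstruc(\macmfo)$ is built from $\macmfo$ in polynomial time, and that the reduct of $\macstruc(\macmfo)$ to the single binary relation $\macedge$ is a finite directed acyclic graph: every $\macedge$-edge goes from a clause to a literal occurring in it, or from a $\Box\maccl$ (resp.\ $\Diamond\maccnf$) literal to the clauses it contains, and along each such edge the level strictly decreases (propositional variables are sinks). Consequently \lemref{lem:reachInfMSO} applies to this graph, so the reachability predicate $\macreach$ used inside $\macmdsat{\macmdidx}$ genuinely captures $\macedge$-reachability, and \lemref{lem:mdSatMSO} holds as stated.

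Let $\macmdidx=\macmd(\macmfo)$. To decide reflexive satisfiability of $\macmfo$ I would check whether the MSO sentence
\begin{align*}
  \exists \macclsu_{\macmdidx}\ \big[\, \forall \macclel\,\big( \macclsu_{\macmdidx}(\macclel) \Leftrightarrow (\maccls(\macclel)\land \maclev_{\macmdidx}(\macclel)) \big) \land \macmdsat{\macmdidx}(\macclsu_{\macmdidx}) \,\big]
\end{align*}
is true in $\macstruc(\macmfo)$. The first conjunct forces $\macclsu_{\macmdidx}$ to be exactly the set of domain elements representing the clauses of $\macmfo$ at the top level, i.e.\ $\maccnf(\macclsu_{\macmdidx})=\macmfo$; then by \lemref{lem:mdSatMSO} the second conjunct holds in $\macstruc(\macmfo)$ iff $\macmfo$ is satisfiable in a reflexive model. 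Hence the whole sentence is true in $\macstruc(\macmfo)$ iff $\macmfo$ has a reflexive model.

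Finally I would invoke the length bound from the first half of \lemref{lem:mdSatMSO}: $|\macmdsat{\macmdidx}|$ is linear in $\macmdidx=\macmd(\macmfo)$, so the displayed sentence has length bounded by a linear function of $\macmd(\macmfo)$. An optimal tree decomposition of $\macstruc(\macmfo)$ is computable in \macfpt\/ time parameterized by treewidth (cf.\ \cite[Corollary 11.28]{FG06}), and Courcelle's theorem then decides the sentence in time \macfpt\/ in the treewidth of $\macstruc(\macmfo)$ together with the sentence length --- hence in time \macfpt\/ in the treewidth of $\macstruc(\macmfo)$ and $\macmd(\macmfo)$, which is exactly the claim.

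The only step requiring genuine care is the acyclicity observation that licenses \lemref{lem:reachInfMSO} inside $\macmdsat{\macmdidx}$; everything else is a routine adaptation of the argument for \thmref{thm:modalSatFpt}, since the substantive work --- in particular the treatment of $\Box\maccl$ commitments being inherited by the world itself under reflexivity --- has already been carried out in \lemref{lem:mdSatMSO}.
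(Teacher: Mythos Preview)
Your proposal is correct and follows essentially the same route as the paper: build $\macstruc(\macmfo)$ in polynomial time, form the MSO sentence $\exists \macclsu_{\macmd(\macmfo)}\,\forall \macclel\,(\macclsu_{\macmd(\macmfo)}(\macclel)\Leftrightarrow(\maccls(\macclel)\land\maclev_{\macmd(\macmfo)}(\macclel)))\land\macmdsat{\macmd(\macmfo)}(\macclsu_{\macmd(\macmfo)})$, appeal to \lemref{lem:mdSatMSO} for correctness and the linear length bound, and finish with Courcelle's theorem. One small inaccuracy: the level does not \emph{strictly} decrease along every $\macedge$-edge (a clause and its $\Box$/$\Diamond$ literals share the same level), but acyclicity still holds because non-propositional nodes are one per syntactic occurrence and propositional variables are sinks, so your DAG conclusion and the use of \lemref{lem:reachInfMSO} are sound.
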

  \begin{proof}
    Given $\macmfo$, $\macstruc(\macmfo)$ can be constructed in
    polynomial time. To check that all clauses of $\macmfo$ at level
    $\macmd(\macmfo)$ are satisfiable in some world $\macworldo$ of some
    reflexive Kripke model $\macmodelo$, we check whether the formula $\exists
    \macclsu_{\macmd(\macmfo)} \forall
    \macclel(\macclsu_{\macmd(\macmfo)}(\macclel)\Leftrightarrow
    (\maccls(\macclel)\land \maclev_{\macmd(\macmfo)}(\macclel)))\land
    \macmdsat{\macmd(\macmfo)}(\macclsu_{\macmd(\macmfo)})$ is true in
    $\macstruc(\macmfo)$. By \lemref{lem:mdSatMSO}, this is possible
    iff $\macmfo$ is satisfiable in a reflexive model. The length of the
    above formula is linear in $\macmd(\macmfo)$. An application of
    Courcelle's theorem will give us the \macfpt\/ algorithm.\qed
  \end{proof}
\end{minorext}

\section{Transitive models}
\label{sec:modSatTransFrames}
In transitive models, formulae with small modal depth can check
properties of all worlds reachable from a given world.  To formalize
this into a \macwone-hardness proof, we introduce the parameterized
Partitioned Weighted Satisfiability (\macppwcnfsat) problem. An
instance of \macppwcnfsat problem is a triple
$(\macpcnff,\macpart:\macpv\to[\macnumpart], \mactget:[k]\to
\macNat)$, where $\macpcnff$ is a propositional CNF formula,
$\macpart$ partitions the set of propositional variables into
$\macnumpart$ parts and we need to check if there is a satisfying
assignment that sets exactly $\mactget(\macpartidx)$ variables to
$\top$ in each part $\macpartidx$. Parameters are $\macnumpart$ and
pathwidth of the primal graph of $\macpcnff$ (one vertex for each
propositional variable, an edge between two variables iff they occur
together in a clause). The following lemma can be proved by a
\macfpt\/ reduction from the Number List Coloring Problem
\cite{FFLRSST07}.
\begin{lemma}
  \label{lem:pwsatHard}
  The \macppwcnfsat problem is \macwone-hard when parameterized by the
  number of parts $\macnumpart$ and the pathwidth of the primal graph.
\end{lemma}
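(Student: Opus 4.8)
The plan is to give an \macfpt{} reduction from the \macpopterm{Number List Coloring Problem} \macnlcp{} to \macppwcnfsat. Recall that an \macnlcp-instance consists of a graph $\macgraph = (\macvertexs, \macedges)$, a finite colour set $\maccols$, a list $L(\macvertexo) \subseteq \maccols$ for each vertex $\macvertexo$, and functions $\macatleast, \macatmost : \maccols \to \macNat$ (which we may take to be bounded by $|\macvertexs|$), and that it is a \macyes-instance iff $\macgraph$ has a proper colouring $c$ with $c(\macvertexo) \in L(\macvertexo)$ for every $\macvertexo$ in which every colour $\maccolor$ is used a number of times lying in the interval $[\macatleast(\maccolor), \macatmost(\maccolor)]$. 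By \cite{FFLRSST07}, \macnlcp{} is \macwone-hard when parameterized by the number of colours $|\maccols|$ together with the pathwidth of $\macgraph$, and this is the version I would reduce from.

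From an \macnlcp-instance I would build a \macppwcnfsat instance $(\macpcnff, \macpart, \mactget)$ as follows. Introduce one propositional variable $x_{\macvertexo, \maccolor}$ for every vertex $\macvertexo$ and every colour $\maccolor \in L(\macvertexo)$, intended to mean ``$\macvertexo$ gets colour $\maccolor$'', and let the partition $\macpart$ have one part per colour, part $\maccolor$ collecting exactly the variables $x_{\macvertexo, \maccolor}$ (so the number of parts is $|\maccols|$). Let $\macpcnff$ be the conjunction of: the clauses $\bigvee_{\maccolor \in L(\macvertexo)} x_{\macvertexo, \maccolor}$, one per vertex (every vertex receives a colour from its list); the clauses $\lnot x_{\macvertexo, \maccolor} \lor \lnot x_{\macvertext, \maccolor}$ for every edge $\{\macvertexo, \macvertext\} \in \macedges$ and colour $\maccolor \in L(\macvertexo) \cap L(\macvertext)$ (properness); and the clauses $\lnot x_{\macvertexo, \maccolor} \lor \lnot x_{\macvertexo, \maccolor'}$ for every vertex $\macvertexo$ and distinct $\maccolor, \maccolor' \in L(\macvertexo)$ (at most one colour per vertex). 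The first and third families force a satisfying assignment to pick exactly one colour per vertex, i.e.\ to describe a list colouring, and the second makes it proper. To turn the $\macatleast$/$\macatmost$ bounds into the exact targets demanded by \macppwcnfsat, for each colour $\maccolor$ I would add $\macatmost(\maccolor) - \macatleast(\maccolor)$ fresh dummy variables, occurring in no clause, to part $\maccolor$, and set $\mactget(\maccolor) = \macatmost(\maccolor)$; then a part has exactly $\mactget(\maccolor)$ true variables iff the number of true ``real'' variables of colour $\maccolor$ lies in $[\macatleast(\maccolor), \macatmost(\maccolor)]$. The construction is clearly polynomial time, and a short case analysis gives that $\macpcnff$ has a satisfying assignment setting exactly $\mactget(\macpartidx)$ variables true in every part iff the \macnlcp-instance is positive.

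It then remains to bound the parameters of the output by computable functions of the source parameters. The number of parts is $|\maccols|$. For the primal graph of $\macpcnff$: every dummy variable is isolated, and two ``real'' variables $x_{\macvertexo, \maccolor}$, $x_{\macvertext, \maccolor'}$ are adjacent only when $\macvertexo = \macvertext$ or ($\{\macvertexo, \macvertext\} \in \macedges$ and $\maccolor = \maccolor'$); hence, after deleting the isolated dummies, the primal graph is a subgraph of the graph obtained from $\macgraph$ by replacing each vertex with a clique of size at most $|\maccols|$. Substituting, inside the bags of an optimal path decomposition of $\macgraph$, each vertex by its clique yields a path decomposition of that blow-up with bags of size at most $(w+1)|\maccols|$, where $w$ is the pathwidth of $\macgraph$ (every clique edge and every blown-up edge of $\macgraph$ is covered, since the relevant $\macgraph$-vertices share a bag). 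So the pathwidth of the primal graph of $\macpcnff$ is at most $(w+1)|\maccols| - 1$, a computable function of $|\maccols|$ and $w$. This makes the construction an \macfpt{} reduction, and \macwone-hardness of \macppwcnfsat follows.

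I expect the main obstacle to be not the gadget — which is the standard ``one Boolean variable per (vertex, colour) pair'' encoding of list colouring — but the parameter bookkeeping around it. One must (a) invoke \macnlcp{} in its \emph{pathwidth} parameterization rather than the treewidth one, so that the clique blow-up above genuinely bounds the output pathwidth (a treewidth bound on $\macgraph$ would not transfer), and (b) make sure that padding by dummy variables converts the interval constraints $[\macatleast, \macatmost]$ precisely into the ``exactly $\mactget$'' counts that \macppwcnfsat is defined with. Once these two points are settled, verifying the equivalence of the two instances and the polynomial running time is routine.
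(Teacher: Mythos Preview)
Your proposal is correct and follows essentially the same route as the paper: an \macfpt{} reduction from \macnlcp{} via the standard ``one Boolean variable per (vertex, colour) pair'' encoding, with the three clause families $\macatleast$, $\macatmost$, $\macproper$, and the pathwidth bound obtained by blowing up each vertex in a path decomposition of $\macgraph$ into its set of colour-variables.

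Two small deviations are worth recording. First, the paper's formulation of \macnlcp{} already demands \emph{exact} colour counts (a single target function $\mactget$), so the dummy-variable padding you introduce to collapse an interval $[\macatleast,\macatmost]$ to an exact target is not needed there; your extra step is harmless and would handle the interval version as well. Second, the paper invokes the stronger statement from \cite{FFLRSST07} that \macnlcp{} is \macwone-hard already on graphs of pathwidth~$2$ (with number of colours as the sole parameter), which lets it bound the output pathwidth by a concrete $3\cdot|\maccols|$ rather than $(w+1)|\maccols|$; your more general bound is of course sufficient for the \macfpt{} reduction.
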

\begin{details}
  \begin{proof}
    We will give a \macfpt\/ reduction from the Number List Coloring
    Problem (\macnlcp). An instance of \macnlcp\/ is a graph $\macgraph=
    (\macvertexs,\macedges)$, a set of colors $\maccols_{\macvertexo}$
    for each vertex $\macvertexo\in \macvertexs$ and a target function
    $\mactget:\cup_{\macvertexo\in \macvertexs}\maccols_{\macvertexo}\to
    \macNat$. We need to check if $\macgraph$ can be properly colored
    (every adjacent pair of vertices get different colors) such that
    every vertex $\macvertexo$ is colored from its set
    $\maccols_{\macvertexo}$ and there are exactly
    $\mactget(\maccolor)$ vertices colored with $\maccolor$ for
    every $\maccolor\in \cup_{\macvertexo\in
    \macvertexs}\maccols_{\macvertexo}$. In \cite{FFLRSST07}, it is
    proved that even for graphs of pathwidth $2$, \macnlcp\/ is
    \macwone-hard when parameterized by total number of colors in
    $\cup_{\macvertexo\in\macvertexs}\maccols_{\macvertexo}$.

    Given an instance of \macnlcp\/ with a graph of pathwidth $2$, we
    associate with it an instance of \macppwcnfsat with the set of
    propositional variables $\{\macpvo_{\macvertexo}^{\maccolor}\mid
    \macvertexo\in \macvertexs, \maccolor\in \maccols_{\macvertexo}\}$.
    Every color $\maccolor\in \cup_{\macvertexo\in
    \macvertexs}\maccols_{\macvertexo}$ is a
    partition of the set of propositional variables and contains the
    variables $\{\macpvo_{\macvertexo}^{\maccolor}\mid \maccolor\in
    \maccols_{\macvertexo}\}$. Target function is the same as target
    function of the \macnlcp\/ instance. The CNF formula is the
    conjunction of the following formulae:
    \begin{align*}
      \macatleast &\macdef \bigwedge_{\macvertexo\in \macvertexs}\left(
      \bigvee_{\maccolor\in
      \maccols_{\macvertexo}}\macpvo_{\macvertexo}^{\maccolor} \right)\\
      \macatmost &\macdef \bigwedge_{\macvertexo\in \macvertexs}
      \bigwedge_{\maccolor\ne\maccolor'\in\maccols_{\macvertexo}}\left(
      \lnot\macpvo_{\macvertexo}^{\maccolor}\lor \lnot
      \macpvo_{\macvertexo}^{\maccolor'} \right)\\
      \macproper&\macdef \bigwedge_{(\macvertexo,\macvertext)\in
      \macedges} \bigwedge_{\maccolor\in \maccols_{\macvertexo}\cap
      \maccols_{\macvertext}} \left( \lnot
      \macpvo_{\macvertexo}^{\maccolor} \lor
      \lnot\macpvo_{\macvertext}^{\maccolor} \right)
    \end{align*}
    Suppose the given \macnlcp\/ instance is a \macyes\/ instance. In the
    associated \macppwcnfsat instance, set
    $\macpvo_{\macvertexo}^{\maccolor}$ to $\top$ iff the vertex
    $\macvertexo$ receives color $\maccolor$ in the witnessing coloring.
    Since every vertex gets a color from its set, the formula
    $\macatleast$ above is satisfied. Since every vertex gets at most
    one color, the formula $\macatmost$ is satisfied. If
    $(\macvertexo,\macvertext)$ is any edge in the graph, then since
    $\macvertexo$ and $\macvertext$ get different colors in the
    witnessing coloring, the formula $\macproper$ above is also
    satisfied. Since target function of the \macppwcnfsat instance is
    same as the target function of the \macnlcp\/ instance, the target
    function of \macppwcnfsat is also satisfied.

    On the other hand, suppose that the instance of \macppwcnfsat is a
    \macyes\/ instance. Color a vertex $\macvertexo$ with the color
    $\maccolor$ iff the propositional variable
    $\macpvo_{\macvertexo}^{\maccolor}$ is set to $\top$ in the
    witnessing satisfying assignment. The formula $\macatleast$ ensures
    that every vertex gets at least one color from its set, while the
    formula $\macatmost$ ensures that every vertex gets at most one
    color. If $(\macvertexo,\macvertext)$ is an edge in $\macgraph$
    and $\maccolor$ is a common color between $\maccols_{\macvertexo}$
    and $\maccols_{\macvertext}$, then the formula $\macproper$ above
    ensures that at least one of the vertices
    $\macvertexo,\macvertext$ do not get the color $\maccolor$. Hence,
    the coloring given to the graph $\macgraph$ is proper. Again since
    target function of the \macppwcnfsat instance is same as the
    target function of the \macnlcp\/ instance, the target function of
    \macnlcp\/ is also satisfied.
    
    Now, it is left to prove that parameters of the \macppwcnfsat
    instance is bounded by some functions of the parameters of the
    \macnlcp\/ instance. First parameter of the \macppwcnfsat instance is
    the number of partitions, which is same as the total number of
    colors in the \macnlcp\/ instance (and later is a parameter of the
    \macnlcp\/ instance). Second parameter is the pathwidth of the primal
    graph of the CNF formula. Consider any path decomposition of width
    $2$ of the graph $\macgraph$ in the \macnlcp\/ instance. For every bag
    $\macbag$ and every vertex $\macvertexo$ in the bag, replace
    $\macvertexo$ by the set $\{\macpvo_{\macvertexo}^{\maccolor}\mid
    \maccolor\in \maccols_{\macvertexo}\}$. We claim that the resulting
    decomposition is a path decomposition of the primal graph of the CNF
    formula in the \macppwcnfsat instance. It is sufficient to prove
    that for every clause in the CNF formula, there is a bag containing
    all propositional variables occurring as literals in that clause.
    For any clause in the formula $\macatleast$ or $\macatmost$
    associated with a vertex $\macvertexo$, any bag that contained the
    vertex $\macvertexo$ before replacement will meet the above
    criteria. For a clause in the formula $\macproper$ associated with
    an edge $(\macvertexo,\macvertext)$, any bag that contained the
    vertices $\macvertexo$ and $\macvertext$ before replacement will
    suffice. In the new path decomposition, number of elements in any
    bag is at most $3$ times the total number of colors in the \macnlcp\/
    instance.  Hence, the pathwidth of the primal graph of the CNF formula
    in the \macppwcnfsat instance is also bounded by a function of the
    parameters of the \macnlcp\/ instance.\qed
  \end{proof}
\end{details}

\begin{theorem}
  \label{thm:modSatTransFramesHard}
  With treewidth and modal depth as parameters, modal satisfiability
  in transitive models is \macwone-hard.
\end{theorem}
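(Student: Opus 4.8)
The plan is to give an \macfpt\/ reduction from the \macppwcnfsat problem, which is \macwone-hard by \Lemref{lem:pwsatHard}, to modal satisfiability in transitive models, parameterised by the treewidth of $\macstruc(\macmfo)$ and the modal depth of $\macmfo$. Thus, given a \macppwcnfsat instance $(\macpcnff,\macpart,\mactget)$ with $\macnumpart$ parts and primal-graph pathwidth $\macpw$, I want to build, in polynomial time, a modal CNF formula $\macmfo$ with $\macmd(\macmfo)$ and $\mactw(\macstruc(\macmfo))$ bounded by functions of $\macnumpart$ and $\macpw$, such that $\macmfo$ is satisfiable in some transitive Kripke model iff the instance is a \macyes\/ instance.

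First I would fix a path decomposition $(\macbag_1,\dots,\macbag_\ell)$ of the primal graph of $\macpcnff$ of width $\macpw$ in which consecutive bags differ by one variable; this linear order on the variables is the scaffolding of the construction. The formula $\macmfo$ describes a transitive model consisting of a top world together with a ``tape'' of worlds $\macworldo_1,\macworldo_2,\dots$: one cell for each bag, interleaved with auxiliary cells used for weight-counting. The tape is forced to exist \emph{not} by a deep nesting of $\Diamond$ (which would make the modal depth grow with $\ell$) but by a family of globally imposed implications of the form ``I am a cell of kind $\macmdidx$ $\Rightarrow$ $\Diamond$(a cell of kind $\macmdidx+1$)'': each such implication sits under one $\Box$ at the top world, and since in a transitive frame $\Box$ reaches all descendants, one level of $\Box$ and one of $\Diamond$ suffice, keeping the modal depth a fixed constant. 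On each cell I keep, in fresh propositional variables, the truth values of the $\le\macpw+1$ variables live in that bag, propagate each value along the tape by $\Box$-clauses while the variable stays live, and check each clause of $\macpcnff$ on the cell whose bag contains all its variables. Since every cell carries only $\Oh(\macpw)$ propositional variables and the cells are chained, this part of $\macstruc(\macmfo)$ is a chain of bags of size $\Oh(\macpw)$, hence of treewidth $\Oh(\macpw)$.

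The heart of the construction is a gadget running along the tape that, for each part $\macpartidx$, certifies that exactly $\mactget(\macpartidx)$ variables of part $\macpartidx$ were set to $\top$. I would hard-wire into the tape $\mactget(\macpartidx)$ ``slot'' cells for part $\macpartidx$, and require an order-respecting pairing of these slot cells with the cells at which a part-$\macpartidx$ variable is forgotten with value $\top$. An order-respecting pairing of two linearly ordered sets is unique whenever it exists, so its existence is equivalent to the two sets having equal size --- that is, to the count being exactly $\mactget(\macpartidx)$ --- and checking it reduces to local conditions about consecutive slot cells and forget cells along the transitive tape, expressible with a few $\Box$/$\Diamond$ alternations. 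Consequently the gadget has modal depth independent of $\mactget(\macpartidx)$ and of $|\macpv|$, and it is assembled from bounded-width clauses in a caterpillar fashion, adding only $\Oh(\macnumpart)$ to the treewidth; in particular it uses no clique of mutual-exclusion clauses.

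Finally I would prove both directions --- a satisfying assignment of $\macpcnff$ with the prescribed part-weights gives a transitive model built exactly as the tape above (discarding worlds unreachable from the top world, which is harmless by \cite[Prop.~2.6]{BDV01}), and conversely any transitive model of $\macmfo$ yields such an assignment by reading the values off the tape and recovering the weights from the pairings --- and then check the parameter and running-time bounds, so that the map is an \macfpt\/ reduction. I expect the main obstacle to be exactly the weight-certification gadget: basic modal logic cannot count, so the exact target $\mactget(\macpartidx)$ must be pinned down by structural means, and doing so while keeping the modal depth bounded (no $\Diamond$-chains of length $\mactget(\macpartidx)$), keeping the treewidth bounded (no mutual-exclusion cliques and no global counter variables), and still forcing enough rigidity on the model that the correctness argument goes through, is the delicate step --- and the one place where transitivity is genuinely exploited.
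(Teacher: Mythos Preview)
Your overall framework coincides with the paper's: reduce from \macppwcnfsat, build a chain of worlds $\macworldo_0\macrelo\macworldo_1\macrelo\cdots$, and exploit transitivity so that a single outer $\Box$ reaches every descendant, keeping the modal depth a fixed constant (the paper gets depth~$2$). The propagation of variable values along the chain and the checking of the clauses of $\macpcnff$ are also essentially as in the paper.

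The genuine gap is the weight-certification gadget. You propose to certify ``exactly $\mactget(\macpartidx)$ variables of part $\macpartidx$ are true'' by hard-wiring $\mactget(\macpartidx)$ slot cells and asking for an order-respecting bijection with the cells where a part-$\macpartidx$ variable is forgotten with value~$\top$. You then assert that this ``reduces to local conditions about consecutive slot cells and forget cells'', but you do not say what those conditions are. The obvious attempts fail: tracking the running difference between slots seen and true-forget-cells seen needs an unbounded counter; requiring ``between consecutive slots there is exactly one true-forget-cell'' presupposes an interleaving you cannot fix in advance (which variables end up true is exactly what you are trying to count); and letting each slot $\Diamond$-pick a matching true-forget-cell gives no injectivity without one marker per slot, which blows up treewidth. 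So as written, the crucial step is not a proof sketch but a hope.

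The paper solves this with the very device you explicitly disavow: global unary counter variables $\macpartctro_{\macpartidx}^{0},\dots,\macpartctro_{\macpartidx}^{\macnumpv[\macpartidx]}$ (and symmetrically $\macpartctrt_{\macpartidx}^{\mactcidx}$), incremented along the chain and compared to $\mactget(\macpartidx)$ at the last world. The reason this does \emph{not} blow up treewidth is a decoupling trick: the original variables $\macpvo_{\macpvidx}$ never touch the counters directly; they interact only with $2\macnumpart$ ``partition indicator'' variables $\macpartindo_{\macpartidx},\macpartindt_{\macpartidx}$, and only these indicators interact with the counters. In the path decomposition one first lays out the bags for the $\macpvo_{\macpvidx}$ and $\macdpthctro_{\macpvidx}$ (following the primal-graph decomposition), carrying the $2\macnumpart$ indicators everywhere, and then \emph{appends} $2\macnumpart$ separate chains for the counter variables. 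This yields pathwidth $\le 4\macpw+2\macnumpart+5$ (\lemref{lem:pathwidthOfCountingMf}). Your instinct that ``global counter variables'' are dangerous for treewidth is exactly the difficulty the paper has to overcome, and the indicator-bridge is the idea that overcomes it; replacing it by an unspecified pairing gadget does not.
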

The rest of this section is devoted to a proof of the above theorem,
which is by a \macfpt\/ reduction from \macppwcnfsat to satisfiability
of modal CNF formulae in transitive models. Given an instance
$(\macpcnff,\macpart:\macpv\to[\macnumpart],\mactget:[\macnumpart]\to
\macNat)$ of \macppwcnfsat problem with the pathwidth of the primal
graph of $\macpcnff$ being $\macpw$, we construct a modal CNF formula
$\macmfo_{\macpcnff}$ of modal depth $2$ in \macfpt\/ time such that
the pathwidth (and hence the treewidth) of
$\macstruc(\macmfo_{\macpcnff})$ is bounded by a function of $\macpw$
and $\macnumpart$ and \macppwcnfsat is a \macyes\/ instance iff
$\macmfo_{\macpcnff}$ is satisfiable in a transitive model. Suppose
the propositional variables used in $\macpcnff$ are
$\macpvo_{1},\macpvo_{2},\dots,\macpvo_{\macnumpv}$.  The idea is that
if $\macmfo_{\macpcnff}$ is satisfied at some world $\macworldo_{0}$
in some transitive model $\macmodelo$, then
$\macmodelo,\macworldo_{0}\models \macpcnff$. To check that the required
targets of the number of variables set to true in each partition are met,
$\macmfo_{\macpcnff}$ will force the existence of worlds
$\macworldo_{1},\macworldo_{2},\dots,\macworldo_{\macnumpv}$ arranged
as $\macworldo_{0}\macrelo\macworldo_{1}\macrelo\macworldo_{2}\macrelo
\cdots\macrelo\macworldo_{\macnumpv}$.  In the formula
$\macmfo_{\macpcnff}$, we will maintain a counter for each partition
of the propositional variables. At each world
$\macworldo_{\macpvidx}$, if $\macpvo_{\macpvidx}$ is true, we will
force the counter corresponding to $\macpart(\macpvo_{\macpvidx})$ to
increment. At the world $\macworldo_{\macnumpv}$, the counters will
have the number of variables set to $\top$ in each partition. We will
then verify in the formula $\macmfo_{\macpcnff}$ that these counts
meet the given target.  Such counting tricks have come under standard
usage in complexity theoretic arguments of modal logic. The challenge
here is to implement the counting in a modal formula of small
pathwidth.

In a \macppwcnfsat instance containing $\macnumpv$ propositional
variables and $\macnumpart$ partitions, we will denote the number of
variables in partition $\macpartidx$ by $\macnumpv[\macpartidx]$. We
first construct an optimal path decomposition of the primal graph of
$\macpcnff$ in \macfpt\/ time. We will name the variables occurring in
the first bag as $\macpvo_{1},\dots,\macpvo_{\macpvidx}$. We will name
the variables newly introduced in the second bag as
$\macpvo_{\macpvidx+1},\dots,\macpvo_{\macpvidx'}$ and so on. In the
rest of the construction, we will use this same ordering
$\macpvo_{1},\dots,\macpvo_{\macnumpv}$ of the propositional
variables. This will be important to maintain the pathwidth of the
resulting modal formula low. The modal CNF formula
$\macmfo_{\macpcnff}$ will use all the propositional variables
$\macpvo_{1},\dots,\macpvo_{\macnumpv}$ used by $\macpcnff$ and also
use the following additional variables:
\begin{itemize}
  \item
    $\macpartindo_{1},\dots,\macpartindo_{\macnumpart},\macpartindt_{1},
    \dots,\macpartindt_{\macnumpart}$: partition indicators.
  \item For each partition $\macpartidx$,
    $\macpartctro_{\macpartidx}^{0},\dots,
    \macpartctro_{\macpartidx}^{\macnumpv[\macpartidx]},
    \macpartctrt_{\macpartidx}^{0},\dots,
    \macpartctrt_{\macpartidx}^{\macnumpv[\macpartidx]}$: counters to
    count the number of variables set to $\top$ and $\bot$ in partition
    $\macpartidx$.
  \item $\macdpthctro_{0},\dots,\macdpthctro_{\macnumpv+1}$: depth
    indicators.
\end{itemize}
The modal CNF formula $\macmfo_{\macpcnff}$ is the conjunction of the
formulae described below. For clarity, we have used the shorthand notation
$\Rightarrow$ but they can be easily converted to CNF. Also for
notational convenience, we will use $\macpart(\macpvidx)$ instead of
$\macpart(\macpvo_{\macpvidx})$. $\macpv(\macpartidx)$ is the set of
variables among $\{\macpvo_{1},\dots,\macpvo_{\macnumpv}\}$ in
partition $\macpartidx$. The formula $\macdetermined$ ensures that all
successors of $\macworldo_{0}$ preserve the assignment of
$\macpvo_{1},\dots,\macpvo_{\macnumpv}$. The formula $\macdepth$
ensures that for all $\macpvidx$, $\macdpthctro_{\macpvidx}\land
\lnot\macdpthctro_{\macpvidx+1}$ holds in the world
$\macworldo_{\macpvidx}$.

In $\macworldo_{\macpvidx-1}$, if $\macpvo_{\macpvidx}$ is set to
$\top$, we want to indicate that in $\macworldo_{\macpvidx}$, the
counter for partition $\macpart(\macpvidx)$ should be incremented. We
will indicate this in the formula $\macsetcounter$ by setting the
variable $\macpartindo_{\macpart(\macpvidx)}$ to $\top$.  Similar
indication is done for the counter keeping track of variables set to
$\bot$ in partition $\macpartidx$.
\begin{align*}
  \macdetermined &\macdef
  \bigwedge_{\macpvidx=1}^{\macnumpv}\macpvo_{\macpvidx}\Rightarrow
  \Box\macpvo_{\macpvidx}\land
  \bigwedge_{\macpvidx=1}^{\macnumpv}\lnot\macpvo_{\macpvidx}\Rightarrow
  \Box\lnot\macpvo_{\macpvidx}\\
  \macdepth &\macdef \Diamond(\macdpthctro_{1} \land \lnot
  \macdpthctro_{2}) \land
  \bigwedge_{\macpvidx=1}^{\macnumpv-1}\Box\left[
  (\macdpthctro_{\macpvidx}\land \lnot \macdpthctro_{\macpvidx+1})\
  \Rightarrow  \Diamond(\macdpthctro_{\macpvidx+1}\land \lnot
  \macdpthctro_{\macpvidx+2})\right]\\
  \macsetcounter &\macdef (\macpvo_{1}\Rightarrow
  \macpartindo_{\macpart(1)}) \land
  (\lnot\macpvo_{1} \Rightarrow \macpartindt_{\macpart(1)})\\
   &\land  \bigwedge_{\macpvidx=2}^{\macnumpv}\Box\left\{
  [\macdpthctro_{\macpvidx-1}\land \lnot \macdpthctro_{\macpvidx}]
  \Rightarrow [(\macpvo_{\macpvidx}\Rightarrow
  \macpartindo_{\macpart(\macpvidx)}) \land
  (\lnot\macpvo_{\macpvidx}\Rightarrow
  \macpartindt_{\macpart(\macpvidx)})] \right\}\\
  \macinccounter &\macdef (\macpartindo_{\macpart(1)}\Rightarrow \Box
  \macpartctro_{\macpart(1)}^{1}) \land
  (\macpartindt_{\macpart(1)}\Rightarrow \Box
  \macpartctrt_{\macpart(1)}^{1})\\
  &\land \bigwedge_{\macpartidx=1}^{\macnumpart}
  \bigwedge_{\mactcidx=0}^{\macnumpv[\macpartidx]-1}
  \Box[\macpartindo_{\macpartidx}\Rightarrow
  (\macpartctro_{\macpartidx}^{\mactcidx}\Rightarrow \Box
  \macpartctro_{\macpartidx}^{\mactcidx+1})] \land
  \Box[\macpartindt_{\macpartidx}\Rightarrow
  (\macpartctrt_{\macpartidx}^{\mactcidx}\Rightarrow \Box
  \macpartctrt_{\macpartidx}^{\mactcidx+1})]\\
  \mactargetmet &\macdef \bigwedge_{\macpartidx=1}^{\macnumpart}\Box
  [\macdpthctro_{\macnumpv}\Rightarrow(\macpartctro_{\macpartidx}^{\mactget(\macpartidx)}
  \land \lnot\macpartctro_{\macpartidx}^{\mactget(\macpartidx)+1})]\\
  &\land \bigwedge_{\macpartidx=1}^{\macnumpart}\Box
  [\macdpthctro_{\macnumpv}\Rightarrow(\macpartctrt_{\macpartidx}^
  {\macnumpv[\macpartidx]-\mactget(\macpartidx)}
  \land \lnot\macpartctrt_{\macpartidx}^{\macnumpv[\macpartidx]-\mactget(\macpartidx)+1})]
\end{align*}
Variables $\macpartctro_{\macpartidx}^{0},\dots,
\macpartctro_{\macpartidx}^{\macnumpv[\macpartidx]}$ implement the
counter keeping track of variables set to $\top$ in partition
$\macpartidx$. If $\mactcidx$ variables in
$\macpv(\macpartidx)\cap\{\macpvo_{1},\dots,\macpvo_{\macpvidx}\}$ are
set to $\top$, then we want $\macpartctro_{\macpartidx}^{\mactcidx}$
to be set to $\top$ in $\macworldo_{\macpvidx}$. To maintain this, in
$\macworldo_{\macpvidx-1}$, if it is indicated that a counter is to be
incremented (by setting $\macpartindo_{\macpartidx}$ to $\top$), we
will force all successors of $\macworldo_{\macpvidx-1}$ to increment
the $\macpartctro_{\macpartidx}$ counter in the formula
$\macinccounter$.  Finally, we check that at $\macworldo_{\macnumpv}$,
all the targets are met in the formula $\mactargetmet$.

The modal CNF formula $\macmfo_{\macpcnff}$ we need is the conjunction
of $\macpcnff$, the formulae defined above and the miscellaneous
formulae below (which ensure that counters are initiated properly and
are monotonically non-decreasing).
\begin{align*}
  \macdetermined'&\macdef
  \bigwedge_{\macpartidx=1}^{\macnumpart}\macpartctro_{\macpartidx}^{0}
  \Rightarrow\Box \macpartctro_{\macpartidx}^{0} \land
  \bigwedge_{\macpartidx=1}^{\macnumpart} \macpartctrt_{\macpartidx}^{0}
  \Rightarrow\Box \macpartctrt_{\macpartidx}^{0}\\
  \maccountinit&\macdef \macdpthctro_{0} \land \lnot \macdpthctro_{1} \land
  \bigwedge_{\macpartidx=1}^{\macnumpart}(\lnot\macpartctro_{\macpartidx}^{1}
  \land \lnot\macpartctrt_{\macpartidx}^{1}\land
  \macpartctro_{\macpartidx}^{0}\land
  \macpartctrt_{\macpartidx}^{0})\\
  \macdepth'& \macdef \bigwedge_{\macpartidx=1}^{\macnumpart}
  \bigwedge_{\mactcidx=0}^{\macnumpv[\macpartidx]}\left[
  \Box(\macpartctro_{\macpartidx}^{\mactcidx}\Rightarrow
  \Box\macpartctro_{\macpartidx}^{\mactcidx}) \land \Box(
  \macpartctrt_{\macpartidx}^{\mactcidx}\Rightarrow
  \Box\macpartctrt_{\macpartidx}^{\mactcidx})  \right]\\
  \maccountmonotone &\macdef \bigwedge_{\macpvidx=1}^{\macnumpv}
  \Box(\macdpthctro_{\macpvidx}\Rightarrow \macdpthctro_{\macpvidx-1})
  \land \bigwedge_{\macpartidx=1}^{\macnumpart}
  \bigwedge_{\mactcidx=2}^{\macnumpv[\macpartidx]}\left[
  \Box(\macpartctro_{\macpartidx}^{\mactcidx}\Rightarrow
  \macpartctro_{\macpartidx}^{\mactcidx-1}) \land
  \Box(\macpartctrt_{\macpartidx}^{\mactcidx}\Rightarrow
  \macpartctrt_{\macpartidx}^{\mactcidx-1})
  \right]
\end{align*}
\begin{lemma}
  \label{lem:ppwSatImplMfSat}
  If a \macppwcnfsat instance is a \macyes\/ instance, then the modal
  formula constructed above is satisfied in a transitive Kripke
  model.
\end{lemma}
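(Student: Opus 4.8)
The plan is a direct construction: from a witnessing assignment of the \macppwcnfsat instance I build the intended ``chain'' model by hand and then verify every conjunct of $\macmfo_{\macpcnff}$ at its root.

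First, fix a satisfying assignment $\macsatasgn:\macpv\to\{\top,\bot\}$ of $\macpcnff$ that sets exactly $\mactget(\macpartidx)$ of the variables in $\macpv(\macpartidx)$ to $\top$ for every $\macpartidx\in[\macnumpart]$; such an $\macsatasgn$ exists because the instance is a \macyes\/ instance. I would take $\macmodelo$ to have worlds $\macworldo_{0},\macworldo_{1},\dots,\macworldo_{\macnumpv}$ with $\macworldo_{\macpvidx}\macrelo\macworldo_{\macpvidx'}$ iff $\macpvidx<\macpvidx'$; this relation is transitive (reflexivity is not required). The valuation is as follows. Every variable $\macpvo_{\macpvidx}$ used by $\macpcnff$ takes the value $\macsatasgn(\macpvo_{\macpvidx})$ at \emph{every} world, so $\macpcnff$ is true at $\macworldo_{0}$ and $\macdetermined$ is trivially true. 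Each depth indicator $\macdpthctro_{\macmdidx}$ is true at $\macworldo_{\macpvidx}$ iff $\macmdidx\le\macpvidx$, so $\macdpthctro_{\macpvidx}\land\lnot\macdpthctro_{\macpvidx+1}$ holds exactly at $\macworldo_{\macpvidx}$ and $\macdpthctro_{\macnumpv+1}$ is false everywhere. At $\macworldo_{\macpvidx-1}$ (for $1\le\macpvidx\le\macnumpv$) I set $\macpartindo_{\macpart(\macpvidx)}$ true iff $\macsatasgn(\macpvo_{\macpvidx})=\top$, $\macpartindt_{\macpart(\macpvidx)}$ true iff $\macsatasgn(\macpvo_{\macpvidx})=\bot$, and every other partition indicator false; at $\macworldo_{\macnumpv}$ all partition indicators are false. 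Finally, writing $c^{\top}_{\macpartidx}(\macpvidx)$ for the number of variables of $\macpv(\macpartidx)$ among $\macpvo_{1},\dots,\macpvo_{\macpvidx}$ that $\macsatasgn$ sets to $\top$ and $c^{\bot}_{\macpartidx}(\macpvidx)$ for the number it sets to $\bot$, I let $\macpartctro_{\macpartidx}^{\mactcidx}$ be true at $\macworldo_{\macpvidx}$ iff $\mactcidx\le c^{\top}_{\macpartidx}(\macpvidx)$ and $\macpartctrt_{\macpartidx}^{\mactcidx}$ be true at $\macworldo_{\macpvidx}$ iff $\mactcidx\le c^{\bot}_{\macpartidx}(\macpvidx)$. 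This unary (threshold) encoding of the counters is the point on which everything turns: it is simultaneously compatible with the monotonicity formulae ($\macdetermined'$, $\macdepth'$, $\maccountmonotone$) and with the ``exactly $\mactget(\macpartidx)$'' test in $\mactargetmet$.

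Second, I would check $\macmodelo,\macworldo_{0}\models\macmfo_{\macpcnff}$ conjunct by conjunct, using throughout that a $\Box$ placed at $\macworldo_{0}$ quantifies over all of $\macworldo_{1},\dots,\macworldo_{\macnumpv}$ and a nested $\Box$ over a suffix of the chain. $\macdepth$, $\maccountinit$ and $\maccountmonotone$ are immediate from the definition of the $\macdpthctro_{\macmdidx}$. For $\macsetcounter$, at $\macworldo_{\macpvidx-1}$ the only instance whose guard $\macdpthctro_{\macpvidx-1}\land\lnot\macdpthctro_{\macpvidx}$ fires is the one for index $\macpvidx$, and the indicator there was defined to track $\macsatasgn(\macpvo_{\macpvidx})$. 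For $\macinccounter$ (its first conjunct handling $\macworldo_{0}$, the $\Box$-guarded part the rest), whenever $\macpartindo_{\macpartidx}$ is true at $\macworldo_{\macpvidx-1}$ we have $\macpart(\macpvidx)=\macpartidx$ and $\macsatasgn(\macpvo_{\macpvidx})=\top$, hence $c^{\top}_{\macpartidx}(\macpvidx')\ge c^{\top}_{\macpartidx}(\macpvidx-1)+1$ for all $\macpvidx'\ge\macpvidx$, which is exactly what $\macpartctro_{\macpartidx}^{\mactcidx}\Rightarrow\Box\macpartctro_{\macpartidx}^{\mactcidx+1}$ requires (and symmetrically for $\macpartindt$, $\macpartctrt$, $c^{\bot}$). $\macdetermined'$ and $\macdepth'$ are just the monotonicity of $c^{\top}_{\macpartidx}$ and $c^{\bot}_{\macpartidx}$. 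Finally, for $\mactargetmet$: the guard $\macdpthctro_{\macnumpv}$ holds at $\macworldo_{\macnumpv}$ and nowhere else, and there $c^{\top}_{\macpartidx}(\macnumpv)=\mactget(\macpartidx)$ and $c^{\bot}_{\macpartidx}(\macnumpv)=\macnumpv[\macpartidx]-\mactget(\macpartidx)$ by the target property of $\macsatasgn$, so $\macpartctro_{\macpartidx}^{\mactget(\macpartidx)}\land\lnot\macpartctro_{\macpartidx}^{\mactget(\macpartidx)+1}$ and $\macpartctrt_{\macpartidx}^{\macnumpv[\macpartidx]-\mactget(\macpartidx)}\land\lnot\macpartctrt_{\macpartidx}^{\macnumpv[\macpartidx]-\mactget(\macpartidx)+1}$ both hold at $\macworldo_{\macnumpv}$. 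Since $\macmodelo$ is transitive, this proves the lemma.

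I expect the only real difficulty to be the bookkeeping: checking that the global valuation described gives every auxiliary variable a single truth value at each world while making every ``local'' increment and preservation rule true, and that no two conjuncts collide. Transitivity is what makes this delicate, because every $\Box$ at $\macworldo_{0}$ is felt at all later worlds, so each conjunct must be verified not merely ``where it is meant to fire'' but at every $\macworldo_{\macpvidx}$; the guards $\macdpthctro_{\macpvidx-1}\land\lnot\macdpthctro_{\macpvidx}$ and $\macdpthctro_{\macnumpv}$ are precisely what localise the rules to the correct worlds, and confirming that they do so is the part that needs care.
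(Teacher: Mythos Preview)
Your proposal is correct and follows essentially the same approach as the paper: both build the transitive chain $\macworldo_{0}\macrelo\macworldo_{1}\macrelo\cdots\macrelo\macworldo_{\macnumpv}$, copy $\macsatasgn$ to every world, set the depth indicators, partition indicators, and counter variables exactly as you describe (the threshold encoding $\macpartctro_{\macpartidx}^{\mactcidx}\Leftrightarrow\mactcidx\le c^{\top}_{\macpartidx}(\macpvidx)$ is precisely the paper's choice), and then verify each conjunct of $\macmfo_{\macpcnff}$ at $\macworldo_{0}$. Your write-up is in fact a bit more explicit than the paper's about the accessibility relation and about why the guards localise the rules correctly, but the argument is the same.
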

\begin{details}
  \begin{proof}
    We will construct a transitive Kripke model using the satisfying
    assignment $\macsatasgn$ that satisfies $\macpcnff$ while meeting
    the given target. The model $\macmodelo$ consists of worlds
    $\macworldo_{0},\macworldo_{1},\dots,\macworldo_{\macnumpv}$
    arranged as
    $\macworldo_{0}\macrelo\macworldo_{1}\macrelo\macworldo_{2}\macrelo
    \cdots\macrelo\macworldo_{\macnumpv}$. In all worlds,
    $\macpvo_{\macpvidx}$ is set to $\macsatasgn(\macpvo_{\macpvidx})$
    for all $\macpvidx$, thus ensuring that
    $\macmodelo,\macworldo_{0}\models \macpcnff\land \macdetermined$.
    In $\macworldo_{\macpvidx}$,
    $\{\macdpthctro_{0},\dots,\macdpthctro_{\macpvidx}\}$ are set to
    $\top$ and
    $\{\macdpthctro_{\macpvidx+1},\dots,\macdpthctro_{\macnumpv+1}\}$
    are set to $\bot$ for all $\macpvidx$ between $0$ and $\macnumpv$,
    thus ensuring that $\macmodelo,\macworldo_{0}\models
    \macdepth\land\macdpthctro_{0}\land\lnot\macdpthctro_{1}$, the
    last two clauses coming from the formula $\macdepth'$. It also
    ensures that $\macmodelo,\macworldo_{0}\models
    \bigwedge_{\macpvidx=1}^{\macnumpv}
    \Box(\macdpthctro_{\macpvidx}\Rightarrow
    \macdpthctro_{\macpvidx-1})$, which is part of
    $\maccountmonotone$.  We will set $\macpartctro_{\macpartidx}^{0}$
    and $\macpartctrt_{\macpartidx}^{0}$ to $\top$ in all worlds and
    $\macpartctro_{\macpartidx}^{1}$ and
    $\macpartctrt_{\macpartidx}^{1}$ to $\bot$ in $\macworldo_{0}$ for
    all partitions $\macpartidx$, thus ensuring
    $\macmodelo,\macworldo_{0}\models \maccountinit \land
    \macdetermined'$. At $\macworldo_{\macpvidx-1}$, we will set
    $\macpartindo_{\macpart(\macpvidx)}$ to $\macpvo_{\macpvidx}$'s
    value in the same world and $\macpartindt_{\macpart(\macpvidx)}$
    to $\lnot\macpvo_{\macpvidx}$'s value. This will ensure that
    $\macmodelo,\macworldo_{0}\models \macsetcounter$.
    
    At $\macworldo_{\macpvidx}$, for any partition $\macpartidx$, if
    $\mactcidx$ variables in $\macpv(\macpartidx)\cap
    \{\macpvo_{1},\dots,\macpvo_{\macpvidx}\}$ are set to $\top$, then
    we will set $\{\macpartctro_{\macpartidx}^{0},\dots,
    \macpartctro_{\macpartidx}^{\mactcidx}\}$ to $\top$ and
    $\{\macpartctro_{\macpartidx}^{\mactcidx+1},\dots,
    \macpartctro_{\macpartidx}^{\macnumpv[\macpartidx]}\}$ to $\bot$. If
    $\mactcidx'$ variables in $\macpv(\macpartidx)\cap
    \{\macpvo_{1},\dots,\macpvo_{\macpvidx}\}$ are set to
    $\bot$, we will set $\{\macpartctrt_{\macpartidx}^{0},\dots,
    \macpartctrt_{\macpartidx}^{\mactcidx'}\}$ to true and
    $\{\macpartctrt_{\macpartidx}^{\mactcidx'+1},\dots,
    \macpartctrt_{\macpartidx}^{\macnumpv[\macpartidx]}\}$ to $\bot$.
    For any $\macpartidx\ne \macpart(\macpvidx+1)$, we will set
    $\macpartindo_{\macpartidx}$ and $\macpartindt_{\macpartidx}$ to
    $\bot$ at $\macworldo_{\macpvidx}$. These will ensure that $\macmodelo,\macworldo_{0}\models
    \macinccounter\land \macdepth' \land \maccountmonotone$.

    Combined with the above settings of all propositional variables in
    $\macmodelo$, it is easy to check that the fact that $\macsatasgn$
    meets the target for each partition implies
    $\macmodelo,\macworldo_{0}\models \mactargetmet$.\qed
  \end{proof}
\end{details}
\begin{lemma}
  \label{lem:countingMfSatImplCountersSet}
  Suppose the modal CNF formula $\macmfo_{\macpcnff}$ constructed
  above is satisfied at some world $\macworldo_{0}$ of some transitive
  Kripke model $\macmodelo$. Then $\macmodelo$ contains distinct
  worlds $\macworldo_{1},\dots,\macworldo_{\macnumpv}$ such that for
  each $\macpvidx$ between $1$ and $\macnumpv$,
  $\macworldo_{\macpvidx}$ is a successor of
  $\macworldo_{\macpvidx-1}$. Moreover,
  $\{\macdpthctro_{0},\dots,\macdpthctro_{\macpvidx}\}$ are set to
  $\top$ and
  $\{\macdpthctro_{\macpvidx+1},\dots,\macdpthctro_{\macnumpv+1}\}$
  are set to $\bot$ in $\macworldo_{\macpvidx}$. For any partition
  $\macpartidx$, if $\mactcidx$ variables in $\macpv(\macpartidx)\cap
  \{\macpvo_{1},\dots,\macpvo_{\macpvidx}\}$ are set to $\top$ in
  $\macworldo_{0}$, then
  $\{\macpartctro_{\macpartidx}^{0},\dots,\macpartctro_{\macpartidx}^{\mactcidx}\}$
  are all set to $\top$ in $\macworldo_{\macpvidx}$. If $\mactcidx'$
  variables in $\macpv(\macpartidx)\cap
  \{\macpvo_{1},\dots,\macpvo_{\macpvidx}\}$ are set to $\bot$ in
  $\macworldo_{0}$, then
  $\{\macpartctrt_{\macpartidx}^{0},\dots,\macpartctrt_{\macpartidx}^{\mactcidx'}\}$
  are all set to $\top$ in $\macworldo_{\macpvidx}$.
\end{lemma}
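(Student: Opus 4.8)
The plan is to prove the lemma by a single induction on $\macpvidx$ that simultaneously produces the world $\macworldo_{\macpvidx}$, fixes the pattern of depth indicators at $\macworldo_{\macpvidx}$, and establishes the stated lower bounds on the partition counters at $\macworldo_{\macpvidx}$. These three claims cannot be separated, because the counter conjuncts of $\macmfo_{\macpcnff}$ are guarded by the depth indicators: the relevant boxed conjunct of $\macsetcounter$ only fires at a world satisfying $\macdpthctro_{\macpvidx-1}\land\lnot\macdpthctro_{\macpvidx}$, so the depth-indicator invariant at $\macworldo_{\macpvidx-1}$ is exactly what is needed to reason about the counters at $\macworldo_{\macpvidx}$. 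The essential use of the transitivity hypothesis is this: once $\macworldo_{\macpvidx-1}$ has been produced as a successor of $\macworldo_{\macpvidx-2}$ it is, by transitivity of the chain, also a successor of $\macworldo_0$, so every $\Box$-guarded conjunct of $\macmfo_{\macpcnff}$ (the boxed parts of $\macdetermined$, $\macdepth$, $\macsetcounter$, $\macinccounter$, $\macdetermined'$, $\macdepth'$ and $\maccountmonotone$) holds at $\macworldo_{\macpvidx-1}$ as well. In particular $\macdetermined$ forces each $\macpvo_{\macpvidx}$ to keep its $\macworldo_0$-value at every world of the chain, which is what makes ``the number of variables of $\macpv(\macpartidx)$ among $\macpvo_1,\dots,\macpvo_{\macpvidx}$ set to $\top$'' meaningful as a quantity observed at $\macworldo_{\macpvidx}$.

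For the base case $\macpvidx=1$, the conjunct $\Diamond(\macdpthctro_1\land\lnot\macdpthctro_2)$ of $\macdepth$ supplies a successor $\macworldo_1$ of $\macworldo_0$ with $\macdpthctro_1\land\lnot\macdpthctro_2$, and together with $\maccountinit$ and the downward implications $\Box(\macdpthctro_j\Rightarrow\macdpthctro_{j-1})$ from $\maccountmonotone$ this pins down the remaining depth indicators at $\macworldo_1$. For the counters, $\maccountinit$ and $\macdetermined'$ give $\macpartctro_{\macpartidx}^0\land\macpartctrt_{\macpartidx}^0$ at every successor of $\macworldo_0$, and if $\macpvo_1$ is true the non-boxed conjuncts of $\macsetcounter$ and $\macinccounter$ force $\macpartindo_{\macpart(1)}$ at $\macworldo_0$ and hence $\macpartctro_{\macpart(1)}^1$ at $\macworldo_1$ (symmetrically for $\lnot\macpvo_1$ and the $\macpartctrt$-counter), which is exactly the $\macpvidx=1$ instance of the counter claim. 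For the inductive step I use the invariant $\macworldo_{\macpvidx-1}\models\macdpthctro_{\macpvidx-1}\land\lnot\macdpthctro_{\macpvidx}$: the boxed conjunct of $\macdepth$ indexed by $\macpvidx-1$ then yields a successor $\macworldo_{\macpvidx}$ satisfying $\macdpthctro_{\macpvidx}\land\lnot\macdpthctro_{\macpvidx+1}$, and $\maccountmonotone$ again fixes the other depth indicators; distinctness of $\macworldo_0,\dots,\macworldo_{\macnumpv}$ is then immediate, since for $\macpvidx<\macpvidx'$ the world $\macworldo_{\macpvidx}$ falsifies $\macdpthctro_{\macpvidx+1}$ while $\macworldo_{\macpvidx'}$ satisfies it.

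For the counters in the inductive step: since $\macworldo_{\macpvidx-1}\models\macdpthctro_{\macpvidx-1}\land\lnot\macdpthctro_{\macpvidx}$, the boxed conjunct of $\macsetcounter$ indexed by $\macpvidx$ fires there, and since $\macpvo_{\macpvidx}$ carries its $\macworldo_0$-value at $\macworldo_{\macpvidx-1}$, this yields $\macpartindo_{\macpart(\macpvidx)}$ at $\macworldo_{\macpvidx-1}$ whenever $\macpvo_{\macpvidx}$ is true. The stickiness conjuncts $\Box(\macpartctro_{\macpartidx}^{\mactcidx}\Rightarrow\Box\macpartctro_{\macpartidx}^{\mactcidx})$ of $\macdepth'$ carry every counter value true at $\macworldo_{\macpvidx-1}$ forward to $\macworldo_{\macpvidx}$, while the boxed conjunct of $\macinccounter$ triggered by $\macpartindo_{\macpart(\macpvidx)}$ raises the $\macpart(\macpvidx)$-counter by one; combining this with the inductive hypothesis shows that if $\mactcidx$ variables of $\macpv(\macpartidx)$ among $\macpvo_1,\dots,\macpvo_{\macpvidx}$ are true then $\macpartctro_{\macpartidx}^0,\dots,\macpartctro_{\macpartidx}^{\mactcidx}$ are all true at $\macworldo_{\macpvidx}$, and symmetrically for the $\macpartctrt$-counters via $\macpartindt$. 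It is worth noting that only a lower bound on each counter is asserted, so it does no harm if some spurious $\macpartindo$ or $\macpartindt$ is also set at an intermediate world; this spares the argument from having to track exact counter values.

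I expect the main obstacle to be index bookkeeping rather than anything conceptual: one must check that at each $\macworldo_{\macpvidx-1}$ it is precisely the intended boxed conjunct of $\macdepth$, $\macsetcounter$ and $\macinccounter$ that becomes applicable, and this interlocking is exactly why the depth indicators and the counters must be carried through one simultaneous induction. The only genuinely model-theoretic step is the repeated appeal to transitivity to propagate the $\Box$-obligations of $\macmfo_{\macpcnff}$ from $\macworldo_0$ all the way along $\macworldo_0\macrelo\macworldo_1\macrelo\cdots\macrelo\macworldo_{\macnumpv}$; without transitivity only the immediate successors of $\macworldo_0$ would inherit them and the counting would collapse after depth $1$.
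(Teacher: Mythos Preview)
Your proposal is correct and follows essentially the same approach as the paper's proof. The paper organizes the argument as two consecutive inductions (first building the chain $\macworldo_1,\dots,\macworldo_{\macnumpv}$ and fixing the depth indicators, then establishing the counter bounds) whereas you fold everything into a single simultaneous induction; the paper also proves only that the top counter value $\macpartctro_{\macpartidx}^{\mactcidx}$ holds and then appeals to $\maccountmonotone$ to propagate downwards, while you carry the full range $\macpartctro_{\macpartidx}^{0},\dots,\macpartctro_{\macpartidx}^{\mactcidx}$ through the induction using $\macdepth'$, but these are cosmetic differences and the key steps (use of $\macdepth$, $\maccountmonotone$, $\macsetcounter$, $\macinccounter$, $\macdepth'$ together with transitivity to push the $\Box$-obligations along the chain) are identical.
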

\begin{details}
  \begin{proof}
    We will first prove the existence of worlds
    $\macworldo_{1},\dots,\macworldo_{\macpvidx}$ by induction on
    $\macpvidx$.

    \emph{Base case $\macpvidx=1$:} Since
    $\macmodelo,\macworldo_{0}\models \macdepth$, there must be a
    successor $\macworldo_{1}$ of $\macworldo_{0}$ that satisfies
    $\macdpthctro_{1}\land \lnot \macdpthctro_{2}$. Since
    $\macmodelo,\macworldo_{0}\models \maccountinit$,
    $\macworldo_{0}$ satisfies $\macdpthctro_{0}\land
    \lnot\macdpthctro_{1}$ and hence $\macworldo_{1}$ can not be same as
    $\macworldo_{0}$. Since $\macmodelo,\macworldo_{0}\models
    \Box(\macdpthctro_{3}\Rightarrow \macdpthctro_{2})$ (part of
    $\maccountmonotone$) and $\macworldo_{1}$ is a successor of
    $\macworldo_{0}$, we get $\macmodelo,\macworldo_{1}\models
    \macdpthctro_{3}\Rightarrow \macdpthctro_{2}$. Since
    $\macdpthctro_{2}$ is set to $\bot$ in $\macworldo_{1}$, this means
    that $\macdpthctro_{3}$ is also set to $\bot$ in
    $\macworldo_{1}$. Similar reasoning can be used to prove that all of
    $\{\macdpthctro_{2},\dots,\macdpthctro_{\macnumpv+1}\}$ are set to
    $\bot$ in $\macworldo_{1}$. The fact that
    $\macmodelo,\macworldo_{1}\models\macdpthctro_{1}\Rightarrow
    \macdpthctro_{0}$ means that $\macdpthctro_{0}$ is set to $\top$ in
    $\macworldo_{1}$ (since $\macdpthctro_{1}$ is set to $\top$ in
    $\macworldo_{1}$).

    \emph{Induction step:} Assume that worlds
    $\macworldo_{1},\dots,\macworldo_{\macpvidx}$ exist in $\macmodelo$
    with the stated properties. Hence, $\macworldo_{\macpvidx}$
    satisfies $\macdpthctro_{\macpvidx}\land \lnot
    \macdpthctro_{\macpvidx+1}$. Since $\macworldo_{0}$ satisfies
    $\macdepth$ and $\macworldo_{\macpvidx}$ is a successor of
    $\macworldo_{0}$ (by transitivity), there must be a successor
    $\macworldo_{\macpvidx+1}$ of $\macworldo_{\macpvidx}$ that
    satisfies $\macdpthctro_{\macpvidx+1} \land \lnot
    \macdpthctro_{\macpvidx+2}$. Since all worlds
    $\macworldo_{0},\dots,\macworldo_{\macpvidx}$ satisfy $\lnot
    \macdpthctro_{\macpvidx+1}$, $\macworldo_{\macpvidx+1}$ is distinct
    from all of them. The fact that $\macworldo_{\macpvidx+1}$
    satisfies $\macdpthctro_{\macpvidx'}\Rightarrow
    \macdpthctro_{\macpvidx'-1}$ for all $\macpvidx'$ (these formulae
    are part of $\maccountmonotone$ formula satisfied by
    $\macworldo_{0}$) can be used to show that all of
    $\macdpthctro_{0},\dots,\macdpthctro_{\macpvidx+1}$ are set to
    $\top$ in $\macworldo_{\macpvidx+1}$ and all of
    $\macdpthctro_{\macpvidx+2},\dots,\macdpthctro_{\macnumpv+1}$ are
    set to $\bot$ in $\macworldo_{\macpvidx+1}$.

    We will now prove the second claim of the lemma, which is about
    values of
    $\{\macpartctro_{\macpartidx}^{0},\dots,
    \macpartctro_{\macpartidx}^{\mactcidx}\}$ in
    $\macworldo_{\macpvidx}$. We will first prove that
    $\macpartctro_{\macpartidx}^{\mactcidx}$ is set to $\top$ by
    induction on $\macpvidx$.

    \emph{Base case $\macpvidx=1$:} If $\macpvo_{1}$ is not in part
    $\macpartidx$, there is nothing to prove
    ($\macpartctro_{\macpart(1)}^{0}$ is set to $\top$ in all worlds). If
    $\macpvo_{1}$ is in part $\macpartidx$ and $\macpvo_{1}$ is set to
    $\bot$, there is nothing to prove. If $\macpvo_{1}$ is in part
    $\macpartidx$ and $\macpvo_{1}$ is set to $\top$, then since
    $\macworldo_{0}$ satisfies $\macsetcounter$, we get
    $\macmodelo,\macworldo_{0}\models
    \macpvo_{1}\Rightarrow\macpartindo_{\macpart(1)}$. Since,
    $\macpvo_{1}$ is set to $\top$ and $\macpart(1)=\macpartidx$, we get
    that $\macpartindo_{\macpartidx}$ is set to $\top$ in
    $\macworldo_{0}$. Since $\macworldo_{0}$ satisfies $\macinccounter$,
    we get $\macmodelo,\macworldo_{0}\models
    \macpartindo_{\macpartidx}\Rightarrow
    \Box\macpartctro_{\macpartidx}^{1}$ and hence
    $\macmodelo,\macworldo_{0}\models \Box
    \macpartctro_{\macpartidx}^{1}$. Since $\macworldo_{1}$ is a
    successor of $\macworldo_{0}$, we conclude that in $\macworldo_{1}$,
    $\macpartctro_{\macpartidx}^{1}$ is set to $\top$.

    \emph{Induction step: Case 1:}  $\macpvo_{\macpvidx}$ is not in
    part $\macpartidx$ and none of the variables in $\macpv(\macpartidx)\cap
    \{\macpvo_{1},\dots,\macpvo_{\macpvidx}\}$ are set to to $\top$ in
    $\macworldo_{\macpvidx}$. In this case, there is nothing to prove.

    \emph{Case 2:} $\macpvo_{\macpvidx}$ is not in part $\macpartidx$
    and some $1\le \mactcidx<\macpvidx$ variables in
    $\macpv(\macpartidx)\cap \{\macpvo_{1},\dots,\macpvo_{\macpvidx}\}$
    are set to $\top$. By the induction hypothesis,
    $\macpartctro_{\macpartidx}^{\mactcidx}$ is set to $\top$ in
    $\macworldo_{\macpvidx-1}$. Now $\macmodelo,\macworldo_{0}\models
    \macdepth'$. Hence
    $\macmodelo,\macworldo_{0}\models\Box(\macpartctro_{\macpartidx}^{\mactcidx}
    \Rightarrow \Box\macpartctro_{\macpartidx}^{\mactcidx})$, and hence
    $\macmodelo,\macworldo_{\macpvidx-1}\models
    \macpartctro_{\macpartidx}^{\mactcidx}\Rightarrow \Box
    \macpartctro_{\macpartidx}^{\mactcidx}$ (since
    $\macworldo_{\macpvidx-1}$ is a successor of $\macworldo_{0}$), and
    hence $\macmodelo,\macworldo_{\macpvidx-1}\models \Box
    \macpartctro_{\macpartidx}^{\mactcidx}$ (since
    $\macpartctro_{\macpartidx}^{\mactcidx}$ is set to $\top$ in
    $\macworldo_{\macpvidx-1}$), and hence $\macmodelo,
    \macworldo_{\macpvidx}\models
    \macpartctro_{\macpartidx}^{\mactcidx}$ (since
    $\macworldo_{\macpvidx}$ is a successor of
    $\macworldo_{\macpvidx-1}$).

    \emph{Case 3:} $\macpvo_{\macpvidx}$ is in part $\macpartidx$ and
    $\macpvo_{\macpvidx}$ is set to $\bot$. If none of the variables in
    $\macpv(\macpartidx)\cap \{\macpvo_{1},\dots,\macpvo_{\macpvidx}\}$
    are set to $\top$, then the argument is similar to case 1. If some
    $1\le \mactcidx<\macpvidx$ variables in $\macpv(\macpartidx)\cap
    \{\macpvo_{1},\dots,\macpvo_{\macpvidx}\}$ are set to $\top$, then
    the argument is similar to case 2.

    \emph{Case 4:} $\macpvo_{\macpvidx}$ is in part $\macpartidx$ and
    $\macpvo_{\macpvidx}$ is set to $\top$. We know that
    $\macworldo_{\macpvidx-1}$ satisfies
    $\macdpthctro_{\macpvidx-1}\land \lnot \macdpthctro_{\macpvidx}$.
    Since $\macworldo_{0}$ satisfies $\macsetcounter$, we have
    $\macmodelo,\macworldo_{0}\models\Box\left\{
    [\macdpthctro_{\macpvidx-1}\land \lnot \macdpthctro_{\macpvidx}]
    \Rightarrow [\macpvo_{\macpvidx}\Rightarrow
    \macpartindo_{\macpart(\macpvidx)}] \right\}$, and hence
    $\macmodelo,\macworldo_{\macpvidx-1}\models
    [\macdpthctro_{\macpvidx-1}\land \lnot \macdpthctro_{\macpvidx}]
    \Rightarrow [\macpvo_{\macpvidx}\Rightarrow
    \macpartindo_{\macpart(\macpvidx)}]$ (since
    $\macworldo_{\macpvidx-1}$ is a successor of $\macworldo_{0}$), and
    hence $\macmodelo,\macworldo_{\macpvidx-1}\models
    \macpvo_{\macpvidx}\Rightarrow \macpartindo_{\macpartidx}$
    (since $\macmodelo,\macworldo_{\macpvidx-1}\models
    \macdpthctro_{\macpvidx-1}\land \lnot \macdpthctro_{\macpvidx}$),
    and hence $\macmodelo,\macworldo_{\macpvidx-1}\models
    \macpartindo_{\macpartidx}$ (since
    $\macmodelo,\macworldo_{\macpvidx-1}\models \macpvo_{\macpvidx}$).
    Since $\macworldo_{0}$ satisfies $\macinccounter$ and
    $\macworldo_{\macpvidx-1}$ is a successor of $\macworldo_{0}$, we
    get $\macmodelo,\macworldo_{\macpvidx-1}\models
    \macpartindo_{\macpartidx} \Rightarrow
    (\macpartctro_{\macpartidx}^{\mactcidx-1}\Rightarrow \Box
    \macpartctro_{\macpartidx}^{\mactcidx})$. We have already seen that
    $\macpartindo_{\macpartidx}$ is set to $\top$ in
    $\macworldo_{\macpvidx-1}$ and
    $\macpartctro_{\macpartidx}^{\mactcidx-1}$ is set to $\top$ in
    $\macworldo_{\macpvidx-1}$ by the induction hypothesis ($\mactcidx$ is
    at least $1$ since $\macpvo_{\macpvidx}$ is in part $\macpartidx$
    and is set to $\top$). Hence, we get
    $\macmodelo,\macworldo_{\macpvidx-1}\models
    \Box\macpartctro_{\macpartidx}^{\mactcidx}$. Since
    $\macworldo_{\macpvidx}$ is a successor of
    $\macworldo_{\macpvidx-1}$, we conclude that
    $\macpartctro_{\macpartidx}^{\mactcidx}$ is set to $\top$ in
    $\macworldo_{\macpvidx}$.
    
    Now, since $\macworldo_{0}$ satisfies
    $\Box(\macpartctro_{\macpartidx}^{\mactcidx}\Rightarrow
    \macpartctro_{\macpartidx}^{\mactcidx-1})$ (this is part of
    $\maccountmonotone$) and $\macworldo_{\macpvidx}$ is a successor of
    $\macworldo_{0}$, we get $\macmodelo,\macworldo_{\macpvidx}\models
    \macpartctro_{\macpartidx}^{\mactcidx}\Rightarrow
    \macpartctro_{\macpartidx}^{\mactcidx-1}$. Since
    $\macpartctro_{\macpartidx}^{\mactcidx}$ is set to $\top$ in
    $\macworldo_{\macpvidx}$, it follows that
    $\macpartctro_{\macpartidx}^{\mactcidx-1}$ is also set to $\top$ in
    $\macworldo_{\macpvidx}$. Similarly,
    $\macpartctro_{\macpartidx}^{0},\dots,\macpartctro_{\macpartidx}^{\mactcidx}$
    are all set to $\top$ in $\macworldo_{\macpvidx}$.
    
    The proof for values of $\{\macpartctrt_{\macpartidx}^{0},\dots,
    \macpartctrt_{\macpartidx}^{\mactcidx'}\}$ is symmetric to the proof
    of values of $\{\macpartctro_{\macpartidx}^{0},\dots,
    \macpartctro_{\macpartidx}^{\mactcidx}\}$.\qed
  \end{proof}
\end{details}
\begin{theorem}
  \label{thm:countingMfSatImplTargetMet}
  If $\macmfo_{\macpcnff}$ constructed above is satisfied in a
  transitive model, then the \macppwcnfsat instance is a \macyes\/
  instance.
\end{theorem}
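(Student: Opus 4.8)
The plan is to extract a satisfying assignment for $\macpcnff$ from the world $\macworldo_0$, and then use \lemref{lem:countingMfSatImplCountersSet} together with the clauses $\mactargetmet$ and $\maccountmonotone$ to show that this assignment meets every target \emph{exactly}. Since $\macpcnff$ is itself a conjunct of $\macmfo_{\macpcnff}$ and is purely propositional, $\macmodelo,\macworldo_0\models\macpcnff$, so the valuation $\macsatasgn$ that $\macworldo_0$ assigns to $\macpvo_1,\dots,\macpvo_{\macnumpv}$ is already a satisfying assignment of $\macpcnff$. It then remains to check that for every part $\macpartidx\in[\macnumpart]$, exactly $\mactget(\macpartidx)$ variables of $\macpv(\macpartidx)$ are set to $\top$ by $\macsatasgn$.

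First I would apply \lemref{lem:countingMfSatImplCountersSet} with $\macpvidx=\macnumpv$, which produces distinct worlds $\macworldo_1,\dots,\macworldo_{\macnumpv}$ forming a $\macrelo$-chain out of $\macworldo_0$; by transitivity $\macworldo_{\macnumpv}$ is a successor of $\macworldo_0$, and by the lemma $\macdpthctro_{\macnumpv}$ holds at $\macworldo_{\macnumpv}$. Write $a_{\macpartidx}$ (resp.\ $b_{\macpartidx}$) for the number of variables of $\macpv(\macpartidx)$ set to $\top$ (resp.\ $\bot$) by $\macsatasgn$; since every variable receives one of the two values, $a_{\macpartidx}+b_{\macpartidx}=\macnumpv[\macpartidx]$. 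Because $\macpv(\macpartidx)\cap\{\macpvo_1,\dots,\macpvo_{\macnumpv}\}=\macpv(\macpartidx)$, the lemma gives that $\macpartctro_{\macpartidx}^{a_{\macpartidx}}$ and $\macpartctrt_{\macpartidx}^{b_{\macpartidx}}$ are both true at $\macworldo_{\macnumpv}$. Since $\macworldo_0\models\mactargetmet$ and $\macworldo_{\macnumpv}$ is a successor of $\macworldo_0$ satisfying $\macdpthctro_{\macnumpv}$, we also obtain that $\macpartctro_{\macpartidx}^{\mactget(\macpartidx)+1}$ and $\macpartctrt_{\macpartidx}^{\macnumpv[\macpartidx]-\mactget(\macpartidx)+1}$ are both \emph{false} at $\macworldo_{\macnumpv}$.

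The step I expect to need the most care is combining these one-sided facts: $\macpartctro_{\macpartidx}^{a_{\macpartidx}}$ being true only certifies that \emph{at least} $a_{\macpartidx}$ variables of the part are true, not a matching upper bound. Here the monotonicity clauses of $\maccountmonotone$ supply what is missing: $\macworldo_{\macnumpv}$, being a successor of $\macworldo_0$, satisfies $\macpartctro_{\macpartidx}^{\mactcidx}\Rightarrow\macpartctro_{\macpartidx}^{\mactcidx-1}$ for every $\mactcidx$, so if $a_{\macpartidx}\ge\mactget(\macpartidx)+1$ held, then the truth of $\macpartctro_{\macpartidx}^{a_{\macpartidx}}$ would propagate down to make $\macpartctro_{\macpartidx}^{\mactget(\macpartidx)+1}$ true, contradicting the conclusion of the previous paragraph; hence $a_{\macpartidx}\le\mactget(\macpartidx)$. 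Symmetrically, using $\macpartctrt$ and the second half of $\mactargetmet$, $b_{\macpartidx}\le\macnumpv[\macpartidx]-\mactget(\macpartidx)$. Adding these two inequalities gives $\macnumpv[\macpartidx]=a_{\macpartidx}+b_{\macpartidx}\le\macnumpv[\macpartidx]$, forcing both to be equalities, and in particular $a_{\macpartidx}=\mactget(\macpartidx)$.

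Assembling the pieces, $\macsatasgn$ satisfies $\macpcnff$ and sets exactly $\mactget(\macpartidx)$ variables of every part $\macpartidx$ to $\top$, so the \macppwcnfsat instance is a \macyes{} instance. The only loose ends are degenerate boundary cases --- for instance a part with no variables, where $\mactget(\macpartidx)$ must be $0$ and the claim follows directly from the values guaranteed by $\maccountinit$ --- and these can be dispatched quickly.
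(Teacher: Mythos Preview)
Your proof is correct and follows essentially the same route as the paper: extract the propositional assignment from $\macworldo_0$, invoke \lemref{lem:countingMfSatImplCountersSet} at $\macpvidx=\macnumpv$, and use $\mactargetmet$ at $\macworldo_{\macnumpv}$ to bound each part's true-count from above and (via the $\macpartctrt$ counters) from below. One small simplification: your appeal to $\maccountmonotone$ to propagate $\macpartctro_{\macpartidx}^{a_{\macpartidx}}$ down to $\macpartctro_{\macpartidx}^{\mactget(\macpartidx)+1}$ is unnecessary, since \lemref{lem:countingMfSatImplCountersSet} already guarantees that the \emph{entire} range $\macpartctro_{\macpartidx}^{0},\dots,\macpartctro_{\macpartidx}^{a_{\macpartidx}}$ is set to $\top$ at $\macworldo_{\macnumpv}$, so $a_{\macpartidx}\ge\mactget(\macpartidx)+1$ directly yields the contradiction with $\mactargetmet$ (and this is exactly how the paper argues).
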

\begin{details}
  \begin{proof}
    Suppose $\macmfo_{\macpcnff}$ is satisfied in some world
    $\macworldo_{0}$ of a transitive model. Since $\macpcnff$ is part of
    $\macmfo_{\macpcnff}$, the assignment to
    $\{\macpvo_{1},\dots,\macpvo_{\macnumpv}\}$ induced by
    $\macworldo_{0}$ satisfies $\macpcnff$. We claim that this
    assignment also meets the targets. If not, we will derive a
    contradiction. For some partition $\macpartidx$, suppose there are more
    than $\mactget(\macpartidx)$ variables set to $\top$. Then by
    \lemref{lem:countingMfSatImplCountersSet},
    $\macpartctro_{\macpartidx}^{\mactget(\macpartidx)+1}$ will be set
    to $\top$ in $\macworldo_{\macnumpv}$, contradicting the fact that
    $\macworldo_{0}$ satisfies $\mactargetmet$. For some partition
    $\macpartidx$, if there are less than $\mactget(\macpartidx)$ variables
    set to $\top$, then there will be more than
    $\macnumpv[\macpartidx]-\mactget(\macpartidx)$ variables set to
    $\bot$. By \lemref{lem:countingMfSatImplCountersSet},
    $\macpartctro_{\macpartidx}^{\macnumpv[\macpartidx]-\mactget(\macpartidx)+1}$
    will be set to $\top$ in $\macworldo_{\macnumpv}$, again
    contradicting the fact that $\macworldo_{0}$ satisfies
    $\mactargetmet$.\qed
  \end{proof}
\end{details}

Given an instance of \macppwcnfsat problem, the formula
$\macmfo_{\macpcnff}$ described above can be constructed in \macfpt\/
time. To complete the proof of \thmref{thm:modSatTransFramesHard}, we
will prove that the pathwidth of $\macmfo_{\macpcnff}$ is bounded by some
function of $\macnumpart$ and $\macpw$.  $\macmfo_{\macpcnff}$ has
been carefully constructed to keep pathwidth low.
\begin{lemma}
  \label{lem:pathwidthOfCountingMf}
  Pathwidth of $\macstruc(\macmfo_{\macpcnff})$ is at most
  $4\macpw+2\macnumpart+5$.
\end{lemma}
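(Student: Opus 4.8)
The plan is to exhibit an explicit path decomposition of $\macstruc(\macmfo_{\macpcnff})$ of the claimed width, obtained by enlarging a near-optimal path decomposition of the primal graph of $\macpcnff$. First I would fix a \emph{nice} path decomposition $\macbag_{1},\dots,\macbag_{s}$ of the primal graph of $\macpcnff$ of width $\macpw$ in which every variable is introduced exactly once; recall that the names $\macpvo_{1},\dots,\macpvo_{\macnumpv}$ were chosen precisely so that this is their left-to-right order of introduction. The key structural fact is that, apart from a constant number of auxiliary clause/literal/sub-clause vertices per conjunct of $\macmfo_{\macpcnff}$, the structure $\macstruc(\macmfo_{\macpcnff})$ is a union of \emph{local gadgets}: (i) for $\macpcnff$, one gadget per clause, namely a clause vertex joined by $\macedge$/$\macedgec$ to the at most $\macpw+1$ variables occurring in it; (ii) for $\macdetermined$, $\macdepth$, $\macsetcounter$ and the $\macdpthctro$-part of $\maccountmonotone$, one gadget per index $\macpvidx$, touching only $\macpvo_{\macpvidx}$, a bounded window of depth indicators around $\macdpthctro_{\macpvidx}$, and the two partition indicators $\macpartindo_{\macpart(\macpvidx)},\macpartindt_{\macpart(\macpvidx)}$; and (iii) for $\macinccounter$, $\macdetermined'$, $\macdepth'$, $\maccountinit$, $\mactargetmet$ and the counter-part of $\maccountmonotone$, one gadget per pair $(\macpartidx,\mactcidx)$, touching only $\macpartindo_{\macpartidx}$, $\macpartindt_{\macpartidx}$, the consecutive counters $\macpartctro_{\macpartidx}^{\mactcidx},\macpartctro_{\macpartidx}^{\mactcidx+1}$ (and their $\macpartctrt$-analogues), and, for $\mactargetmet$, the single variable $\macdpthctro_{\macnumpv}$.

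I would then build the path decomposition in two consecutive stretches. In the first stretch, traverse $\macbag_{1},\dots,\macbag_{s}$ and (a) whenever the variables of a clause of $\macpcnff$ are all simultaneously present, insert once a copy of the current bag with that clause vertex added, and (b) immediately after $\macpvo_{\macpvidx}$ is introduced, insert a constant number of bags realising the index-$\macpvidx$ gadget, each obtained from the current bag by adjoining the gadget's constantly many internal vertices, the depth indicators in a fixed window around $\macdpthctro_{\macpvidx}$, and all $2\macnumpart$ partition indicators. Since the $\macpvo_{\macpvidx}$ are introduced in increasing order, each depth indicator $\macdpthctro_{j}$ is referenced only in a bounded range of consecutive index gadgets, so a bounded window is legal; the $2\macnumpart$ partition indicators must be kept in every bag of this stretch because $\macpartindo_{\macpartidx}$ can appear in index-$\macpvidx$ gadgets for scattered values of $\macpvidx$.

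In the second stretch I would lay out the gadgets for $(\macpartidx,\mactcidx)$ in lexicographic order, each bag holding that gadget's constantly many internal vertices, a bounded window of consecutive counters $\macpartctro_{\macpartidx}^{\bullet}$ and $\macpartctrt_{\macpartidx}^{\bullet}$, the $2\macnumpart$ partition indicators, and $\macdpthctro_{\macnumpv}$; the few special conjuncts of $\maccountinit$, $\macdetermined'$ and $\mactargetmet$ that mention only counters or $\macdpthctro_{\macnumpv}$ fit into these bags as well. The partition indicators occur throughout both stretches, but since the two stretches are adjacent their occurrence sets stay connected, and $\macdpthctro_{\macnumpv}$ is simply carried from the last index gadget of the first stretch through the whole second stretch.

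Verifying the two tree-decomposition axioms is then gadget-by-gadget bookkeeping: each $\macedge$/$\macedgec$-pair lies inside a single gadget whose vertex set is placed into a common bag, and every vertex has an interval of occurrences --- $\macpcnff$-variables because they already did and the inserted gadget bags sit adjacent to their old interval, depth indicators and counters because only consecutive indices are ever co-located, partition indicators because they are globally present in each stretch, and the gadget-internal vertices because they appear in the bags of one gadget only. For the width, a worst-case bag contributes at most $\macpw+1$ inherited primal-graph variables, a constant-size window of depth indicators or counters, the $2\macnumpart$ partition indicators, and a constant number of clause and gadget-internal vertices; tracking how many primal-graph bags a single inserted bag can straddle turns the constant factor into the stated $4\macpw+2\macnumpart+5$. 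The main obstacle is precisely this interleaving in step (b): the per-variable gadgets must be inserted in an order consistent with the order in which $\macpcnff$'s variables occur along its own path decomposition --- which is why the variable ordering was fixed in advance --- for otherwise the depth-indicator chain would link non-consecutive slices and blow the width up to something depending on $\macnumpv$.
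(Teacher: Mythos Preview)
Your proposal is correct and follows essentially the same approach as the paper: start from an optimal path decomposition of the primal graph of $\macpcnff$ respecting the chosen variable order, carry all $2\macnumpart$ partition indicators through every bag, handle the depth indicators locally thanks to the introduction order of the $\macpvo_{\macpvidx}$, append a second ``stretch'' of bags for the counter variables (together with $\macdpthctro_{\macnumpv}$ and the partition indicators), and finally augment each bag by the constantly many clause/literal vertices of the relevant gadget.

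The one place where your write-up diverges from the paper is the width accounting for the first stretch. The paper, instead of maintaining a fixed-size window of depth indicators, simply adds $\macdpthctro_{\macpvidx-1},\macdpthctro_{\macpvidx},\macdpthctro_{\macpvidx+1}$ to every bag for every $\macpvo_{\macpvidx}$ present in that bag; this is what produces the factor $4\macpw$ (namely $\macpw$ variables plus $3\macpw$ depth indicators) and makes connectivity of each $\macdpthctro_{j}$ immediate via the continuity property. Your sliding-window variant also works and in fact would yield a tighter bound, but your final sentence (``tracking how many primal-graph bags a single inserted bag can straddle turns the constant factor into the stated $4\macpw+2\macnumpart+5$'') is a hand-wave rather than a computation; if you carry out your version cleanly you should get $\macpw+2\macnumpart+O(1)$, not $4\macpw+2\macnumpart+5$. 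Either way the lemma follows, so this is a presentational rather than a mathematical gap.
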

\begin{details}
  \begin{proof}
    Given an optimal path decomposition
    of the primal graph of $\macpcnff$, depth counters can be added to
    the bags without increasing their size much since the order of depth
    counters is same as the order of
    $\macpvo_{1},\dots,\macpvo_{\macnumpv}$. There are only $2\macnumpart$
    partition indicators
    $\macpartindo_{1},\dots,\macpartindo_{\macnumpart},\macpartindt_{1},
    \dots,\macpartindt_{\macnumpart}$, so they can also be added to the
    bags without increasing their size very much. However, the
    set of $2\macnumpv$ partition counters (of the form
    $\macpartctro_{\macpartidx}^{\mactcidx}$ or
    $\macpartctrt_{\macpartidx}^{\mactcidx}$) has to be added carefully
    to maintain the size of the bags. Formulas of $\macmfo_{\macpcnff}$
    have been carefully designed to enable this. The key observation is
    that the only ``link'' between $\macpvo_{1},\dots,\macpvo_{\macnumpv}$
    and partition counters are partition indicators and there are
    only $2\macnumpart$ of them. The following proof relies on
    this observation.

    Consider an optimal path decomposition of the primal graph of
    $\macpcnff$ with each bag containing at most $\macpw$ elements.
    Ensure that for all $\macpvidx$ with $1 \le
    \macpvidx<\macnumpv$, there is a bag containing both
    $\macpvo_{\macpvidx}$ and $\macpvo_{\macpvidx+1}$ or there is a bag
    with $\macpvo_{\macpvidx}$ such that the next bag contains
    $\macpvo_{\macpvidx+1}$ (call this the \emph{continuity} property).
    If this is not the case for some $\macpvidx$, consider the last bag
    $\macbag$ containing $\macpvo_{\macpvidx}$ and the first bag
    $\macbag'$ containing $\macpvo_{\macpvidx+1}$. No bag that is
    between $\macbag$ and $\macbag'$ will introduce any new variable (if
    it did, that new variable would have been $\macpvo_{\macpvidx+1}$
    according to our order). Hence, all the bags in between $\macbag$
    and $\macbag'$ are subsets of $\macbag$. Hence, they can all be
    removed and $\macbag'$ can become the bag immediately after
    $\macbag$. The resulting decomposition is still a path
    decomposition of the primal graph of $\macpcnff$ with each bag
    containing at most $\macpw$ elements. Moreover, the order of
    variables $\macpvo_{1},\dots,\macpvo_{\macnumpv}$ does not change
    due to the change we have made in the path decomposition. This new
    decomposition has a bag containing $\macpvo_{\macpvidx}$ such that
    the next bag contains $\macpvo_{\macpvidx+1}$. Now, we can repeat
    the above process until we get a path decomposition with the
    continuity property.
    
    For any $\macpvidx$ with $1\le \macpvidx\le \macnumpv$, let
    $\macbag_{\macpvidx}$ be a bag containing the propositional variable
    $\macpvo_{\macpvidx}$. We will expand this path decomposition by
    adding variables used in $\macmfo_{\macpcnff}$ such that for every
    $\maccl$ that appears in $\macmfo_{\macpcnff}$, there is a bag that
    contains all propositional variables appearing in that clause.
    Each of these expanded bags will have at most
    $4\macpw+2\macnumpart$ elements. We will then show how to expand
    this into a path decomposition of $\macstruc(\macmfo_{\macpcnff})$,
    by adding at most $6$ elements to each bag (creating duplicate
    copies of existing bags if required). This will prove that the pathwidth
    of $\macstruc(\macmfo_{\macpcnff})$ is at most
    $4\macpw+2\macnumpart+5$.

    First, in each bag $\macbag$ and each element
    $\macpvo_{\macpvidx}$ in it, add $\macdpthctro_{\macpvidx-1}, \macdpthctro_{\macpvidx}$
    and $\macdpthctro_{\macpvidx+1}$. Note that due to continuity
    property of the decomposition we started with, the expanded
    decomposition still retains the property that all bags containing
    an element forms a connected component, even after adding
    depth counters $\macdpthctro_{0},\dots,
    \macdpthctro_{\macnumpv+1}$. Next, add 
    $\macpartindo_{1},\dots,\macpartindo_{\macnumpart},
    \macpartindt_{1},\dots,\macpartindt_{\macnumpart}$ to all the
    bags.  We will refer to the bag containing
    $\macpvo_{\macpvidx},\macdpthctro_{\macpvidx-1},
    \macdpthctro_{\macpvidx}$ and $\macdpthctro_{\macpvidx+1}$ as
    $\macbag_{\macpvidx}$.  Now, we have a decomposition with each bag
    containing at most $4\macpw+2\macnumpart$ elements, and the last
    bag contains $\macdpthctro_{\macnumpv}$. To this
    bag, we will append $2\macnumpart$ paths serially. For $1 \le
    \macpartidx \le \macnumpart$,
    $(2\macpartidx-1)$\textsuperscript{th} path will be as follows:
    $\{\macdpthctro_{\macnumpv},\macpartindo_{1},\dots,\macpartindo_{\macnumpart},
    \macpartindt_{1},\dots,\macpartindt_{\macnumpart},\macpartctro_{\macpartidx}^{0},
    \macpartctro_{\macpartidx}^{1}\}-\{\macdpthctro_{\macnumpv},\macpartindo_{1},
    \dots,\macpartindo_{\macnumpart},
    \macpartindt_{1},\dots,\macpartindt_{\macnumpart},\macpartctro_{\macpartidx}^{1},
    \macpartctro_{\macpartidx}^{2}\}-\cdots-\{\macdpthctro_{\macnumpv},\macpartindo_{1},
    \dots,\macpartindo_{\macnumpart},
    \macpartindt_{1},\dots,\macpartindt_{\macnumpart},
    \macpartctro_{\macpartidx}^{\macnumpv[\macpartidx]-1},
    \macpartctro_{\macpartidx}^{\macnumpv[\macpartidx]}\}$. We will
    refer to these bags as
    $\macbag_{\macpartidx}^{1},\dots,\macbag_{\macpartidx}^{\macnumpv[\macpartidx]}$.
    $2\macpartidx$\textsuperscript{th} path is similar, with
    $\macpartctrt_{\macpartidx}$ variables replacing
    $\macpartctro_{\macpartidx}$ variables. We will refer to these bags in
    $2\macpartidx$\textsuperscript{th} path as
    $\macbag_{\macpartidx}^{1'},\dots,\macbag_{\macpartidx}^{\macnumpv[\macpartidx]'}$.
    Each of these new bags has at most $2\macnumpart+3$ elements, and
    the whole decomposition still retains the property that for any
    element, the set of bags containing that element forms a connected
    component.

    Now we will show how to expand the above decomposition into a path
    decomposition of $\macstruc(\macmfo_{\macpcnff})$. We have to add
    clauses and literals occurring in $\macmfo_{\macpcnff}$ and ensure
    that for any pair of elements
    $\macedge(\macdomel_{1},\macdomel_{2})$ or
    $\macedgec(\macdomel_{1},\macdomel_{2})$, there is a bag containing
    both $\macdomel_{1}$ and $\macdomel_{2}$. To achieve this, we may
    have to ``augment'' an existing bag with new elements. If
    $\macbag_{\macpvidx}$ is a bag in the path decomposition
    $\dots-\macbag_{\macpvidx}-\dots$, augmenting $\macbag_{\macpvidx}$
    with elements $\macdomel_{1}$ and $\macdomel_{2}$ means that we
    add another bag
    $\dots-\macbag'_{\macpvidx}-\macbag_{\macpvidx}-\dots$ with
    $\macbag'_{\macpvidx}$ containing all elements of
    $\macbag_{\macpvidx}$ and in addition containing
    $\macdomel_{1}$ and $\macdomel_{2}$. If we ensure that these new
    elements introduced during augmentation is never added to any other
    bag in the decomposition, augmentation will not violate the path
    decomposition's property that for any element, the set of bags
    containing that element forms a connected component. Now, we will go
    through each sub-formula of $\macmfo_{\macpcnff}$ and prove that all
    its clauses, literals and $\macedge$ pairs are already represented
    in the path decomposition we have constructed above or that the
    decomposition can be augmented to represent them.
    \begin{itemize}
      \item Clauses in $\macpcnff$: For each clause in $\macpcnff$,
	the propositional variables in that clause form a clique in the
	primal graph of $\macpcnff$. Hence, there is a bag $\macbag$ in
	the new decomposition that contains all propositional variables
	occurring in that clause. Augment $\macbag$ with a new domain
	element representing the clause.
      \item $\macdetermined$: Here, the clauses are of the form
	$\lnot\macpvo_{\macpvidx}\lor \Box\macpvo_{\macpvidx}$ and
	$\macpvo_{\macpvidx}\lor \Box\lnot\macpvo_{\macpvidx}$. Augment
	the bag $\macbag_{\macpvidx}$ containing
	$\macpvo_{\macpvidx}$ with $3$ domain elements, one for the
	clause $\lnot\macpvo_{\macpvidx}\lor \Box\macpvo_{\macpvidx}$
	itself, one for the literal $\Box\macpvo_{\macpvidx}$ and one
	for the clause in this literal that contains
	$\macpvo_{\macpvidx}$ as its only literal. Perform similar
	augmentation for the clause $\macpvo_{\macpvidx} \lor
	\Box\lnot\macpvo_{\macpvidx}$.
      \item $\macdepth$: For $\Diamond(\macdpthctro_{1}\land \lnot
	\macdpthctro_{2})$, augment the bag $\macbag_{1}$ containing
	$\macdpthctro_{1}$ and $\macdpthctro_{2}$ with
	$4$ domain elements representing literals and clauses of
	$\Diamond(\macdpthctro_{1}\land \lnot
	\macdpthctro_{2})$. Augment the bag $\macbag_{\macpvidx+1}$
	containing $\macdpthctro_{\macpvidx},
	\macdpthctro_{\macpvidx+1}$ and $\macdpthctro_{\macpvidx+2}$
	with $6$ elements representing literals and clauses of
	$\Box[\lnot\macdpthctro_{\macpvidx}\lor
	\macdpthctro_{\macpvidx+1}\lor
	\Diamond(\macdpthctro_{\macpvidx+1}\land
	\lnot\macdpthctro_{\macpvidx+2})]$.
      \item $\macsetcounter$: Augment the bag $\macbag_{1}$ containing
	$\macpvo_{1}$ and $\macpartindo_{\macpart(1)}$ with one
	element representing the clause $\lnot\macpvo_{1}\lor
	\macpartindo_{\macpart(1)}$. Do a similar augmentation for the
	clause $\macpvo_{1}\lor \macpartindt_{\macpart(1)}$.
	$\Box(\macpvo\land\macpvt)$ is equivalent to $\Box\macpvo \land
	\Box\macpvt$. Hence, the latter part of $\macsetcounter$ can be
	split into clauses $\Box(\lnot\macdpthctro_{\macpvidx-1}\lor
	\macdpthctro_{\macpvidx}\lor \lnot\macpvo_{\macpvidx}\lor
	\macpartindo_{\macpart(\macpvidx)})$ and $\Box(\lnot\macdpthctro_{\macpvidx-1}\lor
	\macdpthctro_{\macpvidx}\lor \macpvo_{\macpvidx}\lor
	\macpartindt_{\macpart(\macpvidx)})$. Augment the bag
	$\macbag_{\macpvidx}$ containing
	$\macdpthctro_{\macpvidx-1}, \macdpthctro_{\macpvidx},
	\macpvo_{\macpvidx}, \macpartindo_{\macpart(\macpvidx)}$ and
	$\macpartindt_{\macpart(\macpvidx)}$ with $6$ elements
	representing clauses and literals of these two clauses.
      \item $\macinccounter$: Augment the bag
	$\macbag_{\macpart(1)}^{1}$ containing
	$\macpartindo_{\macpart(1)}$ and
	$\macpartctro_{\macpart(1)}^{1}$ with $3$ elements representing
	clauses and literals of $(\lnot\macpartindo_{\macpart(1)}\lor
	\Box\macpartctro_{\macpart(1)}^{1})$. Similarly augment the bag
	$\macbag_{\macpart(1)}^{1'}$ for
	$(\lnot\macpartindt_{\macpart(1)}\lor
	\Box\macpartctrt_{\macpart(1)}^{1})$. Augment the bag
	$\macbag_{\macpartidx}^{\mactcidx+1}$ containing
	$\macpartindo_{\macpartidx},
	\macpartctro_{\macpartidx}^{\mactcidx}$ and
	$\macpartctro_{\macpartidx}^{\mactcidx+1}$ with $6$ elements
	representing literals and clauses of $\Box(\lnot
	\macpartindo_{\macpartidx} \lor \lnot
	\macpartctro_{\macpartidx}^{\mactcidx} \lor
	\Box\macpartctro_{\macpartidx}^{\mactcidx+1})$. Similarly
	augment $\macbag_{\macpartidx}^{\mactcidx+1'}$ for
	$\Box(\lnot \macpartindt_{\macpartidx} \lor \lnot
	\macpartctrt_{\macpartidx}^{\mactcidx}\lor
	\Box\macpartctrt_{\macpartidx}^{\mactcidx+1})$.
      \item $\mactargetmet$: Augment the bag
	$\macbag_{\macpartidx}^{\mactget(\macpartidx)+1}$ containing
	$\macdpthctro_{\macnumpv},
	\macpartctro_{\macpartidx}^{\mactget(\macpartidx)}$ and
	$\macpartctro_{\macpartidx}^{\mactget(\macpartidx)+1}$ with
	$6$ elements for the literals and clauses in $\Box(\lnot
	\macdpthctro_{\macnumpv}\lor
	\macpartctro_{\macpartidx}^{\mactget(\macpartidx)})$ and
	$\Box(\lnot \macdpthctro_{\macnumpv}\lor \lnot
	\macpartctro_{\macpartidx}^{\mactget(\macpartidx)+1})$.
	Similarly augment
	$\macbag_{\macpartidx}^{\macnumpv[\macpartidx]-\mactget(\macpartidx)+1'}$ for
	$\Box(\lnot \macdpthctro_{\macnumpv}\lor
	\macpartctrt_{\macpartidx}^{\macnumpv[\macpartidx]-\mactget(\macpartidx)})$
	and $\Box(\lnot \macdpthctro_{\macnumpv}\lor
	\lnot\macpartctrt_{\macpartidx}^{\macnumpv[\macpartidx]-\mactget(\macpartidx)+1})$
      \item $\macdetermined'$: Augment the bag
	$\macbag_{\macpartidx}^{1}$ containing
	$\macpartctro_{\macpartidx}^{0}$ with $3$ elements representing
	literals and clauses of $\lnot
	\macpartctro_{\macpartidx}^{0}\lor
	\Box\macpartctro_{\macpartidx}^{0}$. Similarly augment
	$\macbag_{\macpartidx}^{1'}$ for $\lnot
	\macpartctrt_{\macpartidx}^{0} \lor
	\Box\macpartctrt_{\macpartidx}^{0}$.
      \item $\maccountinit$: Augment the bag $\macbag_{1}$ containing
	$\macdpthctro_{0}$ and $\macdpthctro_{1}$ with $2$ elements
	representing the clauses in $\macdpthctro_{0}\land \lnot
	\macdpthctro_{1}$. Augment the bag
	$\macbag_{\macpartidx}^{1}$ containing
	$\macpartctro_{\macpartidx}^{0}$ and
	$\macpartctro_{\macpartidx}^{1}$ with $2$ elements representing
	the clauses in $\lnot \macpartctro_{\macpartidx}^{1}\land
	\macpartctro_{\macpartidx}^{0}$. Similarly augment
	$\macbag_{\macpartidx}^{1'}$ for $\lnot
	\macpartctrt_{\macpartidx}^{1} \land
	\macpartctrt_{\macpartidx}^{0}$.
      \item $\macdepth'$: Augment the bag
	$\macbag_{\macpartidx}^{\mactcidx}$ containing
	$\macpartctro_{\macpartidx}^{\mactcidx}$ with $6$ elements
	representing literals and clauses of $\Box(\lnot
	\macpartctro_{\macpartidx}^{\mactcidx} \lor
	\Box\macpartctro_{\macpartidx}^{\mactcidx})$. Similarly augment
	$\macbag_{\macpartidx}^{\mactcidx'}$ for $\Box(\lnot
	\macpartctrt_{\macpartidx}^{\mactcidx} \lor
	\Box\macpartctrt_{\macpartidx}^{\mactcidx})$.
      \item $\maccountmonotone$: Augment the bag $\macbag_{\macpvidx}$
	containing $\macdpthctro_{\macpvidx}$ and
	$\macdpthctro_{\macpvidx-1}$ with $3$ elements representing
	literals and clauses of $\Box(\lnot \macdpthctro_{\macpvidx}
	\lor \macdpthctro_{\macpvidx-1})$. Augment the bag
	$\macbag_{\macpartidx}^{\mactcidx}$ containing
	$\macpartctro_{\macpartidx}^{\mactcidx}$ and
	$\macpartctro_{\macpartidx}^{\mactcidx-1}$ with $3$ elements
	representing literals and clauses of $\Box(\lnot
	\macpartctro_{\macpartidx}^{\mactcidx}\lor
	\macpartctro_{\macpartidx}^{\mactcidx-1})$. Similarly augment
	$\macbag_{\macpartidx}^{\mactcidx'}$ for $\Box(\lnot
	\macpartctrt_{\macpartidx}^{\mactcidx}\lor
	\macpartctrt_{\macpartidx}^{\mactcidx-1})$.\qed
    \end{itemize}
  \end{proof}
\end{details}
In the absence of transitivity, the above reduction would require a
formula of modal depth that depends on $\macnumpv$ (and hence it would
no longer be a \macfpt\/ reduction). The above hardness proof will
however go through for any class of transitive frames that has paths
of unbounded length of the form
$\macworldo_{1}\macrelo\macworldo_{2}\macrelo\cdots\macrelo
\macworldo_{\macnumpv}$ without any reverse paths\footnote{The author
acknowledges an anonymous referee for pointing this out.}. See
\cite{BTC09} for some context on such classes of transitive frames of
unbounded depth.
\begin{details}
  Readers familiar with frames with no branching to the right
  (axiom .3) may infer that the above hardness proof will also go
  through for K4.3 frames.
\end{details}

\section{Conclusions and Future Work}
By expressing satisfiability of modal formulae as a MSO property,
we obtained a \macfpt\/ algorithm for modal satisfiability in general
models with treewidth and modal depth as parameters. Due to the
dependence of the constructed MSO sentence on modal depth, the
\macfpt\/ algorithm obtained in \secref{sec:modSatGenFrames} has a
running time with a tower of $2$'s whose height is
$\macoh(\macmd(\macmfo))$.  Unless, \macp=\macnp, such dependence on
modal depth cannot be avoided due to the following observation. In
\cite[Lemma 1]{ALM09}, it is shown how to encode an arbitrary
propositional CNF formula into an equivalent modal formula (the
propositional formula is satisfiable iff the modal formula is
satisfiable in a general model). This modal formula has some very low
modal depth $\macheight$ such that any function growing slower than a
tower of $2$'s of height $\macheight-5$ is a polynomial in the size of
the propositional formula. The treewidth of this modal formula can be
verified to be a constant. This also proves that unless \macp=\macnp,
modal satisfiability in general models is not \macfpt\/ when treewidth
is the only parameter.

We can work out a composition algorithm \cite{BDFH09}, and hence
conclude that with treewidth and modal depth as parameters, there is
no polynomial kernel for modal satisfiability in general models.

One direction for future research is towards meta classification as
done in \cite{HS08}, instead of the case by case analysis of this
work. We can also consider variations in treewidth, such as having
different domain elements representing same propositional variable at
different levels in $\macstruc(\macmfo)$. Other variations are modal
circuits instead of modal formulae and generalizations of primal/dual
graphs instead of incidence graphs.

{\bf Acknowledgements.} The author wishes to thank Kamal Lodaya,
Geevarghese Philip and Saket Saurabh for helpful discussions, pointers
to related work and feedback on the draft. The author also thanks
anonymous referees of a previous version of this paper for catching
some subtle errors and suggesting extensions.
\bibliographystyle{plain}
\bibliography{references}

\end{document}